\title{A Lightweight Approach for State Machine Replication}
\author{
\begin{minipage}[t]{.48\linewidth}\centering
Christian Cachin\\\small\texttt{christian.cachin@unibe.ch}
\end{minipage}\hfill
\begin{minipage}[t]{.48\linewidth}\centering
Jinfeng Dou\\\small\texttt{jfdou@mail.upb.de}\vspace*{3mm}
\end{minipage}\\
\begin{minipage}[t]{.48\linewidth}\centering
Christian Scheideler\\\small\texttt{scheideler@upb.de}
\end{minipage}\hfill
\begin{minipage}[t]{.48\linewidth}\centering
Philipp Schneider\thanks{Philipp Schneider was partially supported by a grant from Avalanche, Inc.\ to the University of Bern.}\\
\small\texttt{philipp.schneider@cispa.de}
\end{minipage}
}
\date{}
\theoremstyle{definition}
\newtheorem{theorem}{Theorem}
\newtheorem{lemma}{Lemma}[section]
\newtheorem{definition}[lemma]{Definition}
\newtheorem{corollary}[lemma]{Corollary}
\newcommand{\bigO}{\smash{\ensuremath{O}}}
\newtheorem{claim}{Claim}
\def\E{{\rm E}}
\def\V{{\rm V}}
\begin{document}

\maketitle

\begin{abstract}
We present a lightweight solution for state machine replication with commitment certificates. Specifically, we adapt and analyze a median rule for the stabilizing consensus problem \cite{DoerrGMSS11} to operate in a client-server setting where arbitrary servers may be blocked adaptively based on past system information. We further extend our protocol by compressing information about committed commands, thus keeping the protocol lightweight, while still enabling clients to easily prove that their commands have indeed been committed on the shared state.
Our approach guarantees liveness as long as at most a constant fraction of servers are blocked, ensures safety under any number of blocked servers, and supports fast recovery even after all servers are blocked. In addition to offering near-optimal asymptotic performance in several respects, our method is fully decentralized, unlike other near-optimal solutions that rely on leaders. In particular, our solution is robust against adversaries that target key servers (which captures insider-based denial-of-service attacks), whereas leader-based approaches fail under such a blocking model. 
\end{abstract}

\noindent\textbf{Keywords:} analysis of consensus dynamics, distributed algorithms, state machine replication

\newpage

\tableofcontents

\newpage

\section{Introduction}
In the \emph{state machine replication} (SMR) problem \cite{Schneider90}, a set of servers is required to replicate a state machine in a consistent way by agreeing on an order in which commands from clients are to be committed on it. Formally, a \emph{state machine} $(\mathcal{S},\mathcal{C})$ consists of a \emph{state space} $\mathcal{S}$ and a collection of \emph{commands} $\mathcal{C}$ acting on that state space. Each server maintains a copy of the state machine (simply called a \emph{shared state} in the following). The clients may send their commands to any of the servers. 
The goal is to satisfy the following conditions:
\begin{itemize}
\item Safety: At any point in time, for any pair of correct servers, the sequence of committed commands on one server is a prefix of the sequence on the other.
\item Liveness: Every command that was submitted to a correct server is eventually committed on all correct servers.
\end{itemize}
Safety and liveness imply that if every server executes the commands in the order in which they were committed by that server, the sequence of transitions on the shared state will be the same for each correct server, and any transition at one correct server will eventually be performed by all other correct servers.
An important application of SMR is a system that manages financial transactions.
However, for financial transactions, it is not only important to maintain a shared state (i.e., the account balances), but also records of the transactions. 
In cryptocurrencies, transactions are typically recorded on a large blockchain that is replicated across many servers. For example, Bitcoin’s ledger exceeds 699 GB and is maintained by thousands of full nodes \cite{BBS}. 
A much more space-efficient alternative for servers would be to maintain only the shared state and shift the burden of maintaining a proof that a transaction took place to the originating clients, who are naturally incentivized to do so, to be able to prove successful commitment of their transactions to third parties.

We present a scalable and robust solution to the SMR problem that provides safety and liveness under a blocking adversary and minimizes the amount of information that must be maintained by the servers by offloading the responsibility of certifying the commitment of commands to the clients. Specifically, we require clients to certify commitment of their commands to other clients, \textit{without} using long-term cryptographic signatures, where servers may act only as verifiers that retain a small amount of metadata but not the complete history of committed commands.
This poses several intertwined challenges, especially under failures. We adopt a \emph{blocking-failure} model under an adaptive adversary that has full information with a one-round delay. 
Adaptive blocking failures subsume both crash-stop and crash–recovery behavior since such an adversary may permanently or temporarily block servers.
The core difficulty is that the adversary can potentially isolate groups of servers from each other by alternate blockings, making it hard to maintain safety.

Unlike many contemporary solutions for fault-tolerant SMR, which use leaders that can be easily suppressed by our adaptive adversary, our protocol is leaderless and, to our knowledge, the first to realize SMR via the median-rule consensus dynamics \cite{DoerrGMSS11}. We prove safety and liveness properties of our protocol in scenarios with increasing difficulty, for which we provide a formal framework by defining a sequence of core problems built on top of each other. 
This work focuses on the \emph{theoretical} foundations of SMR under a blocking adversary based on the median rule by providing an analysis of the consensus dynamics in the style of \cite{BecchettiCN20,DoerrGMSS11} and using these insights to obtain a light-weight protocol that is robust to blocking failures.

\subsection{The Model} 

\paragraph*{Client-Server Model} We distinguish between \emph{clients} and \emph{servers}: clients issue commands, while servers maintain a copy of the state machine. We assume a fixed set of servers $\{1, \dots, n\}$. The set of clients may change arbitrarily over time. 
We focus on minimizing communication overhead, latency, and storage, and therefore, any local computations are assumed to be negligible (which is indeed the case in our solution).

\paragraph*{Network Model} Communication between clients and servers is done by point-to-point message passing. While clients can be addressed directly, for servers we assume the weaker gossip communication model (see, e.g., \cite{DBLP:conf/podc/HaeuplerMS18}) where servers can exchange a point-to-point message with anyone but \textit{cannot} choose whom to contact. Messages are exchanged using either a \emph{push} operation, which sends a message to a random server, or a \emph{pull} operation, which asks a random server to reply to the sender, where every server has the same probability to be chosen as the target of the operation. Gossip-based protocols are highly practical due to their inherent load balancing, rapid convergence, simplicity, and robustness under stress or disruption. For instance, they play a key role in Avalanche consensus \cite{AmoresSesarCS24,abs-1906-08936} and anonymous networks like Tor. Because the gossip model is used for the servers, neither servers nor clients need to know the server identifiers or the exact number $n$; having some polynomial estimate of $n$ is sufficient for our protocols to work.

\paragraph*{Synchrony Assumptions}
In \textit{synchronous message passing}, communication takes place in discrete time steps called \emph{rounds}, and any push or pull operation initiated at the beginning of some round is completed before the next round starts. By contrast, \textit{asynchronous message passing} provides no timing guarantees: messages may get arbitrarily delayed. In this work, we assume synchronous message passing for server-to-server communication and asynchronous message passing for interactions between clients and servers. 

\paragraph*{Failure Model}
We assume that the clients cannot impersonate other clients but may otherwise behave in a Byzantine manner. Servers are subject to blocking failures that are adversarially triggered but otherwise correctly follow the given protocol.
The blocking adversary can isolate any server at any point in time, preventing it from sending or receiving messages in a given round. A server does not know whether it is isolated or not, and in case of a failed push or pull request, no information is given on who of the two involved parties caused the failure. This covers real-world phenomena such as denial-of-service attacks or network failures. It also captures aspects of network asynchrony or unreliable links, in the sense that blocked parts of the network experience delays or message loss. Because blocked servers are effectively cut out of the protocol, this model also subsumes the crash-recovery model, with the adversary controlling the timing of crashes and subsequent recovery. 
We consider a \emph{late} adversary that adapts its blocking decisions based on previously observed system states (e.g.\ \cite{DBLP:conf/opodis/AhmadiGKM15,RobinsonSS18}) which allows us to rely on randomness to achieve network-wide outcomes that are outside the adversary's control. Lateness is a natural assumption, as information about the servers first has to be aggregated and evaluated by an adversary in order to target its attack.

\begin{definition}[$\alpha$-late, $\beta$-blocking adversary]
\label{def:alpha_late_beta_blocking}
Let $\alpha \in \mathbb{N}_0$ and $0 \leq \beta \leq 1$ be fixed parameters.
\begin{itemize}
    \item In round $r$, the \emph{$\alpha$-late} adversary only knows the local states (including random choices) of the servers up to the beginning of round $r \! - \! \alpha$. 
    \item In each round, the $\beta$-\emph{blocking} adversary can block up to a $\beta$-fraction of servers. For simplicity, we assume that a server will either be completely blocked or completely unblocked during a round. A blocked server cannot send or receive any messages in that round.
\end{itemize}
\end{definition}

We assume a 1-late blocking adversary ($\alpha=1$): in round $t$ it knows the entire system state only up to the beginning of round $t-1$. This model is incomparable to, and not weaker than, the classical Byzantine model. Byzantine-tolerant protocols typically posit a fixed, sufficiently large correct majority that is always available, whereas here the adversary may adaptively block any subset of servers at any time. In particular, leader-based consensus protocols cannot maintain liveness against such adaptive blocking.
More generally, for an $\alpha$-late adversary with $\alpha \ge 1$, the adversary cannot observe which client requests arrive in the current round, nor the recipients of the servers’ push/pull messages from the previous round, and therefore, it cannot reliably infer the servers’ current states. This one-round opacity is crucial for fast commitment, because with $\Theta(n)$ adaptive crash failures under a \textit{full}-information adversary, any synchronous randomized consensus protocol needs $\tilde{\Theta}(\!\sqrt{n})$ rounds \cite{Bar-JosephB98}.

\paragraph*{Probabilistic Concepts}
We use the standard concept of  \emph{with high probability} (w.h.p.) where the total probability of algorithm failure is upper bounded by $1/n^c$ for \textit{any} constant $c > 0$. This ensures that the likelihood that our algorithm does not satisfy the safety and liveness vanishes for sufficiently large $n$. We further discuss how to deal with the unlikely event of a single failure in Section \ref{sec:extension}. The probabilistic concepts we frequently use are given in Appendix \ref{sec:probabilistic_concepts}.

\subsection{Overview of our Approach} \label{sec:overview}

Our goal is a robust protocol for state-machine replication (SMR) that is lightweight in terms of memory usage, allows signature-free certification of commitment of commands, and is highly scalable in terms of communication overhead and latency. Specific objectives are:
\begin{itemize}
    \item \textbf{Small latency} from receiving a client-issued command by some server to the commitment of the command on all non-blocked servers.
    \item \textbf{Small communication overhead} 
    defined as the additional amount of information the server has to send and receive per command compared to receiving that command once.
    \item \textbf{Small memory requirement per server}, where a server only has to store the shared state, the non-committed commands and a small amount of additional information.
    \item \textbf{Small memory requirement per client}, where a client only has to store small certificates for each of its own committed commands.
    \item \textbf{Liveness} against a 1-late adversary blocking a constant fraction of servers ($\Theta(1)$-blocking).
    \item \textbf{Safety} even against a 1-late adversary potentially blocking all servers (1-blocking).
    \item \textbf{Fast recovery} after any period of time with a 1-late 1-blocking adversary. 
\end{itemize}
Our solution also avoids reliance on cryptographic signatures, which is motivated by the fact that signatures cause computational overhead, might expire, keys can get stolen, or even be broken at some point (if certain cryptographic assumptions turn out to be false). Hence, small commitment certificates cannot be simple signed statements by the server. Our solution only assumes a publicly known collision-resistant hash function, i.e., a function $h$ for which it is computationally hard to find $x \ne y$ with $h(x)=h(y)$.

In the following, we formally describe the problems that we are tackling towards a solution satisfying the criteria above. We develop this solution step-by-step in a sense that a solution of a simpler problem is used for the solution to the next, more complex problem. We start with a simple distributed algorithm for the \emph{stabilizing consensus problem}, which adapts the classic consensus problem to capture blocked servers. We call servers \textit{useful} if they are currently non-blocked and store a value.

\begin{definition}[Stabilizing Consensus Problem]
    \label{def:stabilizing_consensus}
    Every server holds an arbitrary initial value.
    A protocol solves the stabilizing consensus problem if it satisfies:
    \begin{itemize}
        \item {\bf Agreement}: There exists a round after which all useful servers hold the same value.
        \item {\bf Availability}: In each round, at least a constant fraction of servers is useful.
        \item {\bf Validity}: If a server stores some value $x$ at the end of a round, then some server must have stored $x$ at the beginning of that round.
    \end{itemize}
\end{definition}

Note that a protocol ensuring agreement should eliminate values from blocked servers to prevent the survival of outdated ones. Our agreement requirement for \emph{useful} servers is analogous to Byzantine agreement (BA), where only \emph{correct} servers must agree. The key difference is that the set of useful servers changes over time, due to the adaptive adversary and the randomness of gossip. A main challenge, therefore, is to ensure that a constant fraction of servers remains useful (available) at all times despite this shifting membership. We address this problem in Section \ref{sec:median} by combining a simple median rule in \cite{DoerrGMSS11} with undecided-state dynamics (see, e.g., \cite{AngluinAE08}). In this median rule, each server in each round requests the value from a random, constant-size subset of other servers, adopts the median value if enough values are received, and otherwise deletes its value, thereby becoming useless (i.e., not useful).

We then adapt this algorithm to solve the \emph{stabilizing SMR problem} in Section~\ref{sec:xmedian}, whose goal is to reach agreement on an ordered sequence of client-issued commands called \emph{log}, even under an adversary that may block servers. Here, we say a command is \emph{injected} as soon as it is first received from the client by a non-blocked server and assume that every command is unique (so that a client command does not appear twice
in a log by mistake). Analogously to the single value case, we call non-blocked servers with a non-empty log \textit{useful}.

\begin{definition}[Stabilizing SMR Problem]
    Each server starts with a log containing a seed command (so all servers are initially useful). A protocol solves the problem if it satisfies:
    \label{def:stabilizing_smr}
    \begin{itemize}
        \item {\bf Agreement}: For every injected command $x$, there exists a round after which all useful servers contain $x$ at the same position in their logs.
        \item {\bf Availability}: At least a constant fraction of servers are useful.
        \item {\bf Validity}: If a server has some command $x$ in its log at the end of a round, then it was either injected by a client or some server had $x$ in its log at the beginning of that round.
    \end{itemize}
\end{definition}

We show in Section \ref{sec:xmedian} that the median rule can be adapted to work in the context of SMR by using the following idea: In each round, each server requests the logs from a random, constant-size subset of other servers. If enough logs are received, it adopts the median log (according to lexicographic order), appending all previously seen commands that are not contained in the median log to the end, and otherwise becomes useless (deletes its log). The \emph{latency} of an algorithm that solves this problem is the number of rounds from the injection until the agreement property is satisfied. 

Our solution in Section \ref{sec:xmedian} can already be employed to implement state machine replication with small latency under a blocking adversary. However, this has the obvious drawback that entire logs (that can grow arbitrarily large) must be memorized and exchanged. Therefore, as next step in our iteration, we devise locally checkable rules for servers to determine when commands can be safely \emph{committed} in Section 
\ref{sec:cmedian} and Section \ref{sec:proofs}. Committed commands can be executed on the local state (by order of time of commitment) and subsequently forgotten, with the guarantee that any other non-blocked server will commit the same commands in the same order. We encapsulate this in the ``commitment problem'', which mirrors the standard SMR problem to our setting:

\begin{definition}[Commitment Problem]
    \label{def:commitment_problem}
    A protocol solves the problem if it satisfies:
    \phantom{a}
    \begin{itemize}
        \item {\bf Strong Safety}: Any two useful servers have the same sequence of committed commands.
        \item {\bf Availability}: In each round, at least a constant fraction of servers is useful.
        \item {\bf Liveness}: Every injected command is eventually committed by all useful servers.
    \end{itemize}
\end{definition}

When satisfying the criteria above, useful servers can execute commands on the shared state whenever they consider them to be committed and forget about them afterwards. This way, useful servers only need to maintain a shared state and a log of those commands that are not yet committed, leading to a significant reduction in the size of the messages as well as the storage needed by the servers. Servers that have been useless for a long period of time (e.g., due to frequent adversary blockings), will eventually be updated with the shared state by useful servers when they are able to successfully contact them again. 

The clear drawback of deleting committed commands is that it erases the history of commands (e.g., transactions in a blockchain). This is the next problem we address: preserving the history without the servers needing to store it. Our solution, given in Section \ref{sec:proofs}, is to offload the main burden of storing committed commands to the clients that issued them together with a small certificate that allows clients to prove to any third party that their command is indeed part of the committed history, whereas the servers only store a small amount of metadata to verify certificates. Note that our approach makes the client completely responsible for storing its issued commands and the associated certificate. However, loss of this data by one client never affects certificates of others.

Finally, we address in Section \ref{sec:recovery} the problem of recovering from any number of blocked servers over any period of time. A central challenge that arises from such an adversary is that the number of useful servers might be driven to zero. In our previous approach, we rely on responses from useful servers in order to transition previously useless servers to useful again, which fails in this setting. Worse still, an adversary may keep the number of useful servers at a small but changing minority, which poses the risk of disagreement on the sequence of committed commands.
To address this, we make use of two additional protocols on top of our solution for the Commitment Problem: One that maintains the most recent shared state and non-committed commands (checkpoint protocol) and one that decides whether a reset to a prior checkpoint is required (reset protocol). For the first one, the servers take checkpoints of the shared state and disseminate these to agree on a most recent one. For the second one, the servers use a consensus mechanism to decide whether to reset the system or not. Formally, our protocol solves the following problem.

\begin{definition}[Recovery Problem]
    \label{def:recovery_problem}
    A protocol solves the recovery problem if it satisfies:
    \begin{itemize}
        \item {\bf Monotonicity}: 
        Under any blocking adversary, for every server $i$ and all $t<t'$, the committed sequence at time $t$ is a prefix of the committed sequence at time $t'$.
        \item {\bf Recovery}: Once the blocking adversary is below its intended threshold, eventually strong safety, availability and liveness of Definition \ref{def:commitment_problem} hold again.   
    \end{itemize}
\end{definition}

Monotonicity ensures that no server starts an ``alternate'' commitment sequence; a situation akin to double spending in blockchains. Recovery guarantees that if any server has committed a command, it will eventually appear in every useful server's committed sequence once the adversary reverts to a bounded attack.

\subparagraph*{Our Contributions}
Technically, we combine the median rule of \cite{DoerrGMSS11} with undecided-state dynamics (e.g., \cite{AngluinAE08}) and extend it to SMR with lightweight servers. To our knowledge, this is the first use of median-rule dynamics for SMR and the first analysis under a late blocking adversary. Prior work largely targets Byzantine faults and has either higher overheads or is leader-based and thus cannot ensure liveness against our adaptive blocking adversary.

Our main result is a protocol that meets all performance criteria listed at the beginning of Section~\ref{sec:overview}. Concretely, it achieves logarithmic (in $n$) communication per client command and logarithmic latency. Further, server space is logarithmic in the length of the committed history plus logarithmic space per client. Clients only store logarithmic information for each command they issue. Our protocol guarantees {liveness} against any 1-late, $1/10$-blocking adversary and {strong safety} (Definition~\ref{def:commitment_problem}) against any 1-late adversary, with no bound on the number of blocked servers. Finally, it provides logarithmic-time \emph{recovery} from any number of blocked servers under a 1-late adversary (Definition~\ref{def:recovery_problem}). Our protocol compares favorably with other gossip-based blockchain protocols, in particular Avalanche, which achieves the same performance but can only provide liveness against $O(\!\sqrt n)$ blocking failures (whereas we tolerate $\Omega(n)$) and does not consider the ramifications of keeping servers lightweight.

Further, we show that servers do not need to store the full command history, as is common in blockchains. Instead, commitment certificates are offloaded to the issuing clients. Servers keep only the shared state and the pool of uncommitted commands. Clients need certificate updates only when they submit a new command, and can remain offline otherwise. Our approach has additional benefits to privacy: after a command is committed, only the client will retain command-specific information.
Technically, the recovery protocol is the most delicate component. Our techniques require new insights into the median rule under blockings and two tightly coupled subprotocols, enabling rapid (logarithmic-time) recovery after arbitrary blocking attacks.

\subsection{Related Work}

\paragraph{Leader and DAG Based Consensus Protocols}
Most practical Byzantine fault-tolerant (BFT) consensus protocols follow either a leader-based or a DAG-based approach. Examples of leader-based protocols are PBFT \cite{DBLP:conf/osdi/CastroL99/PBFT}, HotStuff \cite{YinMRGA19}, Autobahn \cite{DBLP:conf/sosp/GiridharanSAAC24}, and ProBFT \cite{DBLP:conf/podc/AvelasHADB24}, and examples of DAG-based protocols are DAG-Rider \cite{DBLP:conf/podc/KeidarKNS21}, Narwhal \cite{DBLP:conf/eurosys/DanezisKSS22}, Bullshark \cite{SpiegelmanGSK22}, and Shoal++ \cite{Shoal++}.
Most of the BFT protocols can tolerate large-scale Byzantine behavior in the network. However, with the advent of trusted execution environments (TEEs), such strong fault tolerance might not be strictly necessary: adversarial peers can block a trusted device or obstruct message delivery, but cannot tamper with its internal execution. This has led to efficient consensus protocols based on trusted components \cite{DBLP:conf/sosp/ChunMSK07,DBLP:conf/nsdi/LevinDLM09,DBLP:journals/tc/VeroneseCBLV13}, even before the blockchain era.

All of the works above assume that the set of participants under adversary control is static. Although we consider only attacks that block participants, the blocking can change \textit{adaptively} from round to round. This rules out leader-based approaches since such an adversary can simply block a leader once it has been elected with a
denial-of-service (DoS) attack.
Heavily concentrating communication on a single participant also leads to imbalances in communication load and raises fairness concerns. Although rotating the leader over time can address fairness, our protocol avoids leaders altogether by being fully decentralized. In addition, while many state-of-the-art protocols rely on batching, coding, or threshold signatures to achieve near-optimal message complexity, our approach achieves near-optimal performance \emph{without} such mechanisms.

The primary advantage of DAG-based protocols is liveness under full asynchrony, whereas our protocol offers liveness only under partial synchrony, when enough participants are synchronous. However, DAG-based approaches might make very slow progress: a classical lower bound for randomized consensus in the asynchronous model shows that, for any integer $k$, an $f$-resilient algorithm over $n$ processes fails to terminate within $k(n-f)$ steps with probability at least $1/c^k$ (for some constant $c$) \cite{AttiyaH10}. Thus, asynchrony must persist for a long time before one can benefit from \textit{any} protocol that offers liveness under asynchrony, whereas our protocol recovers quickly once a period of arbitrary asynchrony ends.

\paragraph{Gossip Based Consensus Protocols}
In contrast to leader-based or DAG-based protocols, our approach relies on the gossip communication model. Gossip-based protocols 
have been studied extensively for various tasks, including information dissemination \cite{DBLP:conf/focs/KarpSSV00}, aggregation \cite{DBLP:conf/podc/HaeuplerMS18}, network coding \cite{DBLP:conf/stoc/Haeupler11}, and consensus \cite{DoerrGMSS11}. See \cite{BecchettiCN20} for a survey on gossip-based protocols. Due to their fast convergence and resilience under stress and disruptions, they have also been employed in various
blockchain solutions, such as Bitcoin \cite{DBLP:conf/icdcn/CrucianiP23} and Tendermint \cite{DBLP:journals/corr/abs-1807-04938}, for network maintenance and information dissemination. Most closely aligned with our work are anti-entropy protocols \cite{BirmanHOXBM99,PetersenSTTD97,RenesseDGT08}, including the Snow consensus protocol used in Avalanche \cite{abs-1906-08936}. While many of these protocols have been analyzed under message loss and some under Byzantine behavior (e.g., \cite{AmoresSesarCS24,abs-1906-08936}), their efficiency and robustness are often evaluated primarily through simulations. In contrast, we provide a rigorous analysis, building on works such as \cite{DoerrGMSS11, RobinsonSS18}. For the gossip model, it is known that under $\Theta(n)$ adaptive crash failures by a full-information adversary, any synchronous randomized consensus protocol needs $\tilde{\Theta}(\sqrt{n})$ rounds \cite{Bar-JosephB98}. We bypass this lower bound by withholding new information from the adversary for one round.

\paragraph{Certification of Client Commands} Our approach to generate client certificates is based on a Merkle hash forest, which builds on the Merkle Mountain Range technique used in Ethereum and other blockchains \cite{PeterTodd12}. While methods to store a compressed representation of the Merkle hash forest on servers have been used before, the novel challenge we address is ensuring that clients only need to receive the necessary hashing information once, at the moment the commitment of a command is acknowledged by a server, yet are still able to prove to any useful server, at any later time, that their command was committed, even though the servers just store a compressed version of the Merkle hash forest.

\section{Consensus with a Median Rule} \label{sec:median}

We start with an analysis of our proposed mechanism for stabilizing consensus that combines the median rule in \cite{DoerrGMSS11} with undecided-state dynamics (e.g., \cite{AngluinAE08}). The specific algorithm of the $(k,l)$-median rule can be found in Algorithm $\ref{alg:basic}$. Roughly, each server requests $k$ values from random peers, and if it receives at least $\ell$ it adopts the median value as its own, else it adopts $\bot$ (corresponding to the undecided state). Since blocked servers will never receive any replies, their value will be equal to $\bot$ at the end of the round. The validity property (Definition \ref{def:stabilizing_consensus}) is trivially satisfied by the $(k,\ell)$-median rule. Thus, it remains to show agreement and availability. Throughout the paper, we will focus on the specific case that $k=6$ and $\ell=3$ since that turns out to be the most suitable choice of parameters for our goals.

\begin{algorithm}
\caption{The $(k,l)$-median rule}
\label{alg:basic}
\textbf{Preconditions:}
Let $k, \ell >1$ with $k \ge \ell$ and $\ell$ odd. We assume that initially every server $i$ stores an arbitrary value $x_i \in K$, where $K$ is a discrete space with a total order. 

\smallskip

\textbf{Each server $i$ does the following each round:}
\begin{itemize}
\item send $k$ value requests to servers chosen uniformly and independently at random 
\item if $x_i \not=\bot$ then for any value request received from some server $j$, send $x_i$ back to $j$ 
\item if at least $\ell$ replies are received, choose a subset of $\ell$ of these replies uniformly at random and set $x_i$ to the median of the values sent by these replies 
\item if less than $\ell$ replies are received, set $x_i:=\bot$
\end{itemize}
\end{algorithm}    

\subsection{Availability}
We call a server \emph{useful} in round $t$ if it stores a value at the beginning of $t$ and is non-blocked during round $t$, and otherwise it is called \emph{useless}. By definition, availability only holds as long as a constant fraction of servers is useful. We will show three results that provide important insights into availability when using the $(6,3)$-median rule. The proofs are relatively technical. Here, we give an intuitive understanding of the claims and the way we use them later on.

The first result shows at which point a so-called \emph{spiral-of-death} occurs, which describes the event where the number of useful servers (almost inevitably) goes to 0. In Algorithm \ref{alg:basic} a server becomes useless if it does not obtain enough responses from others. However, this means there is a threshold for the number of useful servers under which it is very unlikely for servers to sample enough useful others, so the servers are trapped in a negative feedback loop where ever fewer useful servers cause even more to become useless. Below the given threshold the number of useful server decays with double exponential speed in the number of rounds (i.e, is completed after $O(\log \log n)$ rounds) even if no server is actually blocked.

\begin{lemma} \label{lem:spiral}
If the initial fraction of useful servers is at most $1/3-\varepsilon$ for any constant $\varepsilon>0$ then even if no server is blocked, within $O(\log \log n)$ rounds no server will be useful anymore, w.h.p.
\end{lemma}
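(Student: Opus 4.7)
Let $Y_t$ denote the number of useful servers at the start of round $t$ and set $p_t := Y_t/n$. Since no server is blocked, only useful servers reply to requests, and a fixed server is useful at the start of round $t+1$ iff at least $3$ of its $6$ independent uniform random samples hit useful servers. Each sample hits a useful server with probability $p_t$, so the probability a given server is useful in round $t+1$ equals $f(p_t) := \Pr[\mathrm{Bin}(6, p_t) \ge 3]$, and $\E[Y_{t+1}\mid Y_t] = n\,f(p_t)$. A Chernoff bound will be used to concentrate $Y_{t+1}$ around its mean whenever that mean is $\Omega(\log n)$.

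The analytic heart of the argument is two properties of $f$. First, direct computation yields $f(1/3) = 233/729 < 1/3$; combined with continuity of $g(p) := f(p)/p$ on $[0,1/3]$ (with $g(0):=0$, since $f(p)=\Theta(p^3)$ near $0$) and compactness, this gives a constant $\delta = \delta(\varepsilon) > 0$ such that $f(p) \le (1-\delta)\,p$ for every $p\in[0,\,1/3-\varepsilon]$. Second, the exact expansion $f(p) = 20\,p^3(1-p)^3 + 15\,p^4(1-p)^2 + 6\,p^5(1-p) + p^6$ gives a constant $C$ with $f(p) \le C p^3$ on $[0,\,1/2]$. The first inequality will drive a geometric decay of $p_t$, while the second converts that into cubic decay once $p_t$ becomes small.

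Combining these with Chernoff, I split the evolution of $p_t$ into three phases. In the \emph{linear phase}, as long as $p_t \in [p^\ast,\,1/3-\varepsilon]$ for a small constant $p^\ast$, Chernoff and the first inequality give $Y_{t+1} \le (1-\delta/2)\,Y_t$ w.h.p., driving $p_t$ below $p^\ast$ within $O(1/\delta) = O(1)$ rounds. In the \emph{cubic phase}, $Y_{t+1} \le 2 C n\,p_t^3$ w.h.p.\ so long as $n p_t^3 = \Omega(\log n)$; setting $a_t := \log(1/p_t)$, this becomes the recursion $a_{t+1} \ge 3 a_t - O(1)$, whose solution $a_t$ grows doubly exponentially and therefore pushes $p_t$ below $n^{-\alpha}$ for any desired constant $\alpha$ within $O(\log\log n)$ rounds. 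In the \emph{absorbing phase}, once $p_t \le n^{-\alpha}$ for $\alpha$ large enough, the union bound gives $\Pr[Y_{t+1}\ge 1] \le n\,f(p_t) \le C\,n^{1-3\alpha} \le 1/n^c$, so $Y_{t+1} = 0$ w.h.p.; and because useless servers never reply to requests, $0$ is absorbing and no server ever becomes useful again. The main technical obstacle is the hand-over from the Chernoff regime (which needs $\E[Y_{t+1}] = \Omega(\log n)$) to the absorbing-state argument, together with keeping failure probabilities summable: a careful choice of $p^\ast$ and of the Chernoff constants ensures that each round contributes at most $1/n^{c+1}$ to the failure probability, so that a union bound over the $O(\log\log n)$ rounds still yields the claimed w.h.p.\ statement.
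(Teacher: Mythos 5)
Your overall strategy is essentially the paper's: the mean-field recursion $\E[Y_{t+1}\mid Y_t]=n\,\Pr[\mathrm{Bin}(6,p_t)\ge 3]$, Chernoff concentration, doubly exponential collapse, and an absorbing endgame (the paper works with the single bound $f(x)\le 3x^2$, which contracts on $[0,1/3-\varepsilon]$ and squares in one recursion, rather than your linear/cubic split). However, there is a genuine gap precisely at the hand-over you flag, and the fix you propose does not close it. Your cubic phase keeps concentration only while $n p_t^3=\Omega(\log n)$, i.e.\ only while $p_t\ge \Theta\bigl((\log n/n)^{1/3}\bigr)$, so it can never push $p_t$ below $n^{-1/3}$; the claim that it drives $p_t$ below $n^{-\alpha}$ ``for any desired constant $\alpha$'' contradicts the constraint stated in the same sentence. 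Meanwhile the absorbing phase needs $C\,n^{1-3\alpha}\le n^{-c}$, i.e.\ $\alpha\ge (1+c)/3>1/3$, so its precondition is never established, and no choice of the constant $p^\ast$ or of Chernoff constants can help, because the obstruction sits at the polynomial scale $n^{-1/3}$, not at a constant scale. The missing step is an intermediate regime: once $n f(p_t)=O(\log n)$, apply an upper-tail Chernoff bound against a fixed threshold $\Theta(\log n)$ (Lemma~\ref{chernoff1} allows $\mu\ge\E[X]$, so this is valid even when the mean is smaller) to get $Y_{t+1}=O(\log n)$ w.h.p.; then $p_{t+1}=O(\log n/n)$ and the first-moment bound $\Pr[Y_{t+2}\ge 1]\le n f(p_{t+1})=O(\log^3 n/n^2)$ applies; and to reach failure probability $n^{-c}$ for an arbitrary constant $c$ you iterate this first-moment step for a constant number of rounds, using that $0$ is absorbing and that the conditional bound $Y=O(\log n)$ persists, so each extra round multiplies the bound by another $O(\log^3 n/n^2)$ factor. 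This is exactly the paper's endgame (``$X_{t+1}=O(\ln^2 n)$ w.h.p., and after a constant number of additional rounds, no useful server remains'').

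A smaller point: the linear-phase inequality $f(p)\le(1-\delta)p$ on $[0,1/3-\varepsilon]$ does not follow from $f(1/3)=233/729<1/3$ together with continuity and compactness alone; a single endpoint evaluation does not bound $g(p)=f(p)/p$ on the whole interval. The fact is true, but you need either to verify that $g$ is increasing on $[0,1/3]$ (so $g\le 233/243$ there), or simply use the paper's bound $f(p)\le 3p^2$, which gives $f(p)\le(1-3\varepsilon)p$ on $[0,1/3-\varepsilon]$ immediately and, since it also squares, would let you dispense with the separate cubic phase altogether. Finally, a terminological slip: with $a_t=\log(1/p_t)$ the recursion $a_{t+1}\ge 3a_t-O(1)$ makes $a_t$ grow exponentially (like $3^t$); it is $1/p_t$ that grows doubly exponentially. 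Your quantitative conclusion of $O(\log\log n)$ rounds is unaffected by this.
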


\begin{proof}
For every server $i$ and round $t$ let the binary random variable $X_{i,t}$ be 1 if and only if $i$ is useful at round $t$. Let $X_t = \sum_{i=1}^n X_{i,t}$ and $x_t = X_t/n$. Consider some fixed round $t$ with $X_t \le (1/3-\varepsilon)n$ and suppose that no server will be blocked in round $t$ and $t+1$. Then, for any server $i$,
\begin{align*}
    \Pr[X_{i,t+1}=1] & =  1 - [(1-x_t)^6 + \binom{6}{1}x_t(1-x_t)^5+\binom{6}{2}x_t^2(1-x_t)^4]\\
    & = -10x_t^6+36x_t^5-45x_t^4+20x_t^3
\end{align*}
Let $f(x)=-10x^6+36x^5-45x^4+20x^3$. Then we get:
\begin{align*}
E[x_{t+1}] &\ = \frac{1}{n} \sum_{i=1}^{n} E[X_{i,t+1}]
              = \frac{1}{n} \sum_{i=1}^{n} \Pr[X_{i,t+1} = 1] \\
           &\ = \frac{1}{n} \sum_{i=1}^{n} f(x_t)
            = f(x_t) \\
\end{align*}
Set $g(x)$ = $f(x)-3x^2$. As can be seen from Figure~\ref{fig:g(x)}, $g(x) \le 0$ for all $x \in [0,1]$, which implies that $E[x_{t+1}] = f(x_t) \le 3x_t^2$ and therefore, $\E[X_{t+1}] \le 3 X_t^2/n$ for all $x_t \in [0,1]$. Since the probability that a server is useful in round $t+1$ is independent of the other servers, we can use the Chernoff bounds to show that for $\delta_t =  \sqrt{ c n \ln n}/X_t$ for any constant $c>0$ so that $\delta \le 1$,
\begin{align*}
  \Pr[X_{t+1} \geq (1+\delta_t) 3X_t^2/n] & \le e^{-( \sqrt {c n \ln n}/X_t)^2 \cdot 3X_t^2/(3n)} \\
  & = e^{-c \ln n} = 1/n^c
\end{align*}
Since $\delta_t \le \delta := 1/\sqrt{\ln n}$ as long as $X_t \ge  \sqrt{c n} \ln n$, it holds that $X_{t+1} \le (1+\delta) 3X_t^2/n$ in this case, w.h.p. For simplicity, assume that $X_{t+1} \le (1+\delta) 3X_t^2/n$ is guaranteed. Given any $x_0 \in [0,1]$, we want to show by induction on round $t\ge 1$ that $x_{t} \leq ((1+\delta)3x_{0})^{2^t}/[(1+\delta)3]$. The base case $x_{1} \leq ((1+\delta)3x_{0})^{2}/[(1+\delta)3]$ directly follows from the fact that $X_{t+1} \le (1+\delta) 3X_t^2/n$. Thus, assume that $x_{t-1} \leq ((1+\delta)3x_{0})^{2^{t-1}}/[(1+\delta)3]$. Then it follows that $x_{t} \leq (1+\delta)3x_{t-1}^2 \leq (1+\delta)3[((1+\delta)3x_{0})^{2^{t-1}}/((1+\delta)3)]^2 = ((1+\delta)3x_{0})^{2^t}/[(1+\delta)3]$. If $x_0 \le 1/3-\varepsilon$ for some constant $\varepsilon>0$ then $x_t \le ((1+\delta)(1-3\varepsilon))^{2^{t}}/[(1+\delta)3]$, which implies that in just $t=O(\log \log n)$ rounds, $X_t \le  \sqrt{c n} \ln n$. From that point on it is easy to show via Chernoff bounds that $X_{t+1} = O(\ln^2 n)$, w.h.p., and after a constant number of additional rounds, no useful server remains w.h.p.\qedhere

\begin{figure}
	\centering
    		\begin{minipage}[b]{0.5\linewidth}
   		 	\includegraphics[width=1\textwidth]{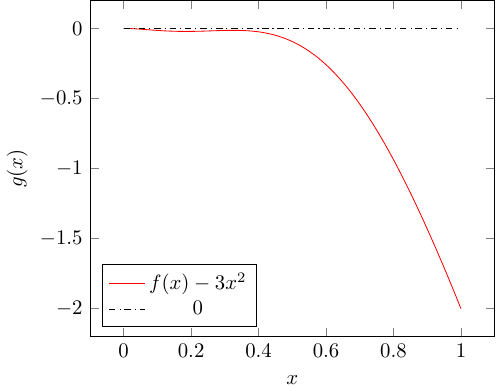}
    		\end{minipage}
	\caption{$g(x)=-10x^6+36x^5-45x^4+20x^3-3x^2$}
	\label{fig:g(x)}
\end{figure}
\end{proof}

On the positive side, there is a minimum fraction for the number of useful servers, where that fraction stays stable, even if the adversary blocks a certain fraction of servers. This will define our ``operating range'' where our system guarantees liveness.
Note that the specific thresholds of useful servers (3/4) and adversary ratio (1/10) are subject to our choice of parameters $(6,3)$ in the median rule. In the interest of keeping our (already extensive) analysis manageable, we settle on this set of parameters, as this turns out to have suitable properties for our purposes.

\begin{lemma} \label{lem:available}
If the initial fraction of useful servers is at least $1/2$ then for any 1-late $1/10$-blocking adversary, the fraction of useful servers monotonically converges to at least $3/4$ in just $O(\log n)$ rounds, w.h.p., and once it is at least $3/4$, it will be at least $3/4$, w.h.p., for any number of rounds that is polynomial in $n$. 
\end{lemma}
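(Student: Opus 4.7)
The plan mirrors the structure of Lemma~\ref{lem:spiral}, tracking the \emph{growth} of $x_t = X_t/n$ rather than its decay. For any server $i$ non-blocked in round $t$, the probability that $i$ finishes round $t$ holding a value is exactly the function $f(x_t) = -10 x_t^6 + 36 x_t^5 - 45 x_t^4 + 20 x_t^3$ from the previous lemma, since the six sampled peers are useful independently with probability $x_t$ and $i$ updates iff at least three replies arrive. The new ingredient is a careful accounting of the $1/10$-blocking adversary, which crucially must exploit the $1$-late assumption.

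The key observation is that the adversary's choice of $B_{t+1}$ depends only on the system state up to the beginning of round $t$, so it is independent of the random peer samples drawn during round $t$. Let $V_{t+1}$ denote the set of servers that finish round $t$ holding a value; then $V_{t+1} \subseteq [n] \setminus B_t$, and conditional on the pre-round-$t$ history, the events $\{i \in V_{t+1}\}_{i \in [n] \setminus B_t}$ are independent Bernoullis with mean $f(x_t)$. Since $B_{t+1}$ is \emph{fixed} under this conditioning, a standard Chernoff bound yields
\[
|V_{t+1}| \ge (1-\beta_t)\, n\, f(x_t)\,(1 - o(1)), \qquad |V_{t+1} \cap B_{t+1}| \le \beta_{t+1}\, n\, f(x_t)\,(1 + o(1))
\]
w.h.p. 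Using $X_{t+1} = |V_{t+1} \setminus B_{t+1}|$ and $\beta_t, \beta_{t+1} \le 1/10$, this delivers the recursion $x_{t+1} \ge h(x_t) - o(1)$ with $h(x) := \tfrac{4}{5} f(x)$.

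It then suffices to establish two properties of $h$ on $[1/2, 1]$. Using the factored derivative $f'(x) = 60 x^2 (1-x)^3 \ge 0$, $h$ is nondecreasing, and direct computation gives $h(1/2) = 21/40 > 1/2$ and $h(3/4) = \tfrac{4}{5} \cdot 3942/4096 > 3/4$. An elementary endpoint analysis pins the minimum of $h(x) - x$ on $[1/2,3/4]$ at the endpoints (the single interior critical point of $h(x)-x$ is a maximum, since $h'(1/2) = 1.5 > 1 > 27/64 = h'(3/4)$ and $h''$ has only one sign change on $[0,1]$), yielding a constant gap $\gamma \ge 1/50$. Starting from $x_{t_0} \ge 1/2$, the recursion therefore pushes $x_{t}$ up by at least $\gamma/2$ per round w.h.p.\ until it exceeds $3/4$, which takes $O(1)$ and hence $O(\log n)$ rounds. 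Once $x_t \ge 3/4$, monotonicity of $h$ gives $h(x_t) \ge h(3/4) > 3/4$ with constant slack, so a union bound over any polynomial number of subsequent rounds preserves $x_t \ge 3/4$ w.h.p.

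The main obstacle is the second step. A naive worst-case bound that allows the adversary to surgically remove $\beta n$ servers from $V_{t+1}$ replaces $h$ by $\tfrac{9}{10} f(x) - \tfrac{1}{10}$, which falls below $1/2$ at $x = 1/2$ and collapses the argument entirely. Converting the $1$-late opacity into genuine independence between $B_{t+1}$ and the round-$t$ randomness is therefore essential. A secondary technicality is calibrating the Chernoff slack so that the per-round $o(1)$ terms remain below $\gamma$ uniformly across the polynomial window over which the invariant must be preserved.
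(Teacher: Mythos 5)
Your proposal is correct and follows essentially the same route as the paper: it exploits 1-lateness to decouple the round-$(t{+}1)$ blocking set from the round-$t$ randomness, derives the drift bound $x_{t+1}\gtrsim \tfrac45 f(x_t)$ with $f(x)=-10x^6+36x^5-45x^4+20x^3$, and uses Chernoff plus a union bound to push the useful fraction from $1/2$ above $3/4$ and keep it there (the paper states the drift multiplicatively via $\tfrac45 f(x)\ge(1+\tfrac1{40})x$ on $[1/2,3/4]$, you state it additively — an immaterial difference). One cosmetic nit: the endpoint gap at $x=3/4$ is $\tfrac45 f(3/4)-\tfrac34 = 51/2560 < 1/50$, so your claimed $\gamma\ge 1/50$ should be, say, $\gamma\ge 1/51$; any positive constant suffices, so the argument is unaffected.
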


\begin{proof}
Let $X_{i,t}$, $X_t$, $x_t$ and $f(x)$ be defined as in the proof above. Note that $X_{i,t+1}=1$ if and only if server $i$ is not blocked in rounds $t$ and $t+1$ and it successfully applies the $(6,3)$-median rule in round $t$, which has a probability of $f(x_t)$. 

For any fixed non-blocked server $i$ in round $t$ let $S$ be the event that $i$ is successful in round $t$, i.e.,$x_i \not= \bot$ at the end of round $t$, and let $B$ be the event that $i$ will be blocked in round $t+1$. Since $\Pr[S]$ does not depend on the state of $i$ and every server $i$ performs the same random experiment to determine the new $x_i$, $\Pr[S]$ is the same for every non-blocked server. Moreover, because the adversary is 1-late, i.e., it does not know the random choices of the servers in round $t$ when deciding which servers to block in round $t+1$,
\[ 
  \Pr[S \mid B] = \Pr[S \mid \bar{B}] = \Pr[S] .
\]
Since, under a $1/10$-blocking adversary, the number of servers that are neither blocked in round $t$ nor in round $t+1$ is at least $4n/5$, the expected number of useful servers in round $t+1$ is at least $(4/5)n \cdot \Pr[S] = (4/5)f(x_t) \cdot n$. Set $g'(x) = (4/5)f(x) -(1+1/40)x$. As can be seen from Figure~\ref{fig:g'(x)}, $g'(x) \ge 0$ for all $x \in [1/2,3/4]$, which implies that $(4/5)f(x) \ge (1+1/40)x$ for all $x \in [1/2,3/4]$. Since the probability of a server to be successful is independent of the other servers, we can use Chernoff bounds to prove that for any $X_t \in [n/2,3n/4]$, $X_{t+1} \ge (1+\varepsilon)X_t$ for some constant $\varepsilon>0$, w.h.p. Once the fraction of useful servers is at least $3/4$, it will be at least $3/4$ in the next round, w.h.p., by a simple dominance argument (assume as a worst case that it is exactly $3/4$), which completes the proof.\qedhere

\begin{figure}
	\centering
    		\begin{minipage}[b]{0.5\linewidth}
   		 	\includegraphics[width=1\textwidth]{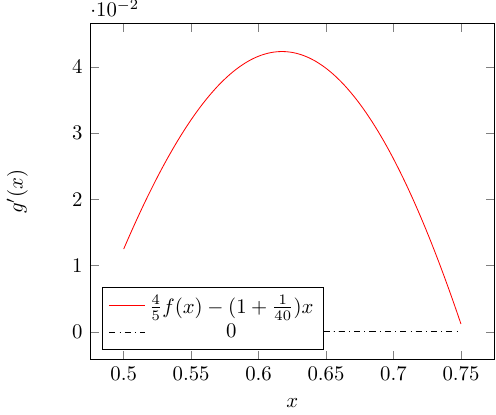}
    		\end{minipage}
	\caption{$g'(x)=-8x^6+ \frac{144}{5}x^5-36x^4+16x^3-\frac{41}{40}x$}
	\label{fig:g'(x)}
\end{figure}

\end{proof}

Another result we want to highlight shows that the $(6,3)$-median rule preserves agreement under network partitions because for any isolated group of at most $7n/10$ servers, their values will all be $\bot$ within $O(\log n)$ rounds, w.h.p. This will be very useful later, when we show how to recover from arbitrary blocking attacks. Specifically, we exploit that the adversary can only choose between a perpetual death spiral (if it keeps blocking a fixed set of $3n/10$ servers thus essentially halting commitment of commands entirely) or allowing committed commands to be spread among the useful servers, thus ensuring their survival.

\begin{lemma} \label{lem:permanently_blocking}
If at least a $\tfrac{3}{10}$-fraction of servers is permanently blocked, then irrespective of the number of useful servers, within $O(\log n)$ rounds no server is useful anymore, w.h.p.
\end{lemma}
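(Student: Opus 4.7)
The plan is to show that with at least $3n/10$ servers permanently blocked, the number $X_t$ of useful servers shrinks multiplicatively in each round, so that after $O(\log n)$ rounds it falls into the regime of Lemma~\ref{lem:spiral}, which then finishes off the remainder in $O(\log \log n)$ further rounds.

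Reusing the notation $x_t = X_t/n$ and $f(x) = -10x^6 + 36x^5 - 45x^4 + 20x^3$ from the proof of Lemma~\ref{lem:spiral}, a non-blocked server $i$ is still useful at the end of round $t$ exactly when at least three of its six random queries hit one of the $X_t$ useful servers, an event of probability $f(x_t)$. Since at most $7n/10$ servers are non-blocked, this gives
\[
\E[X_{t+1}] \;\le\; \tfrac{7n}{10}\, f(x_t).
\]
The key analytic step is to verify that there exists a constant $c < 1$ with $(7/10) f(x) \le c\, x$ for all $x \in [0, 7/10]$, i.e.\ that the polynomial $h(x) = (7/10) f(x) - c\, x$ is non-positive on $[0, 7/10]$ (in the spirit of the plots in Figures~\ref{fig:g(x)} and~\ref{fig:g'(x)}). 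A numerical check shows the maximum of $(7/10) f(x)/x$ on this interval is attained near $x = 3/5$ with value about $0.96$, so any $c \in (0.96, 1)$ works; the constraint $x_t \le 7/10$ is always in force because blocked servers are automatically useless.

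From this in-expectation contraction, a Chernoff-bound iteration identical to the one in the proof of Lemma~\ref{lem:spiral} yields $X_{t+1} \le (1 + o(1))\, c\, X_t$ w.h.p., as long as $X_t = \Omega(\sqrt{n}\,\log n)$. A union bound over $O(\log n)$ rounds then pushes $x_t$ strictly below $1/3 - \varepsilon$, and from there Lemma~\ref{lem:spiral} applies essentially verbatim (the permanent blockings only tighten the spiral, since they keep the $7n/10$ cap in force), driving $X_t$ to $0$ in $O(\log \log n)$ further rounds. The main obstacle is pinning down the polynomial inequality $(7/10) f(x) \le c\, x$ on the full interval $[0, 7/10]$: because $(7/10) f(x)/x$ creeps above $0.95$ near $x = 3/5$, a careful argument (or an explicit plot analogous to Figure~\ref{fig:g'(x)}) is required to certify that the maximum stays strictly below $1$ and to fix the constant $c$.
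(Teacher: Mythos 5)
Your proposal is correct and follows essentially the same route as the paper: bound $\E[X_{t+1}] \le \tfrac{7n}{10} f(x_t)$ and certify a linear bound $(7/10)f(x) \le c\,x$ with a constant $c<1$ on $[0,7/10]$ (the paper takes $c = 1-1/40$ via the plot of $g''(x) = (7/10)f(x) - (39/40)x$, consistent with your numerical maximum of roughly $0.96$ near $x=3/5$), then apply Chernoff bounds to get a per-round multiplicative contraction w.h.p. Your explicit hand-off to Lemma~\ref{lem:spiral} once $x_t < 1/3-\varepsilon$ is a reasonable way to finish the small-$X_t$ regime that the paper's proof leaves implicit, but it does not change the substance of the argument.
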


\begin{proof}
Again, let $X_{t,i}$, $X_t$, $x_t$ and $f(x)$ be defined as in the proof of Lemma~\ref{lem:spiral}. Suppose that a $3/10$-fraction of the servers is permanently blocked. Given any $x_t \in [0,7/10]$, $\E[x_{t+1}] = (7/10) \cdot f(x_t)$. Set $g''(x) = (7/10) f(x) - (1-1/40)x$.
From Figure \ref{fig:g''(x)} it can be seen that
$g''(x) \le 0$ for all $x \in [0,1]$, which implies that $(7/10) f(x) \le (1-1/40)x$ for all $x \in [0,1]$. As the probability of a server to be successful is independent of the other servers, we can use Chernoff bounds to prove that for any $X_t \in [0,7n/10]$, $X_{t+1} \le (1-\varepsilon)X_t$ for some constant $\varepsilon>0$, w.h.p., which completes the proof.\qedhere

\begin{figure}
	\centering
    		\begin{minipage}[b]{0.5\linewidth}
   		 	\includegraphics[width=1\textwidth]{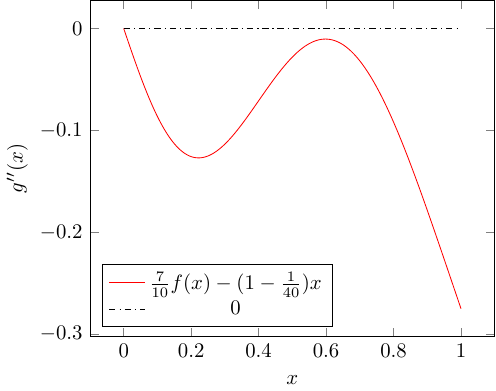}
    		\end{minipage}
	\caption{$g''(x)=-7x^6+ \frac{126}{5}x^5-\frac{63}{2}x^4+14x^3-\frac{39}{40}x$}
	\label{fig:g''(x)}
\end{figure}

\end{proof}

\subsection{Agreement}
We show that, perhaps interestingly, applying the $(6,3)$-median rule to the set of \textit{all servers} can be reduced to applying the $(3,3)$-median rule only to the set of \textit{useful servers}, which allows us to disregard useless severs and brings simplifications for the subsequent analysis of the agreement property.
Thus, it suffices to focus on the subsets of useful servers and to prove the following statement inspired by the analysis of the median rule in \cite{DoerrGMSS11}. More details and a generalization to a much wider class of ``$(k,\ell)$-rules'', as we will show next. The $(k,\ell,f)$-rule is defined in algorithm \ref{alg:(k,l,f)-rule}.

\begin{algorithm}
\caption{The $(k,l,f)$-rule}
\label{alg:(k,l,f)-rule}
\textbf{Preconditions:}
Consider any function $f:K^\ell \rightarrow K$ that is invariant to a reordering of its arguments and that satisfies validity, i.e., for all $x_1,\ldots,x_\ell \in K$, $f(x_1,\ldots,x_\ell) \in \{x_1,\ldots,x_\ell\}$.

\medskip
\textbf{Each server $i$ does the following each round:}
\begin{itemize}
\item send $k$ value requests to servers chosen uniformly and independently at random
\item if $x_i \not=\bot$ then for any value request received from some server $j$, send $x_i$ back to $j$
\item if at least $\ell$ replies are received, choose a subset of $\ell$ of these replies uniformly at random and set $x_i:=f(S)$, where $S$ is the multiset of values sent by these replies
\item if less than $\ell$ replies are received, set $x_i:=\bot$
\end{itemize}   
\end{algorithm}

Let $V$ be the set of all servers and $U_t$ be the set of useful servers in round $t$. The random experiment resulting from an execution of the $(k,\ell,f)$-rule can be modeled as $(R_1,R_2,R_3,\ldots)$, where $R_t =(R_{1,t},\ldots,R_{n,t})$ for all $t \ge 1$ and $R_{i,t} \in V^k$, $i\in \{1,\ldots,n\}$, denotes the $k$ servers chosen uniformly and independently at random by server $i$ in round $t$. 

From the definition of a 1-late adversary, we know that at the beginning of round $t$, the adversary is allowed to know $R_1,\ldots,R_{t-2}$ and the state of the system at the beginning of round $t-1$. Moreover, given $R_1,\ldots,R_{t-2}$ and the adversarial blocking decisions in rounds 1 to $t-1$, the set of useful servers $U_{t-1}$ and their values are uniquely defined. Thus, at the beginning of round $t$, a 1-late adversary knows $U_{t-1}$ and its values. However, since it does not know $R_{t-1}$, it does not know which of the servers that are non-blocked in round $t$ belong to $U_t$. Furthermore, since the adversary does not know $R_t$ as well, it does not know which of the non-blocked servers in round $t$ successfully apply the $(k,\ell,f)$-rule, nor does it know their outcome. Let $x(U_t)$ be the multiset of keys stored in $U_t$ and $S(i)$ be the event that server $i$ successfully applies the $(k,\ell,f)$-rule in round $t$. Furthermore, let $R(\ell,U_t)$ represent the random experiment in which every server in $V$ applies the $(\ell,\ell,f)$-rule to $U_t$. We show:

\begin{lemma} \label{lem:selection}
Let $i$ be any of the non-blocked servers in round $t$ and $x'_i$ be its key at the beginning of round $t+1$. For any key $x \in K$ and multiset of keys $X$ of size $|U_t|$ it holds: 
\[
  \text{Pr}_{R_t}[x'_i=x \mid S(i) \; \wedge \; x(U_t)=X] = \text{Pr}_{R(\ell,U_t)}[x'_i=x \mid x(U_t)=X]
\]
where $\text{Pr}_R$ represents the probability given random experiment $R$.   
\end{lemma}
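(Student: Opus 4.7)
The plan is to show that, conditional on $x(U_t)=X$ and the success event $S(i)$, the multiset of $\ell$ replies that server $i$ feeds into $f$ is distributed as $\ell$ i.i.d.\ uniform samples from $X$, which is precisely the distribution produced by the $(\ell,\ell,f)$-rule applied directly to $U_t$. Since $x'_i=f(S)$ depends only on the multiset of its arguments, matching the distributions of the selected replies immediately yields equality of the distributions of $x'_i$, and hence the claimed equality of probabilities for each $x\in K$.

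I would structure the argument in three steps. First, I fix $x(U_t)=X$; by the law of total probability it suffices to argue for an arbitrary fixed $U_t$ consistent with $X$, since only $X$ and $|U_t|$ enter the calculation. Server $i$'s $k$ requests are i.i.d.\ uniform over $V$, so each request lands in $U_t$ independently with probability $u=|U_t|/n$ and, conditional on landing there, the target is uniform over $U_t$. Equivalently, each received reply carries, independently, a value that is uniform on the multiset $X$; the total number of replies $M$ is $\mathrm{Bin}(k,u)$-distributed, and $S(i)$ is precisely the event $\{M\ge \ell\}$.

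Second, I would condition further on $M=m$ for an arbitrary $m\ge \ell$. A standard Bernoulli-filtering argument shows that, given $M=m$, the $m$ replies are i.i.d.\ uniform over $U_t$, and hence their values are i.i.d.\ uniform on $X$. Server $i$ then chooses a uniform random size-$\ell$ subset of these $m$ replies; by exchangeability, any size-$\ell$ subset of i.i.d.\ samples is itself $\ell$ i.i.d.\ samples from the same distribution, and averaging over the uniform choice of subset preserves this. Averaging over the conditional distribution of $M$ given $M\ge\ell$ then yields that the selected $\ell$-multiset, conditional on $S(i)$, is distributed as $\ell$ i.i.d.\ uniform samples from $X$. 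In the reference experiment $R(\ell,U_t)$, server $i$ by definition samples $\ell$ servers i.i.d.\ uniformly from $U_t$, which also produces $\ell$ i.i.d.\ uniform values from $X$; since $f$ is a deterministic symmetric function, applying it to two multisets with identical distribution gives $x'_i$ with identical distribution.

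The main obstacle is the bookkeeping on the conditioning: one must verify that conditioning on $S(i)$ does not introduce subtle correlations among the replies, and that the uniform subset-selection step preserves the i.i.d.\ structure. Both facts rest on symmetry and the independence of the $k$ samples that constitute $R_{i,t}$, but they need to be spelled out carefully to avoid conflating the three distinct layers of randomness in the process: which requests are sent, which of them elicit a reply at all, and which of the received replies ultimately enter the evaluation of $f$.
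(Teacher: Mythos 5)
Your proof is correct and rests on the same core fact as the paper's: the reply values are i.i.d.\ uniform over $U_t$ (hence over the multiset $X$), and the choice of $\ell$ replies is made independently of their values, so the selected $\ell$-multiset is distributed exactly as $\ell$ i.i.d.\ draws from $X$, matching $R(\ell,U_t)$. The paper packages this via an equivalent sequential execution (send requests one after the other, stop after the first $\ell$ replies, then a random-versus-fixed permutation argument), whereas you condition on the reply count $M=m$ and use Bernoulli filtering plus exchangeability; this is just different bookkeeping for the same argument, and your handling of the conditioning on $S(i)=\{M\ge\ell\}$ and on $x(U_t)=X$ (using that $R_t$ is fresh randomness unknown to the $1$-late adversary) is sound.
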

\begin{proof}
First of all, note that due to the fact that the 1-late adversary does not know $R_t$ in round $t$, $\text{Pr}_{R_t}[x'_i=x]$ is independent of the fact that $i$ is non-blocked. Consider now the following alternative description of selecting $\ell$ requests in the $(k,\ell,f)$-rule:

Execute the $k$ requests \emph{one after the other}, where each request is sent to a server chosen uniformly and independently at random. Once $\ell$ replies have been received, stop sending further requests and apply $f$ to the received values.

Now, suppose that a server $i$ receives a set $M$ of at least $\ell$ replies when executing the $(k,\ell,f)$-rule, and let $L$ be any subset of these of size $\ell$. Certainly, a way to pick $L$ uniformly at random from $M$ is to pick a random permutation of the requests and take the first $\ell$ replies from that permutation. However, since the probability distribution for the value of each reply is independent of the other replies, taking the first $\ell$ replies of a fixed permutation results in the same probability distribution of the values as taking the first $\ell$ replies of a random permutation. Thus, $\Pr[x'_i=x \mid S(i)]$ is the same for the original $(k,\ell,f)$-rule and its alternative description above. The lemma now follows from the fact that given that the $i$th request reaches a server in $U_t$, the probability distribution of the reply is the same under the random experiments $R_t$ and $R(\ell,U_t)$ and independent of prior requests.
\end{proof}

Since the random choices of the servers in the random experiment $R_t$ are independent and also independent from the blocking decisions of a 1-late adversary in round $t+1$, Lemma~\ref{lem:selection} immediately yields the following corollary.

\begin{corollary} \label{cor:equivalency_median_rules}
For any round $t\ge 1$, any subset of servers $U_{t+1}$ that is non-blocked in rounds $t$ and $t+1$, and any multisets of keys $X,X'$,
\[
  \text{Pr}_{R_t}[x(U_{t+1})=X' \mid S(U_{t+1}) \; \wedge \; x(U_t)=X] = \text{Pr}_{R(\ell,U_t)}[x(U_{t+1})=X' \mid x(U_t)=X]
\]
where $S(U_{t+1})$ is the event that all servers in $U_{t+1}$ are successful in round $t$.
\end{corollary}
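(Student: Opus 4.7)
The plan is to reduce the multi-server statement to the single-server Lemma~\ref{lem:selection} via independence. The key observation is that under $R_t$, the random choices $R_{i,t}$ of distinct servers $i$ are mutually independent, and the new value $x'_i$ at the end of round $t$ is a deterministic function of $R_{i,t}$ together with the values $x(U_t)$ of the useful servers (which are fixed by the conditioning $x(U_t)=X$). Consequently, conditional on $x(U_t)=X$, the family $(x'_i)_{i\in U_{t+1}}$ is a product of independent random variables, and the same is true under the experiment $R(\ell,U_t)$ where each server independently samples $\ell$ useful peers.

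First, I would handle the conditioning on $S(U_{t+1})$: the event $S(U_{t+1})$ is the intersection $\bigcap_{i\in U_{t+1}} S(i)$, and each $S(i)$ depends only on $R_{i,t}$, so this event also factorizes across servers. Therefore, for any target multiset $X'=\{x'_i : i\in U_{t+1}\}$,
\begin{align*}
\text{Pr}_{R_t}[x(U_{t+1})=X' \mid S(U_{t+1}) \wedge x(U_t)=X]
&= \prod_{i\in U_{t+1}} \text{Pr}_{R_t}[x'_i = x'_{\,i} \mid S(i) \wedge x(U_t)=X].
\end{align*}
By Lemma~\ref{lem:selection}, each factor equals $\text{Pr}_{R(\ell,U_t)}[x'_i=x'_{\,i}\mid x(U_t)=X]$, and the product of these equals $\text{Pr}_{R(\ell,U_t)}[x(U_{t+1})=X'\mid x(U_t)=X]$ by independence of the $R(\ell,U_t)$ experiments across servers.

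Next, I would justify that the conditioning on membership in $U_{t+1}$ does not spoil the argument. By definition $U_{t+1}$ consists of servers non-blocked in both rounds $t$ and $t+1$ that successfully applied the rule in round $t$. The non-blocking status is determined by the 1-late adversary, whose choices in round $t+1$ may depend on the system state up to round $t$ but are independent of $R_t$. Hence conditioning on the event ``$U_{t+1}$ is non-blocked in rounds $t$ and $t+1$'' leaves the distribution of $(R_{i,t})_{i \in U_{t+1}}$ unchanged, so the independence and factorization above go through as stated.

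The main obstacle is bookkeeping of the conditioning rather than any substantive probabilistic content: one must verify carefully that $S(U_{t+1})$ truly factorizes over servers (which hinges on $S(i)$ being measurable with respect to $R_{i,t}$ alone given $x(U_t)$) and that the 1-lateness assumption really makes the blocking in round $t+1$ independent of $R_t$. Once these two factorization facts are cleanly laid out, the corollary follows by applying Lemma~\ref{lem:selection} coordinate-wise.
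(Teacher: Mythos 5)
Your proposal is correct and follows essentially the same route as the paper, which likewise derives the corollary from Lemma~\ref{lem:selection} by invoking the mutual independence of the servers' random choices in $R_t$ and their independence from the 1-late adversary's blocking decisions in round $t+1$; you merely spell out the per-server factorization that the paper leaves implicit. The only small imprecision is that $x(U_{t+1})=X'$ is an event about a multiset, so one should sum the factorized products over all assignments of values to servers consistent with $X'$ on both sides, which changes nothing since the factors agree term by term.
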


Hence, instead of studying the $(6,3)$-median rule, we will switch to the equivalent setting of applying the random experiment $R(3,U_t)$ to set $U_{t+1}$ in each round $t$, given that $U_{t+1}$ is the set of useful servers in round $t+1$. The only problem is that we do not know the useful sets beforehand since that depends on the blocking strategy of the adversary. As a way out, we will simply study the outcomes of the random experiments $R(3,U_t)$ for \emph{any} given sequence $(U_1,U_2,U_3,\ldots)$ of useful servers. In the next theorem, we only focus on sequences of useful servers with $|U_t| \ge n/4$ for all $t$ since we know from Lemma~\ref{lem:spiral} that for $|U_t| \le (1/3-\varepsilon)n$ a spiral-of-death would start, w.h.p., which would trivially satisfy agreement.

\begin{theorem}
\label{thm:stabilizing_consensus_3median}
For any sequence $(U_1, U_2, U_3, \dots)$ of sets of useful servers satisfying $n/4 \leq |U_t| \leq n$ for all $t$, the $(3,3)$-median rule applied by $U_{t+1}$ on $U_t$ for all $t\ge 1$ achieves agreement in $O(\log n)$ rounds, w.h.p.   
\end{theorem}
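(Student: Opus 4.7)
The plan is to follow the template of the standard $3$-median analysis in \cite{DoerrGMSS11}, adapted in two ways: we use a binary threshold reduction to turn the problem into the $3$-majority dynamics, and we accommodate the fact that the useful set $U_t$ changes adversarially from round to round (subject only to $|U_t| \geq n/4$).

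First, I would reduce to a binary dynamics. For any threshold $\theta \in K$ lying strictly between two distinct values present in the useful set, set $Y_{i,t}^{\theta} := 1$ if $x_{i,t} < \theta$ and $0$ otherwise. Because the median of three elements lies below $\theta$ iff at least two of them do, the $(3,3)$-median rule on the $x_{i,t}$ induces exactly the $3$-majority rule on the indicators $Y_{i,t}^\theta$. By validity, the multiset of distinct values held in $U_t$ is non-increasing, so at most $n-1$ thresholds $\theta$ are ever relevant. Agreement will follow once, for every such $\theta$, all servers in $U_t$ share the same value of $Y^\theta$, and a union bound over the $O(n)$ thresholds (absorbed in w.h.p.) closes the argument.

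Fix $\theta$ and let $p_t := |\{i \in U_t : Y_{i,t}^\theta = 1\}|/|U_t|$. By Lemma~\ref{lem:selection}, each server in $U_{t+1}$ can be viewed as independently sampling three servers from $U_t$ and taking the majority, so $|U_{t+1}|\cdot p_{t+1}$ is a sum of $|U_{t+1}|$ i.i.d.\ Bernoulli variables with common parameter $f(p_t) = 3p_t^2 - 2p_t^3$. The map $f$ has stable fixed points $0$ and $1$, an unstable fixed point $1/2$ with slope $3/2$, and satisfies $f(p) \leq 2p^2$ on $[0,1/2]$. Hoeffding gives $|p_{t+1} - f(p_t)| = O(\sqrt{\log n / n})$ w.h.p., using $|U_{t+1}| \geq n/4$. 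The proof then proceeds in three phases: (a) an \emph{escape} phase in which $|p_t - 1/2|$ grows from $\Theta(1/\sqrt n)$ sampling noise up to a constant within $O(\log n)$ rounds, driven by the $3/2$-expanding drift of $f$ near $1/2$; (b) a \emph{doubly-exponential decrease} phase in which the inequality $p_{t+1} \leq 3 p_t^2$ (valid once $p_t \leq 2/5$ and above the noise floor) drives $p_t$ down to $O((\log n)/n)$ in a further $O(\log \log n)$ rounds; (c) a \emph{cleanup} phase of one or two rounds in which Chernoff bounds drive the count of $1$-valued servers in $U_{t+1}$ to $0$ w.h.p.

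The main obstacle is the escape phase (a): one must argue that the $\Theta(1/\sqrt n)$ sampling noise around the unstable fixed point reliably amplifies into a constant-size separation in $O(\log n)$ rounds, rather than the $O(\sqrt n)$ that a naive symmetric random walk would suggest. This is the technical core of the median-rule analysis in \cite{DoerrGMSS11} and is typically handled by tracking a submartingale based on $(p_t - 1/2)^2$ and exploiting the multiplicative drift once $|p_t - 1/2|$ exceeds the per-round noise level. The only adaptation required for our setting is to verify that every quantity in that argument (drift of $f$, conditional variance of $p_{t+1}$) depends solely on $p_t$ and on a lower bound on $|U_{t+1}|$, which holds because, given $p_t$, the $|U_{t+1}|$ updates are i.i.d.\ Bernoulli$(f(p_t))$. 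The assumed bound $|U_t| \geq n/4$ therefore pushes through all constants, completing the plan.
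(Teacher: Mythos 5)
Your argument is essentially correct, but for the multi-valued case you take a genuinely different route from the paper. The paper's proof first settles the binary case by a three-case drift analysis on $\gamma_t$ (contraction below $1/4$, multiplicative growth of the imbalance $\delta_t$, and a CLT-based anti-concentration plus ``run'' argument to escape the balanced regime), and then handles arbitrary values by the gravity/heavy-identifier machinery of \cite{DoerrGMSS11} (Lemma~\ref{lem:reduction}): it ranks identifiers $id(x_i)=x_i\circ i$, computes the gravity $g(i)$ of each rank, and shows that the mass of heavy identifiers for each value either dies out or takes over. You instead reduce directly to binary dynamics by projecting onto each threshold $\theta$ between adjacent distinct values, observing that the median of three values is below $\theta$ iff at least two are, so each projection evolves exactly as a $3$-majority process, and a union bound over the at most $n-1$ relevant thresholds (valid because validity keeps the value set non-increasing, and because w.h.p.\ statements tolerate a polynomial union bound even though the threshold processes are correlated) yields agreement. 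This quantile reduction is cleaner and arguably more elementary than the gravity argument, and it ports the entire binary analysis verbatim; what the paper's heavier machinery buys is compatibility with the framework of \cite{DoerrGMSS11}, where one also wants control over \emph{which} value wins (closeness to the median) and robustness to adversarial values, neither of which is needed for the agreement claim here. Your binary core (drift of $f(p)=3p^2-2p^3$, Hoeffding concentration using $|U_{t+1}|\ge n/4$, escape from $p_t\approx 1/2$) matches the paper's; your proposed second-moment/submartingale escape argument differs from the paper's CLT-plus-runs treatment, but both are standard and are sketched at a comparable level of detail (the paper itself defers this step to Lemma~2.8 of \cite{DoerrGMSS11}).

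Two small quantitative points you should repair when writing this out. First, the crude bound $p_{t+1}\lesssim 3p_t^2$ only contracts once $p_t<1/3$ (at $p_t=2/5$ it gives $0.48>2/5$), so either use the exact $f$ or let the escape phase carry you below $1/3$ before switching to the doubly-exponential phase. Second, a ``one or two round'' cleanup from a count of $O(\log n)$ ones only gives failure probability $\widetilde{O}(1/n)$, not $n^{-c}$ for an arbitrary constant $c$; you need $O_c(1)$ further rounds, or the paper's device of bounding $\E[\gamma_{t+\tau}]\le (3/4)^\tau\gamma_t$ over $\tau=\Theta(\log n)$ rounds and applying Markov, to meet the paper's definition of w.h.p.
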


\begin{proof}
Let $n_t=|U_t|$. We start with the binary case, i.e., initially, $x_i \in \{0,1\}$ for all $i \in U_1$. Consider some fixed round $t$. Let $\gamma_t$ be the fraction of servers $i \in U_t$ with $x_i=0$ at round $t$. W.l.o.g. we assume that $\gamma_t \le 1/2$. We distinguish between three cases.

\subsubsection*{Case 1: $\gamma_t \le 1/4$}

Consider some fixed server $i \in U_{t+1}$ and let $N$ be the event that its value is 0. Since $\gamma_t \le 1/4$, 
\begin{align*}
  \Pr[N] & = {3 \choose 2} \gamma^2_t \cdot (1-\gamma_t) + \gamma^3_t \\
  & = 3 \gamma^2_t - 2 \gamma^3_t \le 3\gamma^2_t \le \frac{3}{4}\cdot \gamma_t
\end{align*}
Therefore, $\E[\gamma_{t+1}]\le (3/4)\gamma_t$. Since the event of having 0 in round $t+1$ is independent of the other servers in $U_{t+1}$ and $n_{t+1} \ge n/4$, the Chernoff bounds (Lemma~\ref{chernoff1}) imply that w.h.p., $\gamma_{t+1} \le \gamma_t \le 1/4$. Thus, we may assume that for polynomially many rounds $t' \ge t$, $\gamma_{t'}$ is upper bounded by $1/4$. In that case, it follows from the linearity of expectation that
\[
  \E[\gamma_{t+\tau}] \le (3/4)^{\tau} \gamma_t
\]
When choosing $\tau = (c+1) \log_{4/3} n$ for some constant $c>0$, it the Markov inequality implies that
\[
  \Pr[\gamma_{t+\tau}>0] = \Pr[\gamma_{t+\tau}\ge 1/n] \le n \cdot (3/4)^{\tau} = 1/n^c
\]
Thus, w.h.p., it takes at most $O(\log n)$ rounds until no server in $U_t$ stores a 0 and we have reached a consensus.

\subsubsection*{Case 2: $1/4 < \gamma_t \le 1/2-\sqrt{(c \ln n)/n}$}

Let $\delta_t = 1/2 - \gamma_t$ be the relative deviation from a balanced state. Then $\sqrt{(c \ln n)/n} \le \delta_t < 1/4$. We know from case 1 that
\begin{align*}
    \E[\gamma_{t+1}] & = 3\gamma_t^2 - 2\gamma_t^3 \\
    & = 3(1/2-\delta_t)^2 - 2(1/2-\delta_t)^3 \\
    & = 1/2 - (3/2)\delta_t + 2\delta_t^3
\end{align*}
Therefore,
\begin{align*}
   \E[\delta_{t+1}] & = (3/2)\delta_t - 2\delta^3_t \\
   & \geq (3/2)\delta_t - 2\cdot(1/4)^2\delta^3_t \\
   & \geq (11/8)\delta_t
\end{align*}
Since the event that a server has the value 0 in round $t+1$ is independent of the other servers, $n_{t+1} \ge n/4$, and $\gamma_t > 1/4$, it follows from the Chernoff bounds (Lemma~\ref{chernoff2}) that $\delta_{t+1} \ge (5/4)\delta_t$ w.h.p. This implies that within $\tau = \log_{5/4} n$ rounds, $\delta_{t+\tau}$ will be at least $1/4$, w.h.p., so we reached the situation of case 1. 
   
\subsubsection*{Case 3: $1/2-\sqrt{(c \ln n)/n} < \gamma_t \le 1/2$}

As in case 2, let $\delta_t = 1/2 - \gamma_t$, which means that now, $0 \le \delta_t < \sqrt{(c \ln n)/n}$. We prove the following lemma as a sub-claim of the overall proof. Specifically, we use the Central Limit Theorem to prove that with constant probability, we have a sufficiently large imbalance regardless of the previous imbalance.

\begin{lemma}
\label{lem:imbalance}
Let $\alpha>0$ be any constant. Then for any $\delta_t \ge 0$, $Pr[\delta_{t+1} \geq \alpha/\sqrt{n_{t+1}}] \geq \frac{1}{\sqrt{4\pi}(1+\sqrt{2}\alpha)} e^{-\alpha^2}$, provided $n_{t+1}$ is large enough.
\end{lemma}

\begin{proof}
For the proof of Lemma \ref{lem:imbalance}, we need the following notation. We say that a random variable $Y$ \textit{stochastically dominates} a random variable  $Z$, and write $Y \succeq Z$, if $Pr[Y \geq x] \geq Pr[Z \geq x]$ for any $x$. We start with a claim that follows from the proof of Claim 2.6 in \cite{DoerrGMSS11}.

\begin{claim} 
\label{claim:dominate}
For any two imbalances $\delta_t$ and $\delta_t'$ with $\delta_t \geq \delta_t' \geq 0$ it holds that $\delta_{t+1} \succeq \delta_{t+1}'$.
\end{claim}

Thus, we only need to prove Lemma~\ref{lem:imbalance} for $\delta_t=0$ since the general case follows from stochastic domination (see Claim~\ref{claim:dominate}).
For every available server $i \in \{1,\ldots,n_{t+1}\}$ at round $t+1$, let the random variable $X_i \in \{-1/2,1/2\}$ be defined as $X_i=1/2$ if and only if $x_i=0$. Note that in that case, $\delta_{t+1} = (1/n_{t+1}) \sum_{i=1}^{n_{t+1}} X_i$. For $\delta_t=0$ it holds that $\Pr[X_i=-1/2]=\Pr[X_i=1/2]=1/2$ and therefore, $\E[X_i]=0$ and $\V[X_i] = \E[X_i^2] - \E[X_i]^2 = 1/2$. Let $\Psi_{t+1} = \sum_{i=1}^{n_{t+1}} X_i$. Then it holds that $\E[\Psi_{t+1}]=0$. Furthermore, since the $X_i$'s are independent, $\V[\Psi_{t+1}] = n_{t+1}/2$. To proceed, we need the Central Limit Theorem, which is given formally in Lemma \ref{lem:central_limit}.

Thus, for any $\alpha>0$ (choose $a=\alpha$ and $b=\infty$),
\[
  \lim_{n_{t+1} \rightarrow \infty} \Pr[\Psi_{t+1} \ge \alpha \sqrt{n_{t+1}}] \ge 1-\Phi(\sqrt{2} \alpha) 
\]
which implies that
\[
  \Pr[\Psi_{t+1} \ge \alpha \sqrt{n_{t+1}}] \ge 1-\Phi(\sqrt{2} \alpha) - \varepsilon
\]
where $\varepsilon \to 0$ as $n_{t+1} \to \infty$. For $x \geq 0$, the value of $\Phi(x)$ can be bounded as follows (see \cite{DoerrGMSS11}):
\[
\frac{1}{\sqrt{2\pi}(1+x)} \cdot e^{-\frac{x^2}{2}} \leq 1-\Phi(x) \leq \frac{1}{\sqrt{\pi}(1+x)}  \cdot e^{-\frac{x^2}{2}}
\]
Therefore, we can lower bound the above probability by
\[
\frac{1}{\sqrt{2\pi}(1+\sqrt{2} \alpha)} \cdot e^{-\alpha^2} - \varepsilon \geq  \frac{1}{\sqrt{4\pi}(1+\sqrt{2} \alpha)} \cdot e^{-\alpha^2}
\]
Thus, if $n_{t+1}$ is large enough then for any $\alpha>0$, 
\[
  \Pr[\delta_{t+1} \ge \alpha/\sqrt{n_{t+1}}] \ge \frac{1}{\sqrt{4\pi}(1+\sqrt{2}\alpha)} e^{-\alpha^2} .
\]
\end{proof}

Moreover, we can show the following lemma.

\begin{lemma}
For any $\delta_t>0$,
  $\Pr[ \delta_{t+1} \le \frac{4}{3} \delta_t] \le 2e^{-\Theta(\delta_{t}^2 n_{t+1})}$.
\end{lemma}
\begin{proof}
Recall that $\Psi_{t+1}= \delta_{t+1} \cdot n_{t+1}$. Thus,
\begin{align*} 
  \Pr[ \delta_{t+1} \le \frac{4}{3} \delta_t] 
  & = \Pr[ \Psi_{t+1} \leq \frac{4}{3} \cdot \frac{n_{t+1}}{n_t} \cdot  \Psi_{t}]  
\end{align*}
Analogously to Case 2, it holds that $\E[\delta_{t+1}] \ge \frac{11}{8} \delta_t$, and therefore,
\begin{align*}
E[ \Psi_{t+1}] 
&\geq \frac{11}{8} \cdot \frac{n_{t+1}}{n_t} \cdot \Psi_t  
\end{align*}
With $\eta = 1/33$ we get:
\begin{align*}
   \Pr[\delta_{t+1} \leq \frac{4}{3} \delta_t] 
   &= \Pr[\delta_{t+1} \leq \frac{32}{33} \cdot \frac{11}{8} \delta_t]\\
   &= \Pr[\delta_{t+1} \leq (1-\frac{1}{33}) \cdot \frac{11}{8} \delta_t]\\
   &\le \Pr[\delta_{t+1} \leq (1-\eta) \cdot \E[\delta_{t+1}]]\\
\end{align*}
Using the Chernoff bounds (Lemma \ref{chernoff2}), it follows that:
\begin{align*}
  \Pr[\Psi_{t+1} \leq (1-\eta) \cdot \E[\Psi_{t+1}]]
  & \le \Pr[|\Psi_{t+1} - \E[\Psi_{t+1}]| \geq \eta \cdot \E[\Psi_{t+1}]] \\
  & \leq 2e^{-2 \cdot(\eta \cdot \E[\Psi_{t+1}])^2/ \sum_{i=1}^{n_{t+1}}(-\frac{1}{2}-\frac{1}{2})^2}\\
  & \leq e^{\frac{-2 \cdot \eta^2 \cdot (\frac{11}{8} \cdot \frac{n_{t+1}}{n_t} \cdot \Psi_t)^2}{n_{t+1}}}\\
  & \leq e^{\frac{- \frac{121}{32} \cdot \eta^2 \cdot (n_{t+1} \cdot \delta_t)^2}{n_{t+1}}}\\
  & = e^{- \Theta(\delta_t^2n_{t+1})}.\hfill\qedhere
\end{align*}
\end{proof}

Thus, there is an increasingly stronger drift to $\sqrt{(c\ln n)/n}$ as $\delta_t$ gets larger so that similarly to Lemma 2.8 in \cite{DoerrGMSS11}, the following result can be shown:

\begin{lemma}
If at a round $t$ we have $\delta_t < \sqrt{(c \ln n)/n}$ then there is a round $t'=t+O(\log n)$ with $\delta_{t'} \ge \sqrt{(c \ln n)/n}$, w.h.p.
\end{lemma}

The intuition behind the lemma is that when starting with $\delta_t \ge \alpha/\sqrt{n_t}$ for a sufficiently large constant $\alpha>0$, with constant probability there is a so-called \emph{run} of consecutive increases of $\delta_t$ by a factor of at least $4/3$ till $\delta_t \ge \sqrt{(c \ln n)/n}$. Furthermore, with constant probability, $\delta_{t+1} \ge \alpha/\sqrt{n_{t+1}}$ whenever $\delta_t < \alpha/\sqrt{n_t}$ (see Lemma~\ref{lem:imbalance}), and if afterwards the run fails to reach $\delta_t \ge \sqrt{(c \ln n)/n}$ (due to some increase of $\delta_t$ of less than $4/3$), this happens on expectation after a constant number of increases by at least $4/3$. Thus, w.h.p., at most $O(\log n)$ runs are needed till a round $t$ is reached with $\delta_t \ge \sqrt{(c \ln n)/n}$, and the time covered by these runs is $O(\log n)$, w.h.p.

\paragraph{\textbf{The case of arbitrary values}}
Suppose that the servers $U_t$ are numbered from 1 to $n_t$, and let us define the \emph{identifier} $id(x_i)$ of a value $x_i$ at server $i$ as $id(x_i)=x_i \circ i$ (i.e., $x_i$ is padded with lower order bits representing $i$). These id's can be uniquely ordered so that we may assume, w.l.o.g., that $id(x_1)<\ldots<id(x_{n_t})$. We associate with each $id(x_i)$ a value $g(i)$ called \emph{gravity}, which is the probability that a server chooses $x_i$ as its median in the next round. The gravity can be computed as follows: 
value $x_i$ may either be chosen at least twice, or it is the median of the three chosen values and the values are pairwise distinct. (Note that the median rule does not make use of the server numbering, but it certainly holds that if $x_i$ is the median according to our id ranking, then a value equal to $x_i$ would also be the outcome of the original median rule.) Thus,
\begin{align*}
   g(i) & = {3 \choose 1} \frac{1}{n_t} {2 \choose 1} \frac{i-1}{n_t} \cdot \frac{n_t-i}{n_t} + {3 \choose 2} \left( \frac{1}{n_t} \right)^2 \cdot \frac{n_t-1}{n_t} + \left( \frac{1}{n_t} \right)^3 \nonumber \\
    & = \frac{6(i-1)(n_t-i) +3n_t -2}{n_t^3} 
\end{align*}
As can be easily checked, this is maximized for $i= \lceil n_t/2 \rceil$, with $g(i)$ being roughly $\frac{3}{2 \cdot n_t}$ for such an $i$. For each value $x$ and round $t$, let us define the set of \emph{heavy} identifiers $\mathcal{H}_{x,t}$ as the subset of the $\Phi_t = 2\sqrt{c n_t \ln n_t}$ identifiers $id(x_i)$ with $x_i=x$ of largest gravity (or all such values if there are less than $\Phi_t$), where $c>0$ is a sufficiently large constant. By definition, $0 \le |\mathcal{H}_{x,t}|\le \Phi_t$. Following along the same line of arguments as in \cite{DoerrGMSS11}, the following lemma can be shown:

\begin{lemma} \label{lem:reduction}
For any initial set of values, it takes at most $O(\log n)$ rounds until at least one of the following two cases holds for all values $x$ w.h.p.:
\begin{enumerate}
\item At least one identifier $id(x_i) \in \mathcal{H}_{x,t}$ satisfies $g(i) < (4/3)/n_t$ or $|\mathcal{H}_{x,t}|=0$, or
\item $|\mathcal{H}_{x,t}|=\Phi_t$
\end{enumerate}
\end{lemma}
\begin{proof}
The intuition behind that lemma is the following. If Case 1 does not apply for some value $x$ then $\mathcal{H}_{x,t}$ only contains id's of gravity at least $(4/3)/n_t$. In that case, $\E[|\{ id(x_i)$ in round $t+1$ with $x_i=x\}|] \ge (n_{t+1}/n_t) \cdot (4/3)|\mathcal{H}_{x,t}|$. Thus, if we already have $|\mathcal{H}_{x,t}|=\Phi_t$ then it follows from the Chernoff bounds that $|\mathcal{H}_{x,t+1}|=\Phi_{t+1}$ w.h.p. Otherwise, as long as Case 1 does not apply, $\E[|\{ id(x_i)$ in round $t$ with $x_i=x\}|]$ increases by a constant factor on expectation in each round $t$, which will reach $\Phi_t$ within $O(\log n)$ rounds, w.h.p. To prevent Case 2 for some $x$, we therefore have a reach a round $t$ where Case 1 applies. Once Case 1 applies, we distinguish between two cases. If $|\mathcal{H}_{x,t}|=0$ then, certainly, also $|\mathcal{H}_{x,t+1}|=0$. Otherwise, if at least one identifier $id(x_i) \in \mathcal{H}_{x,t}$ satisfies $g(i) < (4/3)/n_t$ then we distinguish between two further cases. If $x$ is larger than the median value then consider the values to be cut into two groups $G_0=\{ x_i \mid x_i < x\}$ and $G_1=\{ x_i \mid x_i \ge x\}$. From the definition of $\mathcal{H}_{x,t}$ it follows that $G_0-G_1 \ge \sqrt{c n \ln n}$, i.e., we can view the system as being in case 2 of the binary case, where $G_0$ represents majority value and $G_1$ represents the minority value. Since the majority value continuously grows, w.h.p., $\mathcal{H}_{x,t}$ will continue to have a value of gravity less than $(4/3)/n_t$, w.h.p. The case for $x$ being smaller than the median value is similar, implying Theorem~\ref{thm:stabilizing_consensus_3median}.
\end{proof}
This concludes the proof of Theorem \ref{thm:stabilizing_consensus_3median}.
\end{proof}

Combining Lemma~\ref{lem:spiral} (if the fraction of useful servers drops below $1/3$, it will rapidly spiral to 0) with Theorem~\ref{thm:stabilizing_consensus_3median} (if $n/4$ servers are continuously useful we reach agreement quickly), we obtain agreement in the general setting. Note that a blocking bound on the adversary is not even needed since the $(6,3)$-median rule trivially achieves agreement on $\bot$ if the fraction of useful servers ever drops below $1/3$.

\begin{corollary}
\label{cor:stabilizing_consensus_quarter_available}
For any 1-late adversary, the $(6,3)$-median rule achieves agreement in $O(\log n)$ rounds, w.h.p.
\end{corollary}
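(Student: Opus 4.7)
My plan is to combine Lemma~\ref{lem:spiral} and Theorem~\ref{thm:stabilizing_consensus_3median} via a case distinction on the trajectory of $|U_t|$ produced by the adversary. I will fix a constant $\varepsilon\in(0,1/12)$ so that $1/3-\varepsilon>1/4$, and let $T=O(\log n)$ be a round horizon chosen large enough to absorb the time bounds of both underlying results.

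The first case is that some round $t^*\le T$ satisfies $|U_{t^*}|\le(1/3-\varepsilon)n$. Since blocked servers receive no replies and therefore set $x_i:=\bot$ (as noted directly after Algorithm~\ref{alg:basic}), blocking can only accelerate -- never slow down -- the death spiral, so Lemma~\ref{lem:spiral} applies a fortiori from round $t^*$. Within $O(\log\log n)$ further rounds every server holds $\bot$ w.h.p., the set of useful servers becomes empty, and since a server can only become useful again by collecting $\ell=3$ non-$\bot$ replies it stays empty forever; hence Agreement holds vacuously.

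In the complementary case, $|U_t|>(1/3-\varepsilon)n>n/4$ for every $t\le T$. Here I will invoke Corollary~\ref{cor:equivalency_median_rules} to replace the $(6,3)$-median rule on all $n$ servers by the $(3,3)$-rule on the induced sequence $(U_1,\dots,U_T)$; since Theorem~\ref{thm:stabilizing_consensus_3median} is quantified over \emph{all} sequences with $n/4\le|U_t|\le n$, it applies to this particular realisation and yields Agreement within $O(\log n)$ rounds w.h.p. A union bound over the two complementary cases then closes the argument, with no bound on the blocking fraction required. The step I expect to be the most delicate is the reduction in this second case: the sequence $(U_t)$ is itself random and adversary-dependent, and one needs Corollary~\ref{cor:equivalency_median_rules} precisely to guarantee that, conditional on any realisation of $(U_t)$, the distribution of values on the useful servers matches the one analysed in the theorem -- it is exactly the 1-lateness of the adversary (so that $R_t$ is unknown when round-$t$ blocking is fixed) that makes this pointwise reduction valid.
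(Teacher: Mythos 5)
Your proposal is correct and follows essentially the same route as the paper: the paper also argues via the dichotomy that either the useful fraction drops below $1/3-\varepsilon$ (triggering the death spiral of Lemma~\ref{lem:spiral}, so agreement holds trivially on $\bot$) or it stays above $n/4$, in which case Corollary~\ref{cor:equivalency_median_rules} and Theorem~\ref{thm:stabilizing_consensus_3median} (stated for arbitrary sequences $(U_t)$) yield agreement in $O(\log n)$ rounds. Your added care about the adversary-dependence of $(U_t)$ and the role of 1-lateness is exactly the point the paper handles by quantifying the theorem over all admissible sequences.
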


When combining this corollary with Lemma~\ref{lem:available} ($3n/4$ servers are useful under a limited adversary), we obtain the main result of this section.

\begin{theorem} 
\label{th:keys}
If the adversary is 1-late and at least $n/4$ servers are useful in each round, the $(6,3)$-median rule solves the stabilizing consensus problem (Definition~\ref{def:stabilizing_consensus}) in $\bigO(\log n)$ rounds, w.h.p. Furthermore, given a 1-late $1/10$-blocking adversary, at least $3/4$ of the servers are useful w.h.p., for any number of rounds that is polynomial in $n$.
\end{theorem}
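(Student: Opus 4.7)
The plan is to assemble the theorem from the earlier building blocks rather than prove anything new from scratch. For the first statement, I would check each of the three requirements of Definition~\ref{def:stabilizing_consensus} in turn. Validity is immediate from inspection of Algorithm~\ref{alg:basic}: a server's new value is either $\bot$ or a reply received in the same round from some other server, which by assumption held that value at the beginning of the round. Availability is a direct consequence of the hypothesis that at least $n/4$ servers are useful each round (which yields a constant fraction, as required). Agreement is where the real work lies, but this is exactly the statement of Corollary~\ref{cor:stabilizing_consensus_quarter_available}: under any $1$-late adversary, the $(6,3)$-median rule achieves agreement in $O(\log n)$ rounds w.h.p. (internally, that corollary splits into the two cases ``useful fraction ever drops below $1/3$'', handled by Lemma~\ref{lem:spiral}, in which case agreement on $\bot$ is trivial, and ``useful fraction stays above $n/4$'', in which case Theorem~\ref{thm:stabilizing_consensus_3median} applies to the sequence $(U_t)_t$).

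For the second statement, I would invoke Lemma~\ref{lem:available}. The hypothesis of that lemma requires the initial fraction of useful servers to be at least $1/2$, which is automatic in the stabilizing consensus setting because every server begins holding an arbitrary value (Definition~\ref{def:stabilizing_consensus}), so every non-blocked server is useful in round~$1$, and even a $1/10$-blocking adversary leaves the initial useful fraction at least $9/10 \ge 1/2$. Lemma~\ref{lem:available} then guarantees monotone convergence to a useful fraction of at least $3/4$ within $O(\log n)$ rounds w.h.p., and stability at $\ge 3/4$ for any polynomial number of subsequent rounds w.h.p., against any $1$-late $1/10$-blocking adversary. Taking a union bound over the polynomially many rounds in question preserves the ``w.h.p.'' guarantee.

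Combining the two statements then yields the theorem: once $3/4 \ge 1/4$ of the servers are useful in each round (from the second part), the hypothesis of the first part is met throughout, and the $(6,3)$-median rule solves stabilizing consensus in $O(\log n)$ rounds w.h.p. The main conceptual obstacle is not in the proof itself but in recognizing that the earlier lemmas already cover every case; the only small care needed is to argue that the initial condition of Lemma~\ref{lem:available} (useful fraction $\ge 1/2$) is implied by the problem setup together with the $1/10$-blocking bound, and that the union bound over polynomially many rounds of sustained availability is compatible with the w.h.p.\ guarantee we want for agreement.
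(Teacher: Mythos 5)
Your proposal is correct and follows essentially the same route as the paper, which obtains the theorem by combining Corollary~\ref{cor:stabilizing_consensus_quarter_available} (agreement under any 1-late adversary, via Lemma~\ref{lem:spiral} and Theorem~\ref{thm:stabilizing_consensus_3median}) with Lemma~\ref{lem:available} (availability at $3/4$ under a $1/10$-blocking adversary), validity being trivial for the $(6,3)$-median rule. Your explicit check that the initial useful fraction is at least $1/2$ (since all servers start with a value and at most a $1/10$-fraction is blocked) is a detail the paper leaves implicit, and it is argued correctly.
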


\section{SMR with an Extended Median Rule} \label{sec:xmedian}

We extend our consensus protocol to a solution of the stabilizing SMR problem under a blocking adversary (Definition \ref{def:stabilizing_smr}). We call a command \emph{injected} once it is first received from the client by a non-blocked server. Here, clients are expected to send the same command multiple times to a random server to improve the chance of a successful injection (as a command sent to a blocked server is lost). We assume here w.l.o.g.\ that every client command is unique (a mechanism to resolve ambiguities is introduced in the next section.) The details of the extended $(k,\ell)$-median rule can be found in Algorithm \ref{alg:extended}.

\begin{algorithm}
\caption{The extended $(k,\ell)$-median rule}
\label{alg:extended}
\textbf{Preconditions:}
Each server $i$ maintains a log $L_i \in K^*$, where $K^*$ contains all finite sequences of commands from a totally ordered set $K$ without repetitions. We use lexicographic order to impose a total order on these logs. Initially, $L_i = (x_0)$ for all $i$, where $x_0$ is a seed command. Every round, any set of client commands $x \in K$ can be injected at any subset of servers. 

\smallskip

\textbf{Each server $i$ does the following each round:}
\begin{itemize}
\item for every command $x$ received from a client that is not yet contained in $L_i$, server $i$ sends $\sigma \log n$ \emph{append requests} to servers chosen uniformly and independently at random, for a sufficiently large constant $\sigma \ge 1$
\item server $i$ sends $k$ \emph{log requests} to servers chosen uniformly and independently at random
\item if $L_i \!\neq\! \bot$ then for any \emph{log request} received by server $i$ from a server $j$, $i$ sends $L_i$ back to $j$ 
\item if server $i$ receives at least $\ell$ replies
\begin{itemize}
    \item server $i$ chooses a subset $M$ of size $\ell$ from the received logs uniformly at random
    \item server $i$ sets $L_i:=L'_i \circ \bar{L}$, where $L'_i$ is the median of $M$ and $\bar{L}$ contains (in any order) all values in logs in $M$ or in \emph{append requests} received in that round that are \emph{not} in $L'_i$
\end{itemize}
\item if server $i$ receives less than $\ell$ logs, it sets $L_i:=\bot$
\end{itemize}
\end{algorithm}

The first step amplifies the presence of any newly injected command $x$, giving it a high probability to reach all servers in subsequent rounds. Afterwards, the servers request logs from each other and select the median log from the responses, appending any missing commands. This process ensures that, over time, the servers converge on an ordering of commands that have circulated long enough. Similarly to the previous section, we call a server \emph{useful} in round $t$ if it stores a non-empty log at the beginning of $t$ and is non-blocked during round $t$. We show the following theorem. 

\begin{theorem} \label{th:sequence}
If the adversary is 1-late and at least $n/4$ servers are useful in each round, the extended $(6,3)$-median rule solves the stabilizing SMR problem (Definition \ref{def:stabilizing_smr}) in $\bigO(\log n)$ rounds, w.h.p. Furthermore, given a 1-late $1/10$-blocking adversary, at least $3/4$ of the servers are useful for polynomially many rounds, w.h.p.
\end{theorem}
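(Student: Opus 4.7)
The plan is to verify agreement, availability, and validity separately, and deduce the ``furthermore'' clause from Lemma~\ref{lem:available} applied exactly as in the proof of Theorem~\ref{th:keys}: a server becomes useless in the extended rule under precisely the same condition (receiving fewer than $\ell$ replies) as in the basic rule, so the availability computation of Lemma~\ref{lem:available} transfers verbatim and yields a $3/4$-fraction of useful servers under a 1-late $1/10$-blocking adversary. Validity is immediate from inspecting the update rule: any command $x$ in $L_i$ at the end of a round was either injected by a client in that round or was present in some log $L_j$ (or append request) received from a server $j$ that was useful at the beginning of the round.

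For agreement I would separate the argument for each injected command $x$ into a dissemination phase and an ordering phase. In the dissemination phase, the amplification step sends $\sigma \log n$ append requests to uniformly random servers in the round $t_x$ when $x$ is injected, so w.h.p.\ a constant fraction of the (at least $n/4$) useful servers receive $x$ in round $t_x$ by a standard balls-and-bins argument. Subsequent log exchanges of the median rule, together with the unconditional appending via $\bar L$, then propagate $x$ to every useful server within $O(\log n)$ further rounds by a standard gossip dissemination argument. The essential point is that once a useful server has $x$ in its log, the extended rule never drops $x$: either the server receives $\ell$ replies and $x$ reappears through $\bar L$, or it becomes useless and only later re-enters as useful by reading a log that still contains $x$.

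For the ordering phase, I would reduce to the consensus analysis of Section~\ref{sec:median}. Since logs form a totally ordered set under lexicographic order, Lemma~\ref{lem:selection} and Corollary~\ref{cor:equivalency_median_rules} apply verbatim with logs playing the role of keys, so the $(6,3)$-median rule on all servers reduces to the $(3,3)$-median rule on useful servers. Once all useful servers share a common set of commands, their logs lie in a fixed totally ordered key space, and Theorem~\ref{thm:stabilizing_consensus_3median} gives agreement on the whole log in $O(\log n)$ rounds, which in particular forces agreement on the position of $x$.

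The main obstacle is the interaction between ongoing injections and the ordering process: commands injected later can arrive while earlier positions are still being agreed upon, so the effective ``value space'' of the consensus reduction is not static. I would handle this by an inductive argument on injection time: once all useful servers share an identical prefix ending in some earlier command $y$, the median step selects a common prefix and the differing $\bar L$ suffixes consist only of newly appended commands that themselves stabilize by the same inductive step in $O(\log n)$ further rounds. The delicate point to verify carefully is that the $\bar L$ step, which may differ across servers within a single round, does not destroy prefix agreement already achieved; this follows from the monotone-accumulation property of $\bar L$, namely that it only appends new commands and never reorders existing ones, so an agreed-upon prefix is preserved across all useful servers from the round in which it was attained.
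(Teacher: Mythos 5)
Your dissemination phase and the availability/validity parts match the paper's proof (the amplification step plus the gossip-rule analysis behind Corollary~\ref{cor:broadcast_command_median}, and Lemma~\ref{lem:available} for the $1/10$-blocking statement), but your ordering phase has a genuine gap. You propose to apply Theorem~\ref{thm:stabilizing_consensus_3median} to whole logs ``once all useful servers share a common set of commands.'' This fails for two reasons. First, the extended rule's update $L_i := L'_i \circ \bar{L}$ is not a median-of-inputs rule on complete logs: because of the appended $\bar{L}$, the new log is in general not one of the sampled logs, so the validity hypothesis of the $(k,\ell,f)$-rule --- and hence Theorem~\ref{thm:stabilizing_consensus_3median} --- does not apply at the level of whole logs; Corollary~\ref{cor:equivalency_median_rules} only removes the useless servers from consideration, it does not repair this. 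Second, under continuous injection the hypothesis that all useful servers share a common command set never becomes (or stays) true, so there is no round from which the whole-log value space is static. Your patch --- induction on injection time with prefix preservation --- points in the right direction but, as formulated, is sequential: command $x_j$ is only argued to stabilize $O(\log n)$ rounds after the previously agreed prefix has stabilized, and chaining these bounds adds an extra $T_M$ at every step, so it does not yield the required uniform $O(\log n)$ latency per command when commands arrive continuously.

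The paper's device (Lemma~\ref{lem:new_command_stable}) avoids both problems: fix a single command $x$ after its broadcast has completed and project every log onto its shortest prefix containing $x$. Since $\bar{L}$ only appends commands absent from the median $L'_i$ (so they land after $x$), the projection of the new log equals the projection of the chosen median; hence the set of such prefixes can only shrink ($P_{t+1}\subseteq p_t$), and lexicographic comparison of logs containing $x$ is already decided within these prefixes. The projected dynamics is therefore literally a $(6,3)$-median rule on a static, shrinking value set, and Theorem~\ref{th:keys} gives convergence of the position of $x$ within $O(\log n)$ rounds of its broadcast, independently and in parallel for every injected command, with no induction over earlier commands. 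One further small inaccuracy: your claim that a useful server ``never drops $x$'' is not guaranteed by Algorithm~\ref{alg:extended} as written, since the server's own old log is not automatically part of $\bar{L}$; the paper's dissemination argument (Theorem~\ref{thm:gossip_rule}) does not rely on such retention, and yours should not either.
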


The intuition behind the proof of Theorem \ref{th:sequence} is as follows: 
\begin{itemize}
  \item \textbf{Reaching a critical mass.} Once a client request with value $x$ has been received by some server~$i$, w.h.p.\ $\Theta(\log n)$ useful servers append $x$ to their logs in the next round.

  \item \textbf{Broadcast.} After $T_B=O(\log n)$ rounds, $x$ is part of every non-empty log $L_i$, w.h.p.

  \item \textbf{Stabilization.} Once a command occurs in the prefix of all non-empty logs, the analysis can be reduced to the median rule on single values to show that after $T_M = O(\log n)$ rounds all useful servers place $x$ at the same index, w.h.p.

\end{itemize}

\subsection{Broadcasting with the $(k,\ell)$-Gossip Rule}

We begin by examining how a command injected at an unblocked server spreads. The main insight is that once $\bigO(\log n)$ servers hold a value $x$, it disseminates to all servers exponentially fast, even under a blocking adversary. In particular, this mechanism can be used to broadcast information to unblocked servers. We formalize this by introducing the $(k,\ell)$-gossip rule, which captures how commands propagate through the logs under the extended $(k,\ell)$-median rule in a general way. Defining the $(k,\ell)$-gossip rule also allows us to reuse these results in later sections.

\begin{algorithm}
\caption{$(k,\ell)$-Gossip Rule}
    \textbf{Variables:}
    
    Each server $i$ has a local variable $x_i$. There is a subset of servers that start with $x_i = x$, all others with $x_i = x_{0}$, where $x_0$ is a dummy value. 

    \medskip

\textbf{Each server $i$ does the following each round:}
    \begin{itemize}
        \item send $k$ value requests to servers chosen uniformly and independently at random 
        \item if $x_i \not=\bot$ then for any value request received from some server $j$, send $x_i$ back to $j$ 
        \item if at least one reply was received that contains $x$, set $x_i := x$, else set $x_i := x_0$
        \item if less than $\ell$ values are received, set $x_i:=\bot$
    \end{itemize}
\end{algorithm}

If we consider only a single command $x$ that starts in some server logs, then the spreading behavior of $x$ through the server logs in the extended $(6,3)$-median rule is equivalent to the spreading of $x$ in the $(6,3)$-gossip rule. We show the following theorem. 

\begin{theorem}
    \label{thm:gossip_rule}
If the adversary is 1-late, at least $n/4$ servers are useful in each round, and at least $c \log n$ useful servers initially have $x$ for some sufficiently large constant $c \ge 1$ then the $(6,3)$-gossip rule ensures that within $T_B=\bigO(\log n)$ rounds, every useful server has $x$ in its log, w.h.p. If less than $c \log n$ useful servers initially have $x$, then within $T_B$ rounds either all servers have $x$ or none has $x$.
\end{theorem}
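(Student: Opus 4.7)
The plan is to track $Y_t$, the number of useful servers holding $x$ at the start of round $t$, and split the analysis into a growth phase and a saturation phase. A crucial consequence of the $1$-lateness of the adversary is that its blocking pattern for round $t+1$ is committed without knowledge of the random samples drawn during round $t$; hence, conditional on the adversary's choices, the outcomes at distinct non-blocked servers in round $t+1$ are the results of mutually independent random experiments, which enables Chernoff-style concentration.

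In the growth phase, assume $c\log n \le Y_t \le \beta n$ for a sufficiently small constant $\beta$. For each server $i$ non-blocked at rounds $t$ and $t+1$, the probability that $i$ becomes useful and adopts $x$ is, to first order in $Y_t/n$, roughly $6\,Y_t/n$ times the probability of receiving at least $\ell=3$ replies (which is bounded away from zero by the hypothesis $n_t\ge n/4$). Since a constant fraction $\Theta(n)$ of servers are non-blocked in both rounds, we obtain $\E[Y_{t+1}]\ge (1+\delta)Y_t$ for some constant $\delta>0$, and Chernoff gives $Y_{t+1}\ge (1+\delta/2)Y_t$ w.h.p.\ using $Y_t=\Omega(\log n)$. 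Iterating brings $Y_t$ to $\beta n$ within $O(\log n)$ rounds. Once $Y_t\ge \beta n$, the probability that a useful server at $t+1$ fails to see any $x$-reply is at most $(1-Y_t/n)^6$, a constant strictly less than $1$, and an inductive argument in which this failure rate is sixth-powered each round drives the number of useful servers lacking $x$ to zero within a further $O(\log n)$ rounds, Chernoff being applied once the expected count of ``missing'' useful servers drops below one.

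For the dichotomy when $Y_0 < c\log n$, the same multiplicative-growth inequality holds in expectation, but Chernoff no longer rules out sudden collapse. A direct branching-process analysis shows that in each round $Y_t$ either grows by a constant factor or drops, and within $O(\log n)$ rounds one of two absorbing outcomes is reached: $Y_t$ crosses the $c\log n$ threshold (whereupon the growth/saturation argument applies) or the process has died out. The most delicate point is the tight coupling between the events ``receive $\ge \ell$ replies'' and ``receive at least one $x$-reply'' when $Y_t\ll n_t$: these events are not independent, and one must verify that their joint probability remains $\Theta(Y_t/n)$, rather than collapsing to $\Theta((Y_t/n)^\ell)$, which would kill the growth rate entirely.
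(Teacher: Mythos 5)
Your overall plan (multiplicative growth up to a constant fraction, then a saturation phase, plus a random-walk/branching dichotomy below the $c\log n$ threshold, all enabled by $1$-lateness) mirrors the paper's proof, which runs the growth phase via Lemma~\ref{lem:increase_servers_with_command}, the saturation phase via Lemmas~\ref{lem:decrease_servers_without_command} and \ref{lem:servers_without_command_zero}, and the sub-threshold dichotomy via the absorbing Markov chain of Lemma~\ref{lem:dichotomy}. However, there is a genuine gap in your growth phase. You track the \emph{absolute} count $Y_t$ and assert $\E[Y_{t+1}]\ge(1+\delta)Y_t$ from two facts: a per-server adoption probability of roughly $6Y_t/n$ times a reply-success probability that is ``bounded away from zero'' (because $n_t\ge n/4$), and $\Theta(n)$ servers non-blocked in both rounds. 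These premises do not imply the inequality: with only $n/4$ servers non-blocked in both rounds and a useful fraction of $1/4$ at round $t$, the probability of receiving at least $3$ of $6$ replies is $\Pr[\mathrm{Bin}(6,1/4)\ge 3]\approx 0.17$, so your bound yields roughly $6\cdot\frac{1}{4}\cdot 0.17\approx 0.25<1$, i.e.\ expected \emph{shrinkage}. The claim can be rescued, but only by exploiting the hypothesis that at least $n/4$ servers are still useful in round $t+1$, which couples the number of non-blocked servers to the reply-success probability; you never invoke this coupling. The paper sidesteps the issue entirely by tracking the \emph{fraction} $\gamma_t$ of useful servers holding $x$ and conditioning on usefulness at $t+1$: by the Selection Lemma~\ref{lem:selection} and Corollary~\ref{cor:equivalency_median_rules}, a server that is useful at $t+1$ behaves as if it drew $3$ uniform samples from $U_t$, giving $\E[\gamma_{t+1}]\ge 1-(1-\gamma_t)^3\ge 2\gamma_t$ for $\gamma_t\le 1/3$, \emph{independently} of the blocking pattern. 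This conditioning is also exactly the device that resolves the ``delicate coupling'' between the events ``$\ge\ell$ replies'' and ``$\ge 1$ $x$-reply'' that you flag but leave unverified.

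Two smaller issues: in the saturation phase, the quantity that contracts is the fraction of \emph{useful} servers missing $x$ (cubed per round via the same $3$-sample reduction), not $(1-Y_t/n)^6$; and concentration fails once this fraction is tiny, so the paper finishes by driving the expectation down to $n^{-c}$ and applying Markov, whereas ``Chernoff once the expected count drops below one'' does not give a w.h.p.\ statement. Finally, your ``direct branching-process analysis'' for the sub-threshold case asserts precisely the content of Lemma~\ref{lem:dichotomy} (the paper cites this claim from the median-rule literature rather than reproving it); as written it is an assertion, not an argument, and it also needs the observation that state $0$ is absorbing by validity while reaching $c\log n$ hands control back to the growth phase.
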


We formally prove this theorem over a series of lemmas. First we show that when we start out from a round where some $\Omega(\log n)$ servers have a value $x$, then the number of servers that have the command in their logs will grow by a factor bigger than 1 each round, until a certain fraction of servers have $x$.

\begin{lemma}    
    \label{lem:increase_servers_with_command}
    Assume the adversary is 1-late and there are at least $n/4$ useful servers in round $t$ and $t+1$. Further, assume that at least $c \log n$ servers but at most a fraction of $\gamma_t \leq 1/3$ of the useful servers have $x$ in round $t$ (for constant $c >0$). Then a fraction of at least $\tfrac{3}{2}\cdot \gamma_t$ of the useful servers have $x$ in round $t+1$, w.h.p.
\end{lemma}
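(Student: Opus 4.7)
The plan is to compute, for each server destined to be useful in round $t+1$, the probability it holds $x$, show it is at least $2\gamma_t$, and then apply Chernoff over $U_{t+1}$ to get the desired $(3/2)\gamma_t$-fraction bound w.h.p. The central calculation exploits that conditional on sufficiently many replies arriving, at least one being $x$ is strictly more likely than $\gamma_t$, giving the amplification.

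First I would exploit $1$-lateness: the adversary's blocking decisions in rounds $t$ and $t+1$ are determined from information preceding round~$t$'s randomness, so the set $A$ of servers non-blocked in both of these rounds is fixed before the round-$t$ random requests are drawn. Any server useful in round $t+1$ must be non-blocked in round $t$ (else it sets its value to $\bot$) and non-blocked in round $t+1$ (by definition), so $U_{t+1}\subseteq A$.

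Next, for fixed $i\in A$, let $Y_i$ be the number of replies $i$ receives in round $t$ and $Z_i$ the number equal to $x$. Each of $i$'s six requests independently hits a useful server with probability $u_t := |U_t|/n$, so $Y_i\sim\mathrm{Bin}(6,u_t)$; conditional on a reply arriving, the responder is uniform over $U_t$ and equals $x$ with probability $\gamma_t$ (requests are sampled with replacement), so $Z_i\mid(Y_i=y)\sim\mathrm{Bin}(y,\gamma_t)$. Since $i\in U_{t+1}$ iff $Y_i\ge 3$, the elementary bound $1-(1-\gamma_t)^y \ge 1-(1-\gamma_t)^3 = 3\gamma_t-3\gamma_t^2+\gamma_t^3 \ge 3\gamma_t(1-\gamma_t) \ge 2\gamma_t$ for $\gamma_t\le 1/3$ yields $\Pr[Z_i\ge 1\mid Y_i=y]\ge 2\gamma_t$ for every $y\ge 3$.

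Finally, because different servers' request sets are mutually independent, conditional on the vector $(Y_i)_{i\in A}$, which deterministically fixes $U_{t+1}$, the events $\{Z_i\ge 1\}$ for $i\in U_{t+1}$ are independent, each with probability at least $2\gamma_t$. Hence the expected number of servers in $U_{t+1}$ holding $x$ is at least $2\gamma_t|U_{t+1}|\ge 2\gamma_t\cdot n/4 = \Omega(\log n)$, using $|U_{t+1}|\ge n/4$ and the hypothesis $\gamma_t|U_t|\ge c\log n$ for a suitably large constant $c$. A Chernoff bound then gives that a fraction of at least $(1-\varepsilon)\cdot 2\gamma_t \ge (3/2)\gamma_t$ of $U_{t+1}$ holds $x$, w.h.p. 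The main subtlety is the interaction between the random set $U_{t+1}$ and the conditional-independence structure on top of it; the per-server factorization of round-$t$ randomness (independent request sets across servers) is exactly what lets Chernoff be applied cleanly conditional on $U_{t+1}$.
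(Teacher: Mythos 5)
Your proposal is correct and follows essentially the same route as the paper: a per-server probability bound $1-(1-\gamma_t)^3 \ge 2\gamma_t$ for $\gamma_t\le 1/3$, followed by a Chernoff bound over the (at least $n/4$) servers useful in round $t+1$, with the mean being $\Omega(\log n)$ thanks to $\gamma_t|U_t|\ge c\log n$. The only difference is presentational: where the paper invokes Corollary~\ref{cor:equivalency_median_rules} to reduce to the $(3,3)$-rule on useful servers, you argue directly via $Y_i\sim\mathrm{Bin}(6,u_t)$ conditioned on $Y_i\ge 3$ and make the conditioning/independence structure behind the Chernoff application (over the random set $U_{t+1}$, using 1-lateness to fix the blocking) explicit, which is a valid and in fact slightly more careful justification of the same estimate.
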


\begin{proof}
    Assume server $i$ is useful in round $t+1$. Let $U_t$ be the subset of useful servers in round $t$ with $x \in L_i$ for all $i \in U_t$. Further, let $n_t := |U_t|$.
    
    By Corollary \ref{cor:equivalency_median_rules} we know that for server $i$ applying the $(k,\ell)$-gossip rule on all servers is equivalent to applying the $(\ell, \ell)$-gossip rule on the set of useful servers.

    Let $p_t$ be the probability that useful server $i$ requests the log from at least one of the servers in $U_t$ is at least 
    \begin{align*}
        p_t & = 1 - \smash{\tfrac{(n_t-\gamma_t n_t)^3}{n_t^3}} \\
        & = 1 - \big( 1 - 3\gamma_t + 3\gamma_t^2 - \gamma_t^3\big)\\
        & = 3\gamma_t - 3\gamma_t^2 + \gamma_t^3
        \geq 3\gamma_t - 3\gamma_t^2 \\ 
        & \geq 3\gamma_t - 3\gamma_t\cdot \tfrac{1}{3} \tag*{\text{$\gamma_t \leq 1/3$}}
        \geq 2\gamma_t
    \end{align*}

    For some server $i \in U_{t+1}$, let $X_{it} = 1$ if server $i$ has $x$ after round $t+1$ else $X_{it} = 0$. We have $\Pr(X_{it} = 1) \geq p_t$. Then the number of useful servers in round $t+1$ that will have $x$ in their logs is $X_t := \sum_{i \in U_{t+1}} X_{it}$, which is a sum of independent Bernoulli variables. The expected value is at least \smash{$\E[X_t] = \sum_{i \in U_{t+1}} \, p_t \geq 2 \cdot \gamma_t \cdot n_{t+1} 
    =: \mu_t$}. Further, it is $\tfrac{n}{4} \leq n_t, n_{t+1} \leq n$ by the precondition of this lemma. It follows that $\mu_t \geq  c \log n \cdot \tfrac{n_t}{n_{t+1}} \geq \tfrac{nc}{4n} \log n = \tfrac{c}{4} \log n$.  
    
    With a Chernoff bound (see Lemma \ref{chernoff1}) we obtain
    \begin{align*}
        \Pr\big(\gamma_{t+1} \leq \tfrac{3\gamma_t}{2}\big) & = \Pr\big(n_{t+1}\gamma_{t+1} \leq \tfrac{3 \gamma_t n_{t+1}}{2}\big) \\
        & = \Pr\big(X_t \leq \tfrac{3 \gamma_t n_{t+1}}{2}\big) \\
        & = \Pr\big(X_t \leq (1-\tfrac{1}{4}) \mu_t\big) \\
        & \leq \exp\big( \!-\!\tfrac{\mu_t}{32} \big) \leq \exp\big( \!-\!\tfrac{c \log n}{128} \big) = n^{-c/128}.
    \end{align*}
    This means that w.h.p.\ $\gamma_{t+1} > \tfrac{3\gamma_t}{2}$.
\end{proof}

The previous lemma implies that the fraction of useful servers that have $x$ grows exponentially, until at least a third of all useful servers have it. Next we show that once we are above the 1/3-fraction of useful servers that have $x$, we will not drop below it w.h.p.

\begin{lemma}
    \label{lem:servers_with_command_stable}
    Assume the adversary is 1-late and at least $\tfrac{n}{4}$ servers are useful in round $t$ and $t+1$. If a fraction of $\gamma_t > 1/3$ useful servers have $x$ in round $t$, then $\gamma_{t+1} > 1/3$, w.h.p.
\end{lemma}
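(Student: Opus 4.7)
The plan is to apply Corollary \ref{cor:equivalency_median_rules} to reduce the analysis of the $(6,3)$-gossip rule on the full server set to the $(3,3)$-gossip rule restricted to the useful set $U_t$, and then apply a Chernoff bound. Under this reduction, each server $i \in U_{t+1}$ is effectively drawing $\ell=3$ samples uniformly and independently from $U_t$, and has $x$ in its log at the end of round $t$ if and only if at least one of these samples currently holds $x$.

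Since a $\gamma_t$-fraction of $U_t$ holds $x$, the probability that a given $i \in U_{t+1}$ has $x$ at the end of round $t$ equals $1 - (1 - \gamma_t)^3$, which for any $\gamma_t > 1/3$ is strictly greater than $1 - (2/3)^3 = 19/27$. Letting $Y_i \in \{0,1\}$ denote the indicator that $x \in L_i$ in round $t+1$, the independence of the $Y_i$ across $i \in U_{t+1}$ follows directly from the reduction, so $Y := \sum_{i \in U_{t+1}} Y_i$ is a sum of independent Bernoulli variables with mean $\mu := \E[Y] \geq (19/27) \cdot n_{t+1}$.

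I would then show that the desired conclusion $\gamma_{t+1} = Y/n_{t+1} > 1/3$, i.e.\ $Y > n_{t+1}/3$, follows from concentration. Since $19/27$ is bounded away from $1/3$ by a constant, the target value $n_{t+1}/3$ is at most a constant fraction (namely $9/19$) of $\mu$, and together with $n_{t+1} \geq n/4$ this gives $\mu = \Omega(n)$. A standard lower-tail Chernoff bound (Lemma \ref{chernoff1}) then yields $\Pr[Y \leq n_{t+1}/3] \leq \exp(-\Omega(n))$, which is w.h.p.\ sufficient.

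The only delicate step, and the one I would double-check carefully, is the invocation of Corollary \ref{cor:equivalency_median_rules}: the corollary provides the equivalence \emph{conditioned on} the event that all servers in $U_{t+1}$ successfully applied the rule, which must be reconciled with our definition of useful servers at round $t+1$ (non-blocked at rounds $t$ and $t+1$ and having received at least $\ell$ replies in round $t$). Once this conditioning is set up cleanly, the remaining steps are just a routine computation of the per-server success probability and one application of Chernoff.
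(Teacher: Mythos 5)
Your proposal is correct and follows essentially the same route as the paper: compute the per-server probability $1-(1-\gamma_t)^3 > 19/27 > 2/3$ via the reduction of Corollary~\ref{cor:equivalency_median_rules} (which the paper invokes through the preceding lemma's proof), use independence across useful servers, and finish with a lower-tail Chernoff bound using $n_{t+1}\ge n/4$ to get an $e^{-\Omega(n)}$ failure probability. The conditioning issue you flag is handled the same way in the paper, so no substantive difference remains.
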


\begin{proof}
    Reconsider the probability $p_t$ that a server server $i$ that is useful in round $t+1$ will have $x$ after round $t$ (discussed in the proof of the previous lemma). We have $p_t = 1 - (1 - \gamma_t)^3 > 1 - (1 - \tfrac{1}{3})^3 = \tfrac{19}{27} > \tfrac{2}{3}$.
    If $X_t$ is the random number of useful servers that have $x$ in round $t+1$, we have $\E[X_t] \geq 2n_{t+1}/3 =: \mu_t$. We can then use the Chernoff bound from Lemma \ref{chernoff1} and the fact that $\mu_t \geq n/6$ (due to $n_{t+1} \geq n/4$), to bound the tail probability that less than half the expected number of useful servers have $x$.
    \[
        \Pr\big(\gamma_{t+1} \leq \tfrac{1}{3}\big) = \Pr\big(X_t \leq \tfrac{n_{t+1}}{3} \big) = \Pr\big(X_t \leq \tfrac{\mu_t}{2}\big) \leq e^{-\mu_t/8} \leq e^{-n/48},
    \]
    thus the claim holds w.h.p.
\end{proof}

The next step is to show that, once a third of the useful servers already have $x$, the proportion of servers still lacking $x$ decreases by a constant factor in each subsequent round, on expectation.

\begin{lemma}\label{lem:decrease_servers_without_command}
    Assume the adversary is 1-late and that at least $n/4$ servers are useful in the current round $t$ and next round $t+1$. Further, a fraction of $\gamma_t > 1/3$ useful servers have $x$ in round $t$. Let $\delta_t = 1 - \gamma_t$ be the fraction of useful servers that do not have $x$ in their log in round $t$. Then $\E[\delta_{t+1}] < \tfrac{1}{2}\delta_t$.
\end{lemma}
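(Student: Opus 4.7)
The plan is to reduce the analysis to the per-server probability that a useful server fails to acquire $x$ via a three-sample gossip draw from $U_t$. By the Selection Lemma (Lemma~\ref{lem:selection}) and Corollary~\ref{cor:equivalency_median_rules}, conditional on a server $i$ being useful in round $t+1$, the distribution of whether $x$ ends up in $L_i$ is the same as if $i$ had drawn three independent uniform samples from $U_t$ and set $x \in L_i$ iff at least one sample contained $x$. This is the key simplification afforded by the 1-late adversary: the set $B_{t+1}$ of non-blocked servers in round $t+1$ is fixed independently of $R_t$, and for $i \in B_{t+1}$ the events ``$i$ is useful in round $t+1$'' and ``$i$ is successful in round $t$'' coincide.

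Given this reduction, a direct calculation yields the per-server miss probability. Each of the three samples lands on a useful server holding $x$ with probability $\gamma_t$ independently, so the probability that all three samples miss $x$ is $(1-\gamma_t)^3 = \delta_t^3$. Writing $N_{t+1} = |U_{t+1}|$ and $D_{t+1}$ for the number of useful servers in round $t+1$ that do not have $x$, linearity of expectation conditional on $U_{t+1}$ (which is nonempty by the hypothesis $|U_{t+1}| \ge n/4$) gives $\E[D_{t+1} \mid U_{t+1}] = \delta_t^3 \cdot N_{t+1}$, hence $\E[\delta_{t+1} \mid U_{t+1}] = \delta_t^3$, and taking expectation over $U_{t+1}$ yields $\E[\delta_{t+1}] = \delta_t^3$.

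To close the argument, I use the hypothesis $\gamma_t > 1/3$, i.e., $\delta_t < 2/3$, so that
\[
  \E[\delta_{t+1}] = \delta_t^3 = \delta_t \cdot \delta_t^2 < \delta_t \cdot \tfrac{4}{9} < \tfrac{1}{2}\delta_t,
\]
as required. I expect the main subtlety to be the reduction step: one must verify that conditioning on ``useful in round $t+1$'' does not introduce any bias in the distribution of $x'_i$ beyond what Lemma~\ref{lem:selection} already quantifies. Fortunately the Selection Lemma is precisely the tool needed, and once the reduction to the $(3,3)$-gossip rule on $U_t$ is in place the remaining computation is elementary.
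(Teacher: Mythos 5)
Your proposal is correct and follows essentially the same route as the paper: both reduce via Corollary~\ref{cor:equivalency_median_rules} to three uniform samples from $U_t$, compute the miss probability $\delta_t^3$, and conclude with $\delta_t < 2/3$ (equivalently $\gamma_t > 1/3$) that $\E[\delta_{t+1}] = \delta_t^3 < \tfrac{4}{9}\delta_t < \tfrac{1}{2}\delta_t$. Your additional care in invoking Lemma~\ref{lem:selection} and conditioning on $U_{t+1}$ simply makes explicit what the paper's appeal to the corollary leaves implicit.
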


\begin{proof}    
    By Corollary \ref{cor:equivalency_median_rules}, using the $(k,\ell)$-gossip rule is equivalent to a useful server in round $t+1$ sampling three servers from the set of useful servers in round $t$. The probability $p_t$ of some useful server to sample no useful server that has $x$ is at most
    \begin{align*}
        p_t = \delta_t^3 & = (1-\gamma_t)^2 \delta_t \\ 
        & \hspace{-3.2mm}\stackrel{\gamma_t > 1/3}{<} \tfrac{4}{9} \cdot \delta_t < \tfrac{1}{2} \cdot \delta_t.
    \end{align*}
    The claim follows from the fact  $\E[\delta_{t+1}] = p_t$.
\end{proof}

We can now combine the two previous lemmas to prove that given that $1/3$ of useful servers already have $x$, all will have $x$ after $\bigO(\log n)$ rounds.

\begin{lemma}
    \label{lem:servers_without_command_zero}
    Assume that the adversary is 1-late and at least $n/4$ servers are useful in rounds $t$ and the next $\tau$ rounds $t+1, \dots, j+\tau$. Further assume that a fraction $\gamma_t > 1/3$ of useful servers have $x$ in round $t$. Then $\gamma_{t+\tau}=1$, w.h.p., for some suitable $\tau \in O(\log n)$.
\end{lemma}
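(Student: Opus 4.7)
The plan is to combine Lemma~\ref{lem:servers_with_command_stable} (which preserves the invariant $\gamma_{t'} > 1/3$) with Lemma~\ref{lem:decrease_servers_without_command} (which contracts the expected deficit $\delta_{t'} := 1-\gamma_{t'}$ by at least a factor of two), lift the expected contraction to a w.h.p.\ contraction via Chernoff, and then finish off the last few stragglers in a single round once $\delta_{t'}$ is polynomially small. The key enabling observation is Corollary~\ref{cor:equivalency_median_rules}: conditional on $U_{t'}$ and $U_{t'+1}$, each server $i \in U_{t'+1}$ independently samples three servers from $U_{t'}$, so the indicators ``no reply to $i$ contains $x$'' are mutually independent Bernoullis with parameter $\delta_{t'}^3$.

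First I would induct over rounds $t' = t, t+1, \ldots, t+\tau$ to maintain $\gamma_{t'} > 1/3$ throughout, w.h.p.: Lemma~\ref{lem:servers_with_command_stable} supplies the inductive step, and a single union bound absorbs the failure events across the $\bigO(\log n)$ rounds. Under this invariant, Lemma~\ref{lem:decrease_servers_without_command} combined with the independence above gives $\E[X_{t'+1}] \leq \delta_{t'} \cdot n_{t'+1}/2$, where $X_{t'+1}$ counts the useful servers in round $t'+1$ still lacking $x$.

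Next I would split the analysis into two phases. In the \emph{concentration phase}, as long as $\delta_{t'} \cdot n_{t'+1} = \Omega(\log n)$, Chernoff (Lemma~\ref{chernoff1}) gives $\delta_{t'+1} \leq \tfrac{3}{4}\delta_{t'}$ w.h.p.; iterating for $\bigO(\log n)$ rounds and union-bounding across rounds drives $\delta_{t'}$ below $c \log n / n$ for some constant $c$. In the \emph{finishing round}, with $\delta_{t'} = \bigO(\log n/n)$, we have $\E[X_{t'+1}] \leq \delta_{t'}^3 \cdot n_{t'+1} = \bigO(\log^3 n / n^2)$, and a Markov / union-bound argument over the at most $n_{t'+1}$ candidate servers yields $X_{t'+1} = 0$, hence $\gamma_{t'+1} = 1$, w.h.p. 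Altogether $\tau = \bigO(\log n)$ suffices.

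The main obstacle I expect is not the contraction itself but the bookkeeping of the adaptive useful sets: each $U_{t'+1}$ is a random object shaped by the $1$-late adversary, so Corollary~\ref{cor:equivalency_median_rules} must be invoked in every round to license the independence required by the Chernoff step, and the invariant $\gamma_{t'} > 1/3$ must be carried through by a single union bound over all $\bigO(\log n)$ rounds. If a w.h.p.\ bound with an arbitrarily large exponent is required, the finishing-round argument can be iterated a constant number of additional times to push the failure probability below any prescribed $1/n^c$.
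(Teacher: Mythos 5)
Your proposal is correct and rests on the same skeleton as the paper's proof: maintain the invariant $\gamma_{t'}>1/3$ across all $\bigO(\log n)$ rounds via Lemma~\ref{lem:servers_with_command_stable} plus a union bound, and use the contraction $\E[\delta_{t'+1}]\le \tfrac12\delta_{t'}$ from Lemma~\ref{lem:decrease_servers_without_command}, with Corollary~\ref{cor:equivalency_median_rules} supplying the independence structure. Where you diverge is in how the expected contraction is turned into a w.h.p.\ statement. The paper simply iterates the expectation bound to get $\E[\delta_{t+\tau}]\le (1/2)^{\tau}\delta_t$ and applies Markov once at the very end, exploiting that a nonzero fraction is at least $1/n$: choosing $\tau=(c+1)\log_2 n$ gives $\Pr[\delta_{t+\tau}>0]\le n\cdot(1/2)^{\tau}=n^{-c}$, so an arbitrarily large exponent comes for free by scaling $\tau$. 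You instead run a per-round Chernoff argument (valid while $\delta_{t'}n_{t'+1}=\Omega(\log n)$) to drive $\delta$ below $\bigO(\log n/n)$, and then finish with a Markov/union-bound step whose failure probability is only $\tilde{O}(n^{-2})$; since the paper's notion of w.h.p.\ demands $n^{-c}$ for every constant $c$, your concluding remark about iterating the finishing round a constant number of additional times is actually needed (and does work, because the zero state is absorbing and the conditional per-round failure probability stays $\tilde{O}(n^{-2})$ while large deviations of the small counts are superpolynomially unlikely, though you only sketch this chaining). The trade-off: the paper's route is shorter and exponent-flexible, while yours replaces the slightly informal ``iterate the expectation across adaptively chosen useful sets'' step by round-by-round high-probability progress, at the cost of the two-phase bookkeeping and the extra finishing iterations.
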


\begin{proof}
    Due to Lemma \ref{lem:servers_with_command_stable} and using the union bound from Lemma \ref{lem:unionbound_whp}, the condition $\gamma_{t+s} > 1/3$ is maintained w.h.p., in each round $t+s$, as long as $s$ is polynomially bounded in $n$. Let $\delta_t := 1- \gamma_t$.
    Since $\gamma_{t+s} > 1/3$ we can apply Lemma \ref{lem:decrease_servers_without_command} and we have that $\E[\delta_{t+s}] \leq \tfrac{1}{2} \cdot \delta_{t+s-1}$ for any polynomially bounded $s$.
    This implies $\E[\delta_{t+\tau}] \leq \big(\tfrac{1}{2}\big)^\tau  \delta_{t}$ by linearity of expectation.
    We choose $\tau = (c+1) \log_2 n$ for some constant $c>0$, then by Markov's inequality we obtain
    \[
      \Pr[\delta_{t+\tau}>0] = \Pr[\delta_{t+\tau}\ge 1/n] \le n \cdot (1/2)^{\tau} = 1/n^c.
    \]
    Consequently, we have that $\delta_{t+\tau} = 0$, w.h.p.
\end{proof}

We have the required pieces for the proof of Theorem \ref{thm:gossip_rule}.

\begin{proof}[Proof of Theorem \ref{thm:gossip_rule}]
    In the following we consider a polynomially bounded number of events each of which holds w.h.p. The union bound from Lemma \ref{lem:unionbound_whp} then implies that \textit{all} of these events hold w.h.p.
    
    If less than $c \log n$ useful servers have $x$, then we model the number of servers that have $x$ as a random walk $X_t \in \{0, \ldots ,q\}$ (with $q = \lceil c \log n \rceil$), where both the states 0 and $q$ are absorbing. We know from the proof of Lemma \ref{lem:increase_servers_with_command} that this random walk becomes more biased towards $q$ as we move closer to $q$, i.e., we have the probability  $\Pr(X_{t+1} \geq \tfrac{3}{2}X_t) \geq 1-e^{-c'X_t}$ for some $c' >0$. This lets us apply a claim from \cite{DoerrGMSS11} reproduced in Appendix \ref{sec:probabilistic_concepts} Lemma \ref{lem:dichotomy}: after $\bigO(\log n)$ rounds the random walk will be in state 0 or $q$.    
    
    If $c \log n$ useful servers have $x$, the requirement of Lemma \ref{lem:increase_servers_with_command} is satisfied, which means that the fraction $\gamma_t$ of useful servers with $x$ increases by a factor \smash{$\gamma_{t+1} = \tfrac{3\gamma_t}{2}$} with each round as long as $\gamma_t \leq \tfrac{1}{3}$, w.h.p.  
    This implies that after $\bigO(\log_{3/2} n)$ rounds, \smash{$\gamma_t > \tfrac{1}{3}$} w.h.p. We learn from Lemma \ref{lem:servers_without_command_zero} that \smash{$\gamma_t = 1$} after $\bigO(\log n)$ further rounds w.h.p.
\end{proof}

As preparation for a later section, where we introduce a protocol that recovers from an adversary capable of arbitrarily blocking servers, we present the $(6,3)$-priority rule. In essence, this rule instructs servers to select the highest-priority value among at least $\ell$ sampled values.

\begin{definition}[$(k,\ell)$-Priority Rule]
    \label{def:priority_rule}
    Each server $i$ starts with a value stored locally in variable $x_i$ where the values are in total order. In each round, each server does the following:
    \begin{itemize}
        \item send $k$ value requests to servers chosen uniformly and independently at random 
        \item if $x_i \not=\bot$ then for any value request received from some server $j$, send $x_i$ back to $j$ 
        \item if at least $\ell$ values are received, choose a subset of $\ell$ replies and set $x_i$ to the largest of the values in the replies
        \item if less than $\ell$ values are received, set $x_i:=\bot$
    \end{itemize}
\end{definition}

By Theorem \ref{thm:gossip_rule}, the $(6,3)$-priority rule ensures agreement on a single value given that only constantly many are in circulation and any maximum value initially held by $\Omega(\log n)$ useful servers will rapidly propagate to all useful servers in $\bigO(\log n)$ rounds.

\begin{corollary}
    \label{cor:broadcast_priority_rule}
If the adversary is 1-late, at least $n/4$ servers are useful in each round, and least $c \log n$ useful servers initially have the value $x$ with highest priority, for some sufficiently large constant $c \ge 1$, then the $(6,3)$-priority rule ensures that within $T_B=\bigO(\log n)$ rounds, every useful server has $x$, w.h.p. If the number of different priority values in circulation is constant, then the $(6,3)$-priority rule ensures that all servers hold a single after $T_B$ rounds, w.h.p.
\end{corollary}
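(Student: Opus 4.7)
The plan is to show that, restricted to the propagation of the globally maximum value $x$, the $(6,3)$-priority rule behaves identically to the $(6,3)$-gossip rule analyzed in Theorem~\ref{thm:gossip_rule}. The key observation is that when a server applying the priority rule receives at least $\ell=3$ replies, it adopts the maximum value among them; consequently, if $x$ is the overall maximum in circulation and appears in at least one of the replies, the server ends the round holding $x$, and otherwise it ends the round holding some value strictly smaller than $x$. This is exactly the update rule governing $x$ in the $(6,3)$-gossip rule, so the marginal dynamics of $x$ under the priority rule are stochastically identical to those in the gossip rule.

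For the first part of the corollary, this equivalence lets us invoke Theorem~\ref{thm:gossip_rule} directly, with $x$ playing the role of the tracked value. Since at least $c\log n$ useful servers initially hold $x$, the theorem guarantees that within $T_B = \bigO(\log n)$ rounds every useful server holds $x$, w.h.p.

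For the second part, suppose there are $m = \bigO(1)$ distinct priority values $v_1 > v_2 > \dots > v_m$ in circulation. I would apply the dichotomy half of Theorem~\ref{thm:gossip_rule} (which itself rests on Lemma~\ref{lem:dichotomy}) iteratively, phase by phase, to the current maximum. In the first phase, focus on $v_1$: within $\bigO(\log n)$ rounds, either every useful server holds $v_1$ (in which case we are done), or no server holds $v_1$ at all, w.h.p. In the latter case, $v_2$ becomes the new overall maximum, and its marginal dynamics now match the gossip rule for $v_2$. I repeat the argument with $v_2$, then $v_3$, and so on. Since $m$ is a constant, after at most $m$ such phases, comprising $\bigO(\log n)$ rounds in total, all useful servers hold a single value, w.h.p.\ (absorbing $m$ into the constant hidden in $T_B$).

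The main obstacle I anticipate is justifying the reduction to the gossip-rule dynamics cleanly in the presence of other values. The crucial point is that, for the current maximum value $v$, a server ends the round holding $v$ if and only if $v$ appears in at least one of its $\ell$ sampled replies, independently of which other smaller values appear alongside. Hence the ``$v$ vs.\ not-$v$'' process is driven by the same sampling experiment as in the gossip rule, and both branches of Theorem~\ref{thm:gossip_rule} (spread to all useful servers when the initial count is $\Omega(\log n)$, and the extinction-or-spread dichotomy when it is smaller) transfer verbatim. The assumption that at least $n/4$ servers are useful in each round (required by Theorem~\ref{thm:gossip_rule}) is supplied directly by the hypothesis of the corollary and is preserved across phases, so the iterated application is sound.
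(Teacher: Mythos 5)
Your proposal is correct and follows essentially the same route as the paper, which derives the corollary directly from Theorem~\ref{thm:gossip_rule} by observing that the propagation of the current maximum value under the $(6,3)$-priority rule coincides with the tracked value's dynamics in the $(6,3)$-gossip rule (both being $(k,\ell,f)$-rules reducible via Corollary~\ref{cor:equivalency_median_rules} to sampling $\ell$ useful servers). Your phase-by-phase elimination of successive maxima, using the spread-or-extinction dichotomy and absorbing the constant number of phases into $T_B$, simply spells out the details the paper leaves implicit.
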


\subsection{Solving the Stabilizing SMR Problem}

Recall that whenever a client command $x$ is injected at a non-blocked server, it will be sent to $\sigma \log n$ servers chosen uniformly and independently at random. Given that $n/4$ servers are useful in each round, it easily follows from the Chernoff bounds that at least $c \log n$ useful servers will receive $x$ in the next round, w.h.p., given that $c=\sigma/5$ and $\sigma \ge 1$ is a sufficiently large constant. Thus, Theorem~\ref{thm:gossip_rule} implies the following result.

\begin{corollary}
    \label{cor:broadcast_command_median}
    Assume that the adversary is 1-late and a command $x$ is injected at a non-blocked server. Then the extended $(6,3)$-median rule ensures that $x$ is in the log of all useful servers after $T_B = \bigO(\log n)$ rounds, w.h.p., given that at least $n/4$ servers are useful for $T_B$ rounds.
\end{corollary}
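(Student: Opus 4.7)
The plan is to reduce the statement to Theorem~\ref{thm:gossip_rule} in two steps: first, show that the initial append-request mechanism seeds $\Omega(\log n)$ useful servers with $x$ in the round immediately following injection; second, observe that from then on the propagation of the indicator ``$x \in L_i$'' under the extended $(6,3)$-median rule is governed by the same dynamics as the $(6,3)$-gossip rule applied to $x$, so that Theorem~\ref{thm:gossip_rule} applies directly.

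For the seeding step, consider the round in which a non-blocked server $i$ first receives $x$ from the client. By Algorithm~\ref{alg:extended}, $i$ sends $\sigma \log n$ append requests for $x$ to servers chosen uniformly and independently at random. Because at least $n/4$ of the servers are useful in the following round and the 1-late adversary cannot correlate its blocking decisions with $i$'s current random choices, each append request independently hits a server that will be useful next round with probability at least $1/4$. A Chernoff bound (Lemma~\ref{chernoff1}) then shows that at least $(\sigma/5)\log n$ useful servers receive $x$ as an append request, w.h.p., for $\sigma$ chosen sufficiently large. Setting $c := \sigma/5$ gives exactly the hypothesis of Theorem~\ref{thm:gossip_rule}.

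For the reduction step, I would examine the log-update rule $L_i := L_i' \circ \bar{L}$. Since $\bar{L}$ by construction contains every value that appears in any received log reply or append request but not already in the median $L_i'$, a server satisfies $x \in L_i$ at the end of a round if and only if it (a) receives at least $\ell = 3$ log replies (so it does not collapse to $\bot$) and (b) at least one of those log replies contains $x$; any additional arrival via an append request can only help. This matches the $(6,3)$-gossip rule for $x$ verbatim, and the failure mode ``fewer than $\ell$ replies $\Rightarrow L_i := \bot$'' is identical in both protocols. Hence tracking the indicator ``$x \in L_i$'' under the extended median rule is indistinguishable from tracking $x_i = x$ under the gossip rule. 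Applying Theorem~\ref{thm:gossip_rule} then yields that within $\bigO(\log n)$ further rounds every useful server has $x$ in its log, w.h.p., for a total of $T_B = \bigO(\log n)$ rounds.

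The only delicate point is the reduction in the second step: the extended median rule maintains logs of unbounded size, yet we must argue that the fate of a \emph{single} command is captured by the simpler single-value gossip dynamics. The key observation, which I would make explicit, is the monotonic insertion property of the ``median $\circ$ append-missing'' operator for any fixed $x$: once $x$ reaches a server via any reply, the updated log necessarily contains $x$. Given this reduction, the remaining work is bookkeeping: union-bounding the seeding event, the maintenance of at least $n/4$ useful servers for $T_B$ rounds (assumed in the hypothesis), and the $\bigO(\log n)$-round guarantee of Theorem~\ref{thm:gossip_rule}, which is permitted by Lemma~\ref{lem:unionbound_whp}.
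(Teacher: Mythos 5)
Your proposal is correct and follows essentially the same route as the paper: a Chernoff-bound argument showing the $\sigma \log n$ append requests seed at least $c\log n = (\sigma/5)\log n$ useful servers with $x$ in the next round, followed by the observation that the spread of a single command $x$ through the logs under the extended $(6,3)$-median rule coincides with the $(6,3)$-gossip rule, so Theorem~\ref{thm:gossip_rule} applies. Your explicit justification of the reduction (the monotonic insertion of $x$ via $\bar{L}$ and the identical $\bot$-collapse condition) only spells out what the paper asserts more tersely.
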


The next lemma assumes that broadcasting $x$ to all useful servers has been completed. It is based on an observation that is central to implementing a state machine protocol with the median rule: namely, the median rule on complete logs can be reduced to applying the median rule to prefixes of logs.

\begin{lemma}
    \label{lem:new_command_stable}
    Suppose a command $x$ appears in the log of every useful server. Then, w.h.p., after $T_M = O(\log n)$ rounds, the shortest prefixes of logs containing $x$ of all servers with non-empty logs are all equal, provided there are at least $n/4$ useful servers in each of those $T_M$ rounds.
\end{lemma}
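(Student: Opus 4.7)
The plan is to reduce this claim to the stabilizing-consensus analysis of Theorem~\ref{thm:stabilizing_consensus_3median}. For each useful server $i$, let $P_i$ denote the shortest prefix of $L_i$ containing $x$ (well-defined by assumption). The prefixes $P_i$ lie in the space of finite command sequences ending in $x$, which carries a total order via lexicographic comparison. I would show that one step of the extended $(6,3)$-median rule on the logs induces exactly one step of the $(3,3)$-median rule on the prefixes $P_i$; combined with Corollary~\ref{cor:equivalency_median_rules} (which already collapses the $(6,3)$- to the $(3,3)$-median rule on the set of useful servers), Theorem~\ref{thm:stabilizing_consensus_3median} then yields agreement on the $P_i$ in $O(\log n)$ rounds, w.h.p.

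The key structural claim I would establish first is that when a useful server picks the lexicographic median log $L'$ of three sampled logs that all contain $x$, the prefix of $L'$ ending in $x$ coincides with the lexicographic median of the three corresponding prefixes. If the three sampled prefixes are pairwise distinct, they must differ at some position on or before $x$, so the lex-order of the full logs is already determined by their prefixes, and the median log inherits the median prefix. If two of the three prefixes coincide, then those two logs compare to the third purely through their shared prefix (so either both are below or both above), and the median log is necessarily one of the two carrying that shared prefix. After forming $L'$, the algorithm appends the residual multiset $\bar L$, but every command appended is placed after $x$ (since $x$ already belongs to $L'$), so neither freshly injected client commands nor leftover commands from non-median logs can intrude on the prefix ending in $x$. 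Thus the induced dynamics on $(P_i)_{i \in U_t}$ are precisely those of the $(3,3)$-median rule on prefixes.

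A small invariant to verify alongside is that ``every useful server has $x$ in its log'' is preserved over all $T_M$ rounds: a server that is useful in round $t{+}1$ either was useful in round $t$ (in which case its next prefix is inherited from a median log that contains $x$), or just transitioned from $\bot$ by successfully receiving at least three replies, all from useful servers of round $t$ and hence all containing $x$ by induction. With this invariant in place and $|U_t|\ge n/4$ throughout, Theorem~\ref{thm:stabilizing_consensus_3median} delivers agreement on the $P_i$ in $O(\log n)$ rounds, w.h.p. I expect the main technical hurdle to be the clean handling of tie-breaking in the median-of-logs-versus-median-of-prefixes correspondence, together with the fact that the set of prefixes carried by useful servers may shift round-to-round as useless servers reenter; once the structural lemma is isolated, the rest is a direct invocation of the earlier consensus result.
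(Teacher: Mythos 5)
Your proposal is correct and takes essentially the same route as the paper: the paper also observes that, since every log contains $x$ and lexicographic comparisons between such logs are settled within their shortest prefixes ending at $x$, the extended median rule induces the plain $(3,3)$-median rule on the set of these prefixes (with appended commands never intruding on the prefix), and then invokes the single-value consensus result. Your explicit case analysis of median-of-logs versus median-of-prefixes and the preservation of the ``all useful servers hold $x$'' invariant simply spell out details the paper's shorter argument leaves implicit.
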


\begin{proof}
For each round $t$, let $U_t$ be the set of all useful servers, so $\tfrac{n}{4} \le |U_t| \le n$. For $i \in U_t$, let $L_i$ be its log and let $\pi_{ij} \le L_i$ be the shortest prefix containing $x$ (where $\le$ denotes the prefix relation). Define $p_t := \{\pi_{ij}\mid i \in U_t\}$. We show that $P_{t+1}\subseteq p_t$. 

According to the $(k,\ell)$-median rule, each server’s log $L_i$ can evolve in one of two ways: (1) If server $i$ receives fewer than $\ell$ responses (or is blocked), it sets $L_i:=\bot$, thus ceases to be useful. (2) Otherwise, server $i$ sets $L_i:=L'_i\circ \bar{L}$, where $L'_i$ is the median among sampled logs and $\bar{L}$ consists of all commands from those logs (including server $i$’s own) that are not already in $L'_i$. Since all sampled logs already contain $x$, the resulting prefix $\pi_{i,j+1}$ remains in $p_t$. Thus, $P_{t+1}\subseteq p_t$. 

Moreover, because we rely on lexicographic ordering, any commands after $x$ do not affect which prefix is chosen by the median rule. Hence servers effectively apply the $(k,\ell)$-median rule on the set of values from $p_t$ and therefore converge on a single prefix in $p_t$ by Theorem~\ref{th:keys}. 
\end{proof}

It remains to fit the pieces together to prove Theorem \ref{th:sequence}.

\begin{proof}[Proof of Theorem \ref{th:sequence}]
    \phantom{a}
    \begin{itemize}        
        \item {\bf Agreement:} By Corollary \ref{cor:broadcast_command_median} every useful server has $x$ after $T_B = \bigO(\log n)$ rounds. This satisfies the precondition of Lemma \ref{lem:new_command_stable}, which shows that after and additional $T_M = \bigO(\log n)$ rounds there is agreement on the shortest prefixes containing $x$ among servers with non-empty logs. This implies that the position of $x$ is fixed in all logs after a total of $T_E= T_B + T_M = \bigO(\log n)$ rounds.
        \item {\bf Availability}: Since $n/4$ of the servers are continuously useful, this naturally satisfies the availability condition. 
        \item {\bf Validity}: During the execution of the algorithm, a server only adds commands to its log that were either contained of other logs or were injected by some client and subsequently spread with append requests.
    \end{itemize}
The statement about the $1/10$-blocking adversary directly follows from Lemma \ref{lem:available}.
\end{proof}

\section{The Compact $(k,\ell)$-Median Rule} \label{sec:cmedian}

As logs grow, storing and transmitting entire logs becomes increasingly inefficient. Our aim is to solve the stabilizing SMR problem so that, once a command is committed, servers only need to maintain the shared state and the non-committed portion of the log. By Theorem~\ref{th:sequence}, any injected command appears at the same position in every non-empty log within at most $\tau \log n$ rounds (for a sufficiently large constant $\tau \ge 1$), with high probability. Therefore, each server needs to keep a command in its log for just $T \geq \tau \log n$ rounds; after that, the command is safe to consider committed and can be executed on the shared state.

However, some subtleties arise as achieving a consensus on the age of a command is not straightforward. Simply having the first server to receive a client request $x$ attach a timestamp does not work since client may inject a command $x$ multiple times, potentially resulting in different timestamps. This is stipulated by our protocol for submitting commands on the client side: repeatedly send the command to some server (best picked at random each time) until commitment is acknowledged by the contacted server. Resubmitting commands is in fact necessary given that client-server communication is asynchronous and servers can be blocked. Because servers no longer store the entire log, there is the danger to execute the same command multiple times if two client requests with $x$ arrive more than $T$ rounds apart.

To address these issues, the servers require each client to include a sequence number (along with its client ID) in every command. Each server maintains, for each client $c$, a variable $\textit{sn}(c)$ representing the most recently committed sequence number of $c$. Initially, $\textit{sn}(c) = 0$. A server $v$ will accept a command $x$ from $c$ with sequence number $\textit{sn}(x)$ only if $\textit{sn}(x) = \textit{sn}(c) + 1$. If this condition is met, $v$ spreads $x$ to $\sigma \log n$ other servers (for a sufficiently large constant $\sigma \ge 1$), along with the current round number to track the command’s age. Because servers may disagree on $x$’s age, the median rule is employed to settle on a single value. Once a server commits and executes $x$, it sets $\textit{sn}(c) = \textit{sn}(x)$ and stores the position of $x$ in its local sequence of committed commands in $\textit{ps}(c)$.

A further concern arises if a client submits multiple commands for the same sequence number, which could lead to ambiguity about which command to commit. To prevent this, we choose $T > 2T_B$, where $T_B$ is the time to broadcast a command. If a server sees two distinct commands for the same sequence number, it replaces both with $\bot$ (a null command). The condition $T > 2T_B$ ensures that from the moment a server first receives a command $x$ for sequence number $s$, at most $T_B$ rounds pass before $x$ reaches all other servers; consequently, any second command $x'$ for $s$ must arrive at the original server within those same $T_B$ rounds for it to be accepted. After an additional $T_B$ rounds, $x'$ would propagate and prompt all servers to switch to $\bot$ for sequence number $s$ before it is committed. However, for simplicity, we assume in the \textit{pseudo-code} that no client sends different commands for the same sequence number. 
Algorithm \ref{alg:compact} covers the details of the compact median rule. 

\begin{algorithm}
\setstretch{0.98}
\caption{The compact $(k,l)$-median rule}
\label{alg:compact}

\noindent
\textbf{Preconditions:} 
\begin{itemize}
\item $S_i$: \textbf{current} state of server $i$ (initially start state $s_0$ and seed command $x_0$) 
\item $L_i$: a log of non-committed commands together with their round numbers 
\item $sn(c)$: sequence number of  client $c$ (assumed to be included in $S_i$) 
\end{itemize}

\noindent\textbf{In every round, every server i does the following:}
\begin{itemize}
\item for every command $x$ received from a client $c$ with sequence number $sn(x)$: 
\begin{itemize}
    \item if $L_i \not= \bot$, $sn(x)=sn(c)+1$ and $x \not\in L_i$ then $i$ sends $\sigma \log n$ append requests with $x$ and current round number to servers chosen uniformly and independently at random
    \item if $L_i \not = \bot$ and $sn(x)=sn(c)$ then $i$ informs the client that a command with sequence number $sn(x)$ has already been committed 
    \item otherwise, $i$ ignores $x$
\end{itemize}
\item $i$ sends $k$ requests to servers chosen uniformly and independently at random and attaches a bit $b_i$, which is 1 if $L_i=\bot$, else 0
\item if $L_i \not= \bot$ then for any request received by $i$ from some server $j$, $i$ sends $L_i$ to $j$, and if $b_j=1$ then also $S_i$ back to $j$
\item if $i$ receives at least $\ell$ replies then it does the following:
  \begin{itemize}
  \item $i$ chooses a subset $M$ of $\ell$ of the replies uniformly at random 
  \item $i$ sets $L_i:=L'_i \circ \bar{L}$, where $L'_i$ is the median of the logs in $M$ (based on the lexicographical ordering) and $\bar{L}$ contains all commands not in $L'_i$ that are either contained in some log in $M$ or in an append request received in that round, in any order
    \item if the logs in $L_i$ contain different commands from the same client with the same sequence number then $i$ replaces all of them with $\bot$
  \item if $b_i=1$ then $i$ picks any reply $S_j$ from some $j$ in $M$ and updates its state $S_i := S_j$
  \item $i$ determines the largest prefix $P_i$ of $L_i$ where all commands have an age of at least $T$ 
  \item $i$ commits the commands in $P_i$ in the given order on $S_i$, removes $P_i$ from $L_i$, and increments the sequence numbers of the client for each command in $P_i$ \textit{(if $L_i$ becomes empty append a dummy command $x_d$)}
  \end{itemize}
\item if $i$ receives less than $\ell$ replies then it does the following:
  \begin{itemize}
  \item $i$ sets $L_i:=\bot$
  \item if at least one reply is received then $i$ picks any one of them and performs the updates on the shared space and sequence numbers of the clients as described above
  \end{itemize}
\end{itemize}
\end{algorithm}

We provide the formal proof that the compact $(6,3)$-median rule solves the commitment problem.

\begin{theorem}\label{th:compact}
If the adversary is 1-late and at least $n/4$ servers are useful in each round, the compact $(6,3)$-median rule solves the commitment problem (Definition~\ref{def:commitment_problem}) in $\bigO(\log n)$ rounds, w.h.p. Furthermore, given a 1-late $1/10$-blocking adversary, at least $3/4$ of the servers are useful for polynomially many rounds, w.h.p. The storage required for each server corresponds to the maximum number of commands injected within $O(\log n)$ rounds, the number of clients, and the maximum size of the shared state.
\end{theorem}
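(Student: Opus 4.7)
The plan is to reduce Strong Safety and Liveness to the stabilizing SMR result (Theorem~\ref{th:sequence}) while handling the new ingredients of the compact rule---sequence numbers, age-based commitment, and the $\bot$-replacement rule---with a short direct argument. Availability and the $3/4$ useful-fraction claim then follow directly from Lemma~\ref{lem:available}, and the storage bound drops out of the fact that the age threshold is $T=\Theta(\log n)$.

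First I would argue Strong Safety and Liveness by fixing the age threshold so that $T \ge T_B + T_M$, where $T_B$ and $T_M$ are the broadcast and stabilization times from Corollary~\ref{cor:broadcast_command_median} and Lemma~\ref{lem:new_command_stable}. For any command $x$ first accepted at round $t_0$, Corollary~\ref{cor:broadcast_command_median} gives that by round $t_0+T_B$ every useful server has $x$ in its log, w.h.p., and Lemma~\ref{lem:new_command_stable} then implies that by round $t_0+T_B+T_M$ the shortest prefix containing $x$ is identical across all useful servers. Because the round tag carried by each entry is part of the lexicographic ordering used by the median rule, this agreement is on the identical sequence of \emph{(command, round)} pairs; hence the local age of every entry in the stabilized prefix is the same at every useful server. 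Since $T \ge T_B+T_M$, the age threshold for any command in the prefix is only crossed after stabilization, so all useful servers commit exactly the same prefix in the same order, in the same round. This yields both Strong Safety and the latency bound $T_B+T=\bigO(\log n)$ required by Liveness.

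The point that needs most care is a Byzantine client injecting two distinct commands $x,x'$ for the same sequence number $s$. Let $t_0$ be the earliest round at which some non-blocked server accepts one of them, say $x$. If a server first accepts $x'$ in some round $t_1 \le t_0+T_B$, then by round $t_1+T_B \le t_0+2T_B < t_0+T$ every useful server has \emph{both} $x$ and $x'$ carrying sequence number $s$ for client $c$ in its merged log (another application of Corollary~\ref{cor:broadcast_command_median}), and the $\bot$-replacement rule fires simultaneously at every useful server strictly before $x$ could be committed. If instead $t_1 > t_0+T_B$, then by round $t_1$ every useful server already has $x$ with $sn(x)=s$ in its log; by the stabilization argument above, $x$ gets committed at every useful server and $sn(c)$ is incremented to $s$ before any admission check for $x'$ could succeed, so $x'$ is uniformly rejected. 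The inequality $T>2T_B$ is precisely what rules out an intermediate window in which some useful server commits $x$ while another is still processing a conflicting $x'$.

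Finally, Availability for polynomially many rounds under a 1-late $1/10$-blocking adversary is immediate from Lemma~\ref{lem:available}; a server whose $L_i$ became $\bot$ restores a full state via the $b_i$-bit piggyback the first time it receives a reply from a useful server. The storage bound follows by inspection of the protocol: each server keeps the shared state $S_i$ (which subsumes the table $sn(c)$ for the clients seen so far) and the log $L_i$, whose surviving entries were necessarily injected within the last $T+T_B=\bigO(\log n)$ rounds, as older entries have already been committed and purged.
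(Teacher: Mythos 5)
Your main line of argument is the same as the paper's: broadcast in $T_B$ rounds (Corollary~\ref{cor:broadcast_command_median}), stabilization of the prefix containing $x$ in $T_M$ further rounds (Lemma~\ref{lem:new_command_stable}), agreement on the attached round number via the median rule, and then choosing $T > T_B + \max\{T_B,T_M\}$ so that every useful server crosses the age threshold for the same prefix in the same round; availability under a $1/10$-blocking adversary comes from Lemma~\ref{lem:available} and the storage bound by inspection. Up to this point the proposal is correct and essentially reproduces the paper's proof, which likewise reduces everything to the analysis of the extended median rule.

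Where you go beyond the paper is the two-case analysis for a Byzantine client submitting distinct commands $x,x'$ with the same sequence number $s$, and that is where there is a gap. The paper explicitly assumes this situation away in the pseudo-code and does not prove it; your attempt to cover it fails in the second case. If $x'$ first arrives at a useful server at a round $t_1$ with $t_0+T_B < t_1 < t_0+T$, then $x$ has \emph{not} yet been committed, so $sn(c)$ has not been incremented: the admission check $sn(x')=sn(c)+1$ still passes, and under the literal test ``$x'\notin L_i$'' the command $x'$ is accepted, contrary to your claim that it is ``uniformly rejected.'' For $t_1$ close to $t_0+T$ the spread of $x'$ (which takes up to $T_B$ rounds) straddles the commitment round of $x$, so some useful servers may have triggered the $\bot$-replacement while others commit $x$ first --- exactly the race you assert $T>2T_B$ rules out, but your argument only excludes it when $x'$ is accepted within the first $T_B$ rounds after $t_0$. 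The paper's prose avoids this by (implicitly) rejecting a second command for an already-logged (client, sequence-number) pair, i.e., a duplicate check at the sequence-number level rather than on command identity; without that reading, or without the paper's simplifying assumption, your case~2 does not establish strong safety. Also, in case~1 the replacement does not fire ``simultaneously'' at all useful servers (they receive $x'$ in different rounds); what you actually need, and what your bound does give there, is that all replacements occur before the commitment round.
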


\begin{proof}
From Corollary \ref{cor:broadcast_command_median} we know that as long as the fraction of useful servers is at least $1/4$, it takes at most $T_B$ many rounds, w.h.p., until a command injected at a useful server has been broadcast to all useful servers, for some $T_B=\Theta(\log n)$. Thus, a command $x$ can only be accepted by a useful server if the first time it was accepted by a useful server is at most $T_B$ rounds ago. Therefore, it follows from the analysis of the $(6,3)$-median rule that after $T_B+T_M$ many rounds, all useful servers agree on the age of a command $x$ in their log, where $T_M$ is the time needed to reach a consensus in the $(6,3)$-median rule, w.h.p. If $T>T_B+\max\{T_B,T_M\}$, this implies that all useful servers will execute a command of age $T$ in their log in the same round, and therefore, strong safety is satisfied. Thus, Theorem~\ref{th:compact} follows from the analysis of the extended median rule.
\end{proof}

Of course, further optimizations can be done by sending only differentials, i.e., those parts of the shared state to server $j$ that are outdated in $j$, which can be achieved by memorizing the last positions in the sequence of committed commands that caused updates to certain parts of the shared state and remembering the last position of a command that contributed to the currently known shared state (not included in the protocol for brevity). 

\section{The Compact $(k,\ell)$-Median Rule with Certificates} \label{sec:proofs}

In typical blockchain applications, we want a client to be able to prove to other clients that a command issued by it has been committed. In cryptocurrencies, for example, that would allow a client to check whether a payment has been completed. 
However, in the previous section we allowed the servers to forget about already committed commands.
Certainly, the easiest way to provide a proof is that a server sends a signed message to the client when it tells the client that its command was committed.
However, signature keys can be stolen or become outdated, rendering previously signed messages useless.

A much safer way is to use a collision-resistant one-way hash function that is shared among the servers. Roughly speaking, the servers will store a small number of root hashes of a Merkle forest of the overall log of committed commands, whereas the clients will store appropriate hash chains of their committed commands. As we shall see, this allows the clients to prove to the servers that a command by that client was indeed committed, without having to know the whole log. Note that if a client loses the hash chain associated with some command $x$, the certification of commitment for $x$ is forfeit as well. We assume that it is in the interest of a client to safely store these hash chains. However, even in the event of loss of one such hash chain associated with a command $x$ does not affect the certification of the other commands, as the certification depends only on the servers keeping their hashes persistent.

We now describe the details of our construction. Suppose that $m$ commands have already been committed from the viewpoint of the useful servers (which have up-to-date information about the shared state and the log as described in the compact median rule). Let $m=2^{i_1}+2^{i_2}+\ldots+2^{i_a}$ be the unique decomposition of $m$ into powers of 2, where $i_1>i_2>\ldots>i_a$ and $a \in \mathbb{N}_0$. Consider the Merkle hash forest that consists of $a$ Merkle hash trees, where the $b$th Merkle hash tree, $b \in \{1,\ldots,a\}$, is a complete binary tree with $2^{i_b}$ leaves representing the commands at positions $\sum_{j<b} 2^{i_j} +1$ to $\sum_{j\le b} 2^{i_j}$ in the sequence of committed commands. We require that every useful server stores the following information about the Merkle hash forest and the clients:
\begin{itemize}
    \item The root hashes $h_1,\ldots,h_a$ of all Merkle hash trees.
    \item For each client, hash chains for the last two commands that got committed for that client together with their positions in the sequence of committed commands. For each such command $x$, the hash chain $hc(x)$ consists of the sibling hashes of $x$ and all ancestors of $x$ in its Merkle hash tree. 
\end{itemize}
Next, we describe the information stored by the clients. Suppose that the last two commands committed for some client $c$ are $x_1$ and $x_2$, where $x_1$ was committed earlier than $x_2$. Then $sn(x_1)=sn(c)-1$ and $sn(x_2)=sn(c)$, where $sn(c)$ is the largest sequence number committed for client $c$ from the viewpoint of the server. Recall that according to the compact median rule, a useful server that receives a command $x$ with $sn(x)=sn(c)$ from client $c$ informs $c$ that $x$ has been committed, which is the case here if $x=x_2$. In addition to that, the compact median rule with certificates will also send $hc(x_1)$ and $p(x_1)$ (the position of $x_1$ in the committed command sequence) to $c$. Each client will store all hash chains and positions received from the servers. It might happen that it receives hash chains of different length for the same command, in which case it keeps the longer one (as it is more up-to-date).

Suppose now that a client $c$ wants a server to verify that a command $x$ has been committed. Let $x_1,\ldots,x_k$ be the commands for which $c$ has already received hash chains from some servers and let $sn$ be the currently used sequence number of $c$. 
Certainly, for any $k \ge 1$, $sn = k+2$ since the largest sequence number for which $c$ was informed that the corresponding command got committed is $k+1$. Suppose that $c$ wants the servers to acknowledge that command $x$ with $sn(x) \in \{1,\ldots,k+1\}$ has been committed. Then it sends the certificate $(x,p(x),\overline{hc(x)})$ to the servers, where $\overline{hc(x)}$ is the longest hash chain for $x$ that $c$ can construct from the hash chains 
$hc(x_1),\ldots,hc(x_k)$. We can show that this certificate is indeed sufficient for the servers to verify that $x$ has been committed.

\begin{theorem}
\label{the:committed}
For any $x \in \{x_1,\ldots,x_{k+1}\}$, $(x,p(x),\overline{hc(x)})$ will allow any useful server to verify that $x$ has indeed been committed.  
\end{theorem}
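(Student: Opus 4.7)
The plan is to split the argument into two parts: (a) a soundness step showing that any hash chain that successfully folds up to one of the server's stored roots $h_1,\ldots,h_a$ is, by collision resistance of $h$, a proof of commitment; and (b) a completeness step showing that from the stored data alone the client can always construct such a chain for every $x \in \{x_1,\ldots,x_{k+1}\}$.

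For part (a), I would note that $p(x)$ uniquely determines the current tree in which $x$ must lie, namely the $b$-th one for the unique $b$ with $\sum_{j<b}2^{i_j} < p(x) \le \sum_{j\le b}2^{i_j}$, together with the in-tree leaf index of $x$. Starting from $h(x)$ and folding in the siblings of $\overline{hc(x)}$ in the order dictated by the binary expansion of that in-tree index, the server arrives at a hash $H$ which it compares with $h_b$. Collision resistance then forces $h(x)$ to be precisely the leaf at position $p(x)$ in the $b$-th tree, and hence $x$ to be committed at that position.

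For part (b), the main structural ingredient is the monotonic growth of the Merkle forest: as $m$ increases, trees only merge and never split, so every internal hash, once produced, remains an internal node of every later forest, and the true hash chain for a fixed position at time $t_2 > t_1$ extends the one at time $t_1$ as a prefix. Each stored $hc(x_j)$ was the true chain at its receipt time and hence still folds to a (possibly no longer top-level) subtree root of the current forest. The client assembles $\overline{hc(x_j)}$ by combining the low-level siblings from the stored $hc(x_j)$ with the high-level siblings supplied by the most recent chain $hc(x_k)$, which, having been received at the acknowledgment of $x_{k+1}$, already reflects a forest that contains $x_{k+1}$. For $x = x_{k+1}$, whose chain has never been delivered, the same procedure applies, using $p(x_{k+1})$ together with the ancestors of $x_k$ derivable by folding $hc(x_k)$ above their lowest common ancestor.

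The main obstacle is the combinatorial claim that the union of $hc(x_1),\ldots,hc(x_k)$ together with the known positions really does supply every sibling on the path from $p(x_j)$ to a current root, for every $j \le k+1$. The delicate case is $x_{k+1}$: here one argues by induction on tree levels that each missing sibling is either directly stored in some $hc(x_{j'})$ or is computable by folding a prefix of such a chain, exploiting monotonicity to guarantee that hashes computed from older chains still match those in the current forest, and that $hc(x_k)$, received after $x_{k+1}$ was committed, reaches a subtree root that is already an ancestor of $p(x_{k+1})$. Once this combinatorial claim is in place, the soundness argument of part (a) closes the proof.
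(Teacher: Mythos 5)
Your soundness step (a) is fine and is essentially implicit in the paper, but your completeness step (b) has a genuine gap: you assume the client can always assemble a chain that folds all the way up to one of the \emph{currently stored} top-level roots $h_1,\ldots,h_a$, with the ``high-level siblings supplied by the most recent chain $hc(x_k)$.'' But $hc(x_k)$ is exactly as stale as everything else the client holds: it was sent at the moment $x_{k+1}$ was acknowledged, and after that moment arbitrarily many commands of \emph{other} clients may be committed, merging the tree containing $x_j$ into larger trees. The siblings added to the authentication path above that merge point are hashes of subtrees the client has never seen, so no combination of $hc(x_1),\ldots,hc(x_k)$ can reach a current root; your claim that each missing sibling ``is either directly stored in some $hc(x_{j'})$ or is computable by folding a prefix of such a chain'' fails precisely for these post-acknowledgment siblings. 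This is why the paper does not anchor verification at a current root: Lemma~\ref{lem:hash-chain} only guarantees that $\overline{hc(x_j)}$ reaches $lc(lca(x_j,x_{k+1}))$, and the \emph{server} closes the remaining gap using its own per-client state, namely the stored (and kept up-to-date) hash chain $hc(x_{k+1})$ of the client's last committed command: either $lca(x_j,x_{k+1})$ does not yet exist, in which case $lc(lca(x_j,x_{k+1}))$ is itself a root of the forest and is checked against a stored root hash, or it does exist and $h(lc(lca(x_j,x_{k+1})))$ equals a sibling hash contained in the server's $hc(x_{k+1})$. Omitting this server-side half of the verification is not a presentational difference; with your purely root-based check the scheme is simply incomplete once the forest has grown.

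A second, smaller error is your treatment of $x=x_{k+1}$: the client does not know $p(x_{k+1})$ at all (position and chain for a command are only delivered together with the acknowledgment of the \emph{next} command), and the siblings of $x_{k+1}$'s path are generally leaves/subtrees of other clients' commands, so they cannot be derived from $hc(x_k)$ by folding. The paper instead disposes of this case trivially: the server still stores $x_{k+1}$ as one of the client's last two committed commands, so no client-supplied chain is needed. To repair your argument you would need to (i) retarget the completeness claim from ``current root'' to $lc(lca(x_j,x_{k+1}))$ and prove it by the induction on $k$ as in Lemma~\ref{lem:hash-chain}, and (ii) add the server-side verification of that anchor via the stored root hashes or the stored $hc(x_{k+1})$.
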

\begin{proof}
If $x=x_{k+1}$ then $c$ does not yet have any information about $p(x)$ and $\overline{hc(x)}$. However, in that case any useful server still stores $x$ as one of the two last commands committed for client $c$ so that the verification is trivial.

Thus, suppose that $x=x_j$ for some $j \le k$. Let $lca(x_j,x_{k+1})$ be the least common ancestor of $x_j$ and $x_{k+1}$ under the assumption that the overall number of committed commands is large enough so that $x_j$ and $x_{k+1}$ are leaves of the same Merkle hash tree. Furthermore, given a node $v$ in a Merkle hash tree, let $lc(v)$ be its left child (which is the child leading to commands with lower positions in the ordering of the committed commands). While $lca(x_j,x_{k+1})$ might not yet exist in the Merkle hash forest, $lc(lca(x_j,x_{k+1})$ does exist in a sense that the Merkle hash (sub)tree rooted at $lc(lca(x_j,x_{k+1})$ is completely occupied (since otherwise $x_{k+1}$ would have to be part of that subtree).

\begin{lemma} \label{lem:hash-chain}
For every $j \le k$, $\overline{hc(x_j)}$ contains all sibling hashes from $x_j$ required to compute the hash value of $lc(lca(x_j,x_{k+1}))$.
\end{lemma}
\begin{proof}
We prove the lemma by induction on the number $k$ of commands for which client $c$ has already received hash chains. For the base case $k=1$, note that $x_1$ must be a leaf in the subtree rooted at the left child of $lca(x_1,x_2)$ while $x_2$ is a leaf in the subtree rooted at the right child of $lca(x_1,x_2)$. Thus, when a useful server informs $c$ that $x_2$ has been committed, all hashes of the subtree of $lc(lca(x_1,x_2))$ are already known to the server, which means that $hc(x_1)$ contains all sibling hashes starting from $x_1$ that are needed to compute the hash of $lc(lca(x_1,x_2))$.

Suppose now that the lemma already holds for $k-1$. To prove it for $k$, consider any $j<k$. ($j=k$ is trivial because we can set $hc(x_j):=\overline{hc(x_j)}$.) We distinguished between two cases. If $lca(x_j,x_k)=lca(x_j,x_{k+1})$ then the induction follows trivially since $lc(lca(x_j,x_k)) )=lc(lca(x_j,x_{k+1}))$.
Thus, assume that $lca(x_j,x_k) \not= lca(x_j,x_{k+1})$, which implies that $lca(x_j,x_k)$ is a descendant of $lca(x_j,x_{k+1})$. If a useful server informs $c$ that $x_{k+1}$ has been committed, all hashes of the subtree rooted at $lc(lca(x_k,x_{k+1}))$ are already known to the server, which implies that $hc(x_k)$ contains all sibling hashes starting from $x_k$ that are needed to compute the hash of $lc(lca(x_k,x_{k+1}))$. Since the path from $x_k$ to $lca(x_k,x_{k+1})$ passes through $lca(x_j,x_k)$, the old hash chain $\overline{hc(x_j)}$ for $k-1$ can be extended with the help of $hc(x_k)$ so that it leads all the way to $lc(lca(x_k,x_{k+1}))=lc(lca(x_j,x_{k+1}))$, which proves the lemma.
\end{proof}

Lemma~\ref{lem:hash-chain} implies that the server receiving $c$'s certificate can compute $h(lc(lca(x_j,x_{k+1})))$. In order to show that a useful server can verify the correctness of that hash value, we distinguish between two cases. If $lca(x_j,x_{k+1})$ does not yet exist, which is the case if some leaves of $lca(x_j,x_{k+1})$ are still missing because the number of committed commands is too small, then $lc(lca(x_j,x_{k+1}))$ must be the root of a Merkle hash tree. Hence, the server can verify $h(lc(lca(x_j,x_{k+1})))$ by comparing it with the corresponding root hash. If $lca(x_j,x_{k+1})$ does exist, the server has already been able to compute all hashes in the tree rooted at $lca(x_j,x_{k+1})$, which implies that $hc(x_{k+1})$ stored in the server contains all sibling hashes from $x_{k+1}$ till the right child of $lca(x_j,x_{k+1})$. Since the sibling hash of the right child of $lca(x_j,x_{k+1})$ is equal to $h(lc(lca(x_j,x_{k+1})))$, the server can verify the correctness of the certificate in this case as well, which proves the theorem.
\end{proof}

Note that the correctness of our certification scheme depends on the fact that the hash chain of $x_{k+1}$ is still stored in useful servers. This might not be the case if $\overline{hc(x)}$ in the certificate of $x$ becomes outdated, i.e., further commands of client $c$ get committed before the certificate for $x$ arrives at a useful server. This can be prevented if the client either waits with new commands until its request to verify the commitment of $x$ has been acknowledged by a server or it piggybacks such verification messages with requests to commit new commands.

\section{Recovery} \label{sec:recovery}

We now introduce a \textit{recovery protocol} to provide resilience against an adversary capable of blocking an arbitrary number of servers for an unlimited duration (i.e., the recovery problem; see Definition \ref{def:recovery_problem}). We call any period with less than $n/4$ useful servers a \textit{surge}. Any period where at least $n/4$ servers are useful is considered \textit{benign}. While the network will deteriorate into a death spiral, i.e., all servers become useless, somewhat above $n/4$ servers (see Lemma \ref{lem:spiral}), we can still guarantee proper functioning of our prior protocols as long as $n/4$ servers are useful.
However, our previous protocols are clearly not designed to handle a situation where all servers become useless, as they will lose all information on non-committed commands.

Therefore, we require servers to maintain persistent information on their states and logs, while at the same time keeping the protocol \textit{compact}, i.e., allowing servers to eventually commit commands to their shared state and subsequently delete them.
The primary challenge is that during a surge the adversary could allow some servers to progress and commit commands while preventing others from learning these. If the adversary then reverses this situation and permits the previously blocked servers to progress, those servers might commit a different set of commands, in violation of Definition \ref{def:recovery_problem}. Our goal in this section is to ensure that the sequence of committed commands does not fork even under a surge and quickly return to a valid solution for the commitment problem (Definition \ref{def:commitment_problem}), once the adversary becomes benign again. Formally, we show the following theorem.

\begin{theorem}
    \label{thm:recovery_problem}
    The recovery protocol (Algorithm \ref{alg:recovery_protocol}) solves the recovery problem (Def.\ \ref{def:recovery_problem}) and returns to a solution for the commitment problem (Def.\ \ref{def:commitment_problem}) $\bigO(\log n)$ rounds after a surge.
\end{theorem}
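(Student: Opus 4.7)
The plan is to decompose the proof into two tightly coupled parts: (i) establishing monotonicity under an arbitrary blocking schedule, and (ii) bounding the time to re-enter the regime of the commitment problem once the adversary drops back below the $1/10$-blocking threshold. Since the core difficulty is that a surge can empty the useful set entirely, the two subprotocols must be argued to preserve enough information during a surge and to recombine it afterwards.

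For monotonicity, I would argue that at every round each useful server's committed prefix is stable by construction: committed commands have been removed from the log but their effect persists in the shared state, and the checkpoint subprotocol disseminates snapshots of this state using the $(6,3)$-priority rule on a checkpoint index. The crucial claim is that two useful servers $i, j$ can never commit \emph{different} commands at the same position. Using the age requirement $T > 2T_B$ from Section \ref{sec:cmedian} together with Corollary \ref{cor:broadcast_command_median}, any command that is about to be committed must have been held by every useful server for at least $T_B$ rounds. An adversary wishing to produce a divergent commitment would therefore have to (a) let $i$ finish committing $x$ at position $p$, then (b) isolate $i$, and (c) allow a disjoint group to commit $x' \neq x$ at position $p$; but by Lemma \ref{lem:spiral} the moment the useful fraction drops below $1/3-\varepsilon$ the surviving servers spiral to uselessness within $O(\log\log n)$ rounds, losing their logs, so no second committing group can form without first rehydrating from checkpoints, which by construction only carry the already-committed prefix of $i$.

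For recovery, I would track two phases after the adversary reverts to $1/10$-blocking at some round $t_0$. First, using Lemma \ref{lem:available} together with the seed/checkpoint refresh built into the protocol, the useful fraction climbs back above $3/4$ within $O(\log n)$ rounds, w.h.p. Second, I need to argue that all such newly useful servers converge on the same shared state and the same pending log. The checkpoint subprotocol uses the $(6,3)$-priority rule ordered by checkpoint index; by Corollary \ref{cor:broadcast_priority_rule} it converges in $O(\log n)$ rounds provided only constantly many distinct checkpoints are in circulation, which follows from monotonicity. The reset subprotocol then applies the $(6,3)$-median rule to the binary reset decision, which converges in $O(\log n)$ rounds by Theorem \ref{th:keys}. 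After this, every useful server satisfies the preconditions of Theorem \ref{th:compact}, so the compact median rule resumes solving the commitment problem.

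The hardest step will be step (i), and specifically closing the window where a handful of servers are about to commit a command just as the useful fraction is collapsing. The subtlety is that \emph{locally} a server cannot tell whether it sits in a death spiral or in a healthy majority, so it may commit a command that other soon-to-be-useful servers have never seen. I intend to resolve this by showing that the checkpoint subprotocol only adopts a checkpoint supported by a $\Theta(\log n)$ gossip quorum — so any checkpoint that survives a surge was broadcast widely before the spiral closed — and by exploiting Lemma \ref{lem:permanently_blocking} to rule out the symmetric scenario in which the adversary indefinitely sustains two isolated islands of useful servers: any sustained $3/10$-blocking of one side triggers the death spiral on it within $O(\log n)$ rounds. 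Combining these two structural constraints with the $T > 2T_B$ age rule forces any committed prefix that survives a surge to be a prefix of any committed prefix formed after recovery, which yields monotonicity and, together with the convergence bounds above, the overall $O(\log n)$ recovery time.
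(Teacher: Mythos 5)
Your overall skeleton (split into monotonicity plus post-surge recovery, use the death spiral of Lemma~\ref{lem:spiral} and the permanent-blocking result of Lemma~\ref{lem:permanently_blocking} to rule out sustained ``islands'', spread checkpoints with a priority rule) is the same as the paper's, but the key step that makes monotonicity go through is not established by your argument. First, your central mechanism --- ``the checkpoint subprotocol only adopts a checkpoint supported by a $\Theta(\log n)$ gossip quorum'' --- is not a feature of Algorithm~\ref{alg:recovery_protocol}: a server adopts any received checkpoint with a larger window number immediately, with no quorum certificate, so this property cannot be ``shown by construction''. What the paper proves instead (Lemma~\ref{lem:good_window_consensus_checkpoint}) is an induction over $T$-windows: at the start of every window at least $(1/3-\varepsilon)n$ servers hold the newest checkpoint, and at the end of any window in which a new checkpoint is \emph{created}, every creator already holds that newest checkpoint --- because the adversary faces a dichotomy: either it continuously blocks more than $(1/3-\varepsilon)n - c\log n \ge 3n/10$ of the holders, which by Lemma~\ref{lem:permanently_blocking} drives every server to $\bot$ so that no new checkpoint is taken in that window, or at least $c\log n$ unblocked holders remain and Corollary~\ref{cor:broadcast_priority_rule} spreads the newest checkpoint before the window ends. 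Monotonicity then follows because commitments happen only between windows and consist exactly of the pre-committed prefix $P$ of the newest checkpoint (Lemmas~\ref{lem:states_logs_checkpoints_consistent} and~\ref{lem:monotonicity}). Your alternative route --- arguing via $T > 2T_B$ and Corollary~\ref{cor:broadcast_command_median} that no two servers commit different commands at the same position --- does not work during a surge, since those broadcast guarantees require $n/4$ useful servers, which is precisely what the adversary destroys; the fork scenario to exclude is a group committing, dying, and a later group recovering from a \emph{stale} checkpoint, and only the window-indexed causal-chain induction excludes it.

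A second, smaller inaccuracy: the reset subprotocol is not ``the $(6,3)$-median rule on a binary value'' but the $(6,3)$-priority rule (Definition~\ref{def:priority_rule}) with \texttt{no-reset} taking precedence. This asymmetry is load-bearing: it gives the mutual exclusivity of \texttt{reset} and \texttt{no-reset} at window boundaries (Lemma~\ref{lem:reset_states_mutually_exclusive}) and, during benign periods, guarantees that \texttt{no-reset} wins whenever $\Omega(\log n)$ servers hold it, so the adversary cannot tip the system into spurious rollbacks that wipe uncommitted logs and break liveness (Lemma~\ref{lem:recovery_protocol_solves_commitment_problem}); with a plain binary median rule this protection is lost. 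Your recovery-phase timing argument (useful fraction returns via Lemma~\ref{lem:available}, checkpoints and reset states converge in $O(\log n)$ rounds, then normal operation resumes) matches the paper's Lemma~\ref{lem:recovery} in spirit, but note that convergence of the checkpoints does not need ``constantly many distinct checkpoints in circulation'' (which does not follow from monotonicity); it needs only that the checkpoint with the largest window number is held by $\Omega(\log n)$ servers, which again comes from the dichotomy above.
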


\subsection{The Recovery Protocol}

We begin with a brief overview of the recovery protocol, followed by pseudo-code (Algorithm \ref{alg:recovery_protocol}).
In the recovery protocol, time is divided into periods of $T$ rounds, referred to as $T$-\textit{windows}. Whenever a $T$-window ends and the protocol behaves as expected, servers commit commands and create a new \textit{checkpoint} that reflects the current state of servers. 
A checkpoint $C_i$ of server $i$ is a triple $(S,P,W)$, where $S$ is the current state of the state machine, $P$ is a sequence of sufficiently old (but not yet committed) commands, and $W$ is the $T$-window index when the checkpoint was taken. The parameters $S$ and $P$ represent a valid state of server~$i$ at some point, while $W$ provides a priority ordering (where newer checkpoints take precedence).
We also maintain a reset variable $R_i \in \{ \texttt{reset}, \texttt{no-reset}, \bot\}$ for each server $i$, indicating whether $i$ intends to revert to a previous checkpoint. If $R_i = \texttt{no-reset}$, server~$i$ expects normal operation and a benign adversary. If $R_i = \bot$, server~$i$ believes a surge is currently in progress. If $R_i = \texttt{reset}$, server~$i$ considers the surge over and that a rollback to a prior checkpoint is needed.

The complete recovery protocol can be viewed as two separate protocols: the \textit{reset protocol} and the \textit{checkpoint protocol} that run alongside the extended median rule, and incorporates the functionality of the compact median rule with certificates. The \textit{reset protocol} is governed by each server’s reset state, $R_i$. \textit{At the end} of each $T$-window, if server $i$ has an empty log $L_i$ (from the extended median protocol), it concludes that it was blocked and sets $R_i = \texttt{reset}$. Otherwise, $R_i = \texttt{no-reset}$. 
\textit{During} a $T$-window, in every round, each server~$i$ queries the reset states $R_j$ of $k$ randomly chosen servers. If at least $\ell$ responses are received and any of them is \texttt{no-reset}, the server sets $R_i = \texttt{no-reset}$; if all responses are \texttt{reset}, it sets $R_i = \texttt{reset}$. If fewer than $\ell$ responses are received, $R_i$ becomes $\bot$, and a server in the $\bot$ state does not answer queries about its own state $R_i$.
This design ensures that if the adversary surges during a $T$-window, the system falls into a “death spiral” where all servers eventually switch to $R_i = \bot$. Otherwise, the servers converge on either \texttt{reset} or \texttt{no-reset}. Finally, if a server ends a $T$-window with $R_i = \texttt{reset}$, it rolls back to a previously agreed-upon checkpoint.

The \textit{checkpoint protocol} revolves around the triple $C_i$. Servers exchange checkpoints by issuing $k$ random queries, analogous to how they exchange logs $\textit{L}_i$ or states $R_i$. The key distinction is that checkpoints are prioritized by age, given by the third parameter $W$. If the newest received checkpoint $(S', P', W')$ has a higher number $W'$ than its own checkpoint, it immediately adopts it and updates its state $S_i := S'$ and its log $L_i := P'$, since the server is outdated. This procedure spreads the newest checkpoints (those with the largest $W$) within one $T$-window, assuming $T=O(\log n)$ is sufficiently large.

For conciseness, we assume in this section that
the functionality of the compact median protocol with certificates is subsumed in the state machine, in particular, the required data structure centered around client sequence numbers (Section \ref{sec:cmedian}) and the Merkle hash forests (Section \ref{sec:proofs}). 
The techniques shown in sections \ref{sec:cmedian} and \ref{sec:proofs} can be adapted to work in the recovery protocol. Consequently, we will assume here that client commands are unique. 

\begin{algorithm}[!htp]
\setstretch{0.98}
\caption{Recovery Protocol}
\label{alg:recovery_protocol}

\noindent
\textbf{Preconditions:} 
\begin{itemize}

    \item $S_i, L_i$: \textbf{current} state and log of server $i$ (initially start state $s_0$ and seed command $x_0$) 

    \item $R_i \in \{ \texttt{reset}, \texttt{no-reset}, \bot\}$: server $i$'s willingness to roll back (initially \texttt{no-reset})

    \item $C_i = (S,P,W)$: parameters we intend to rescue in case of a surge:
    \begin{itemize}
        \item $S$: last observed shared state (initially the start state $s_0$)
        \item $P$: sequence of ``pre-committed'' commands (initially $\bot$)
        \item $W$: number of $T$-window (initially 0)
    \end{itemize}
\end{itemize}

\noindent
\textbf{Each round during a $T$-window each server $i$ does:} \textit{(on top of extended median rule)}
\begin{itemize}
    \item Send $k$ requests to random servers \textit{(same as in the extended median protocol)}
    \item If a request was received from server $j$ and $R_i \neq \bot$, then
    \begin{itemize}
        \item send $C_i$ and $R_i$ to server $j$
    \end{itemize}
    \item If at least $\ell$ replies were received, then
    \begin{itemize}
        \item If \texttt{no-reset} was received in one reply, $R_i := \texttt{no-reset}$, else $R_i := \texttt{reset}$
        \item let $C' = (S',P',W')$ be the received checkpoint with largest $W'$ \textit{(break ties arbitrarily)}
        \item If $W' > W$
        \begin{itemize}
            \item set $S_i := S'$, $L_i := P'$ \textit{(bring server up to date)}
            \item set $C_i := C'$ 
            \textit{(adopt newer checkpoint)}
        \end{itemize}
    \end{itemize}
    \item If less than $\ell$ replies were received, set $R_i :=\bot $
\end{itemize}

\noindent
\textbf{Between $T$-windows do:}
\begin{itemize}
    \item If $R_i = \texttt{reset}$, then set $S_i := S$, $L_i := P$, where $C_i = (S,P,W)$  \textit{(rollback)}
    \item If $L_i \neq \bot$, then
    \begin{itemize}
        \item for current checkpoint $C_i = (S,P,W)$, commit all commands in $P$ on $S_i$ and remove the prefix $P$ from $L_i$ \textit{(if $L_i$ would become empty, append a dummy command $x_d$)}.   
        \item make new checkpoint $C_i := (S_i,P_i,W')$, where $S_i$ is the current state, $P_i$ is the longest prefix of $L_i$ of commands with age $\geq T$ and $W'$ is the next $T$-window number 
        \item set $R_i := \texttt{no-reset}$ 
    \end{itemize}
    \item If $L_i = \bot$, then $R_i := \texttt{reset}$
\end{itemize}
\end{algorithm}

\subsection{Properties of the Recovery Protocol}

Before proving our main properties, we introduce several key time intervals used to define the length $T$ of each $T$-window:

\begin{itemize}
    \item \textbf{$T_M$:} The number of rounds required to solve the stabilizing consensus problem. By Theorem~\ref{th:keys}, $T_M = \bigO(\log n)$ if at least $n/4$ servers are continuously useful or if the adversary remains benign.

    \item \textbf{$T_E$:} The number of rounds needed for a successfully injected command to finalize its position in every log under the extended $(6,3)$-median rule. By Theorem~\ref{th:sequence}, $T_E \in \bigO(\log n)$ when $n/4$ servers are useful.

    \item \textbf{$T_B$:} The broadcast time for a value. From Theorem~\ref{thm:gossip_rule}, Corollary~\ref{cor:broadcast_priority_rule}, and Corollary~\ref{cor:broadcast_command_median}, we know $T_B = \bigO(\log n)$ when $n/4$ servers are useful under gossip, median, or priority rules.

    \item \textbf{$T_P$:} The time to reach a single value in the priority rule. By Corollary~\ref{cor:broadcast_priority_rule}, $T_P = \bigO(\log n)$ when $n/4$ servers are useful.

    \item \textbf{$T_D$:} The duration of a ``death spiral,'' i.e., the time until every server becomes \emph{useless} when the adversary blocks many servers. Theorem~\ref{lem:spiral} shows that if fewer than $(1/3 - \varepsilon)n$ servers remain useful in a single round (for any constant $\varepsilon > 0$), a death spiral completes in $\bigO(\log \log n)$ rounds. Moreover, if at least $3n/10$ servers are continuously blocked, then by Lemma~\ref{lem:permanently_blocking}, the death spiral completes in $\bigO(\log n)$ rounds. In both cases, we have $T_D \in \bigO(\log n)$.
\end{itemize}

Let us start by showing that the \texttt{reset} and \texttt{no-reset} states cannot coexist among servers at the end of a $T$-window (i.e., immediately before executing the instructions between two $T$-windows). This guarantees that no subset of servers can believe the protocol is functioning normally while others consider a reset necessary.

\begin{lemma}       
\label{lem:reset_states_mutually_exclusive}
    At the end of a $T$-window there are no two servers $i$ and $j$ with $R_i = \texttt{reset}$ and $R_j = \texttt{no-reset}$, w.h.p. This holds for sufficiently large $T = \bigO(\log n)$ even if the adversary surges.
\end{lemma}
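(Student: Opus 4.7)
The plan is to recognize that the reset sub-protocol acting on the variables $R_i$ is an instance of the $(6,3)$-priority rule of Definition \ref{def:priority_rule} over the two-valued totally ordered set $\{\texttt{reset} < \texttt{no-reset}\}$, with the convention that $R_i = \bot$ plays the role of an undecided/useless server in that framework: a $\bot$ server does not reply, and only leaves the $\bot$ state if it receives at least $\ell$ replies in some later round. Once this correspondence is set up, I would directly import the machinery from Section \ref{sec:median} rather than re-deriving concentration bounds.

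I would first handle the case where the adversary remains benign throughout the $T$-window, i.e., at least $n/4$ servers stay useful in every round. Here Corollary \ref{cor:broadcast_priority_rule} applies, since only two priority values are ever in circulation: within $T_B = O(\log n)$ rounds all useful servers collapse onto a single value from $\{\texttt{reset},\texttt{no-reset}\}$, and a short stability argument in the spirit of Lemma \ref{lem:available} keeps them there. For $T > T_B$ the conclusion then follows at the end of the window.

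Next I would address the case where the adversary surges at some point during the $T$-window, driving the useful fraction below $n/4$. Lemma \ref{lem:spiral} (and, under sustained blocking, Lemma \ref{lem:permanently_blocking}) then imply that within $T_D = O(\log n)$ rounds every $R_i$ is forced to $\bot$. The critical observation I would exploit is that $\bot$ is absorbing under this sub-protocol: once every $R_i = \bot$, no server replies, so no server ever collects $\ell$ replies again, and the all-$\bot$ configuration persists up to the window boundary, where the inter-window rule overwrites everything. Hence at the end of the window no $R_i$ equals either \texttt{reset} or \texttt{no-reset}, and the claim holds trivially.

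The main obstacle I expect is the mixed regime, in which the adversary oscillates between surging and benign behavior within a single window, potentially triggering a partial death spiral that never finishes. My plan is to restrict attention to the final $\Theta(\log n)$ rounds of the window and argue a clean dichotomy on the tail: either the useful fraction stays above $n/4$ throughout this tail, in which case Corollary \ref{cor:broadcast_priority_rule} produces agreement inside the tail; or it drops below $n/4$ at some moment in the tail, in which case Lemma \ref{lem:spiral} drives the system to the absorbing all-$\bot$ state before the window ends. Choosing $T$ to be a sufficiently large constant multiple of $\max\{T_B, T_D\} = O(\log n)$ guarantees that exactly one of these two outcomes is realized at the window boundary, ruling out the simultaneous presence of \texttt{reset} and \texttt{no-reset}.
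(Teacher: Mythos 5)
Your overall strategy matches the paper's: view the reset sub-protocol as a $(6,3)$-priority rule on the two values $\{\texttt{reset},\texttt{no-reset}\}$ with $\bot$ as the undecided state, and split into a ``death spiral'' case (Lemmas \ref{lem:spiral}/\ref{lem:permanently_blocking} force all $R_i=\bot$, which is absorbing within a window) and an ``enough useful servers'' case (Corollary \ref{cor:broadcast_priority_rule} gives agreement among servers with $R_i\neq\bot$). Your first two cases are fine. The gap is in your mixed-regime dichotomy over the final $\Theta(\log n)$ rounds. Its second horn claims that if the fraction of servers with $R_i\neq\bot$ drops below $n/4$ \emph{at some moment in the tail}, then Lemma \ref{lem:spiral} drives the system to all-$\bot$ ``before the window ends.'' That is not justified: the drop may occur with fewer than $T_D$ rounds remaining (e.g., the adversary blocks a large fraction only in the last one or two rounds of the window), in which case the death spiral has no time to complete, and the window can end with plenty of servers holding non-$\bot$ reset values. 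In that regime neither of your two horns yields the conclusion, so the dichotomy as stated does not cover all adversarial schedules.

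The paper avoids this by conditioning on the \emph{first} $T-T_D$ rounds instead of the tail (its notion of a ``good'' window): if the threshold is ever violated during the first $T-T_D$ rounds, at least $T_D$ rounds remain, so the spiral completes and all servers end at $\bot$; if it is never violated there, the priority rule has $T-T_D\ge T_P$ rounds with enough servers holding $R_i\neq\bot$ to reach agreement, and by Lemma \ref{lem:reset_states_mutually_exclusive}'s conclusion only one of the two values survives among non-$\bot$ servers. Your argument can be repaired without changing its structure by adding a persistence observation you currently do not make: within a window, the priority rule only ever adopts values received from servers with $R_j\neq\bot$, so the set of distinct non-$\bot$ reset values in circulation is non-increasing; hence once agreement among non-$\bot$ servers is reached (which your first horn establishes during any sufficiently long stretch with $\ge n/4$ such servers, in particular before a late drop), it persists to the window boundary no matter how the adversary blocks afterwards. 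Either fix closes the case your tail dichotomy misses.
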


\begin{proof}
    Let $T > T_B + T_D = \bigO(\log n)$. Consider the case where the $T$-window $W$ is not good. Then, by Lemma \ref{lem:spiral}, at the end of $W$ all servers are $\bot$ which satisfies this Lemma.

    Now consider the case that $W$ was good. By our algorithm each server $i$ starts the $T$-window either in the state $R_i := \texttt{reset}$ or $R_i := \texttt{no-reset}$. In the reset protocol we apply the $(6,3)$-priority rule (Definition \ref{def:priority_rule}) with the two values \texttt{reset} and \texttt{no-reset}, where the latter has priority.   
    By the application of the priority rule due  there will be agreement on either $\texttt{reset}$ or \texttt{no-reset} among all useful servers, all others will be $\bot$ (Corollary \ref{cor:broadcast_priority_rule}). 
\end{proof}

We observe that the states $R_i$ and $L_i$ have the following connection.

\begin{lemma}
    \label{lem:undecided_implies_useless}
    For every server $i$ in each round, $R_i = \bot$ implies $L_i = \bot$ (i.e., server $i$ is useless).
\end{lemma}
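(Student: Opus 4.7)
The plan is to prove the equivalent contrapositive $L_i \neq \bot \Rightarrow R_i \neq \bot$ by induction on the round. The key structural observation, which I would state up front, is that in Algorithm~\ref{alg:recovery_protocol} the reset/checkpoint protocol piggybacks on the \emph{same} $k$ gossip requests that the extended median rule already issues each round. When such a request arrives at a server $j$, the protocol rules specify that $j$ includes its log $L_j$ in its reply iff $L_j \neq \bot$, and includes the pair $(C_j, R_j)$ iff $R_j \neq \bot$, both decisions being read off $j$'s state at the \emph{beginning} of the round.

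For the base case I would note that initially every server has $L_i = (x_0)$ and $R_i = \texttt{no-reset}$, both non-$\bot$. For the inductive step, fix a round $t$ and assume $L_j \neq \bot \Rightarrow R_j \neq \bot$ at its start. Then for any querying server $i$, the set of servers that reply to $i$'s log-request part is contained in the set that reply to its recovery-request part, so $i$ receives at least as many recovery replies as log replies. Consequently, if $R_i = \bot$ at the end of round $t$ (i.e., fewer than $\ell$ recovery replies arrived), then fewer than $\ell$ log replies arrived as well, and the median-rule branch of the extended median rule forces $L_i := \bot$. I would then verify that no alternative update path can rescue $L_i$: the intra-round checkpoint-adoption step ($L_i := P'$) is itself gated on receiving at least $\ell$ recovery replies, so it cannot fire in the case under consideration; the between-window rollback and new-checkpoint creation in Algorithm~\ref{alg:recovery_protocol} are handled separately below.

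Finally, I would treat the boundary between two $T$-windows, where Algorithm~\ref{alg:recovery_protocol} unconditionally sets $R_i$ to either \texttt{no-reset} (if the post-rollback $L_i$ is non-$\bot$) or \texttt{reset} (if $L_i = \bot$); in either case $R_i \neq \bot$ after the boundary, so $R_i = \bot \Rightarrow L_i = \bot$ holds vacuously, which re-establishes the invariant for the next $T$-window. The main difficulty is not deep analysis but careful bookkeeping: matching ``replies produced from beginning-of-round states'' with ``updates applied to end-of-round states'' so that the inductive hypothesis is used at exactly the right moment, and exhausting every update path in the multi-clause algorithm (median update, intra-round checkpoint adoption, between-window rollback, new-checkpoint creation) to confirm that the configuration $(R_i = \bot,\, L_i \neq \bot)$ never arises.
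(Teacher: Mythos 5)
Your proof is correct and takes essentially the same approach as the paper, whose two-sentence argument likewise observes that the condition triggering $R_i := \bot$ (fewer than $\ell$ replies) also forces $L_i := \bot$, and that every assignment of a non-$\bot$ value to $L_i$ occurs only when $R_i \neq \bot$ (including the between-window steps). You merely make explicit, via the induction on rounds and the containment of log-repliers in recovery-repliers, the bookkeeping that the paper leaves implicit.
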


\begin{proof}
    The conditions under which we set $R_i := \bot$ (receiving less than $\ell$ replies) are sufficient for setting $L_i := \bot$. Furthermore, whenever we set $L_i$ to some value $\neq \bot$, then $R_i \neq \bot$.
\end{proof}

Next, we clarify how the death spiral relates to the servers' reset states $R_i$. We define a \emph{good} $T$-window $W$ as one in which, during the first $T - T_D$ rounds, more than $(1/3 - \varepsilon)n$ servers have state $R_i \neq \bot$, for some arbitrarily small constant $\varepsilon > 0$. This threshold stems from Lemma~\ref{lem:spiral}, which shows that if fewer than $(1/3 - \varepsilon)n$ servers have state $\bot$, a death spiral occurs, causing all servers to be in state $\bot$ after $T_D$ rounds.
Intuitively, a good $T$-window allows us to achieve agreement on checkpoints and reset states in the checkpoint and reset protocols, respectively.

\begin{lemma}
    \label{lem:undecided_vs_good}
    Assume $T > T_D$, where $T_D \in \bigO(\log n)$ is the time until the death spiral completes. If at the end of a $T$-window $W$ there exists a server with $R_i \neq \bot$, then $W$ was good, w.h.p.
\end{lemma}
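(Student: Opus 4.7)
The plan is to prove the contrapositive: if the $T$-window $W$ is not good, then every server ends $W$ with $R_i=\bot$, w.h.p. Negating the definition of good, there is some round $t^*\le T-T_D$ within $W$ at which at most $(1/3-\varepsilon)n$ servers have $R_i\neq\bot$. I will argue that from round $t^*$ onward the set $\{i:R_i\neq\bot\}$ collapses to the empty set within $T_D$ rounds, so it is already empty by the last round of $W$, contradicting the hypothesis.

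The key observation is that the indicator ``$R_i\neq\bot$'' evolves under exactly the same rule as ``useful'' in the $(6,3)$-median rule analyzed in Lemma~\ref{lem:spiral}: a server only answers reset queries while $R_j\neq\bot$, and a server sets $R_i:=\bot$ iff it gathers fewer than $\ell=3$ replies out of its $k=6$ random queries. Consequently, conditioned on the non-$\bot$ fraction $x$ at the beginning of a round, the probability that a given server is non-$\bot$ at the next round is exactly $f(x)=-10x^6+36x^5-45x^4+20x^3$, and adversarial blocking can only depress this probability further by removing potential responders. Because the adversary is $1$-late, it does not see the random choices of the current round, so the next-round indicators across servers are independent, which is precisely what the Chernoff step of Lemma~\ref{lem:spiral} needs.

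With these observations in place, I apply Lemma~\ref{lem:spiral} verbatim using the configuration at round $t^*$ as the initial state: the initial non-$\bot$ fraction is at most $1/3-\varepsilon$, so within $T_D=O(\log n)$ rounds every server satisfies $R_i=\bot$, w.h.p. Since $t^*+T_D\le T$, this already holds at the end of $W$, giving the desired contradiction.

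The main obstacle is legitimizing the transfer of the death-spiral analysis to the reset protocol even though (i) the adversary may be simultaneously blocking servers and (ii) servers that have already become $\bot$ are in principle allowed to be revived if they receive enough non-$\bot$ replies. Both points are absorbed by the same observation: blocking only strengthens the recurrence $\E[x_{t+1}]\le 3x_t^2$ by removing responders, and Lemma~\ref{lem:spiral} already accounts for possible revivals since $f(x)$ is the probability of being non-$\bot$ regardless of the server's prior state. Hence the $O(\log n)$ collapse bound applies without modification, and the lemma follows.
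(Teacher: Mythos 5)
Your proof is correct and follows essentially the same route as the paper: both argue the contrapositive, noting that if $W$ is not good there is a round in the first $T-T_D$ rounds with at most $(1/3-\varepsilon)n$ servers having $R_i\neq\bot$, and then invoke Lemma~\ref{lem:spiral} to conclude that all servers reach $R_i=\bot$ within $T_D$ rounds, hence by the end of $W$. Your additional justification that the $R_i\neq\bot$ dynamics coincide with the ``useful'' dynamics of the $(6,3)$-median rule (and that blocking only accelerates the collapse) is a welcome elaboration of a step the paper treats implicitly.
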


\begin{proof}
    We show the contrapositive of the claim. Assume that $W$ was \textit{not} good, i.e., there is a round in the first $T-T_D$ rounds of $W$ in which less than $(1/3-\varepsilon)n$ servers are useful, i.e. are in state $R_i \neq \bot$. By Lemma \ref{lem:spiral}, a death spiral starts in this round which completes after at most $T_D$ rounds, w.h.p., implying that all servers have $R_i = \bot$. 
\end{proof}

We now show the first benefit of a \emph{good} $T$-window: after such a window, all reset states of servers $i$ with $R_i \neq \bot$ are identical. Moreover, if the number of servers in state \texttt{no-reset} exceed a certain minimum occurrence, we can guarantee that $\texttt{no-reset}$ will be adopted by all servers with $R_i \neq \bot$.
In the following, when we speak of the end of a $T$-window, we mean after the last round of the $T$-window but before the steps between $T$-windows are executed. The beginning of a $T$-window is the point after the steps between $T$-windows were executed but before the first round of the $T$-window (see the recovery protocol in Algorithm \ref{alg:recovery_protocol}). 

\begin{lemma}\label{lem:good_window_consensus_recovery}
If a given $T$-window $W$ was good, then for sufficiently large $T = \bigO(\log n)$, the following statements hold w.h.p. The reset state $R_i$ is identical for all servers $i$ with $R_i \neq \bot$ at the end of $W$. If, at the beginning of $W$, at least $c \log n$ non-blocked servers have the state $\texttt{no-reset}$, then every server in $i$ with $R_i \neq \bot$ has state $R_i = \texttt{no-reset}$ at the end of $W$. If all servers $i$ with $R_i \neq \bot$ start $W$ with $R_i = \texttt{reset}$, then every server in $i$ with $R_i \neq \bot$ has state $R_i = \texttt{reset}$ at the end of $W$. 
\end{lemma}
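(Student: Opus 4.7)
The plan is to observe that the reset protocol is (up to renaming) an instance of the $(6,3)$-gossip rule from Theorem~\ref{thm:gossip_rule} with distinguished value $x = \texttt{no-reset}$ and default $x_0 = \texttt{reset}$: a server with $R_i \neq \bot$ that receives at least $\ell$ replies sets $R_i := \texttt{no-reset}$ iff at least one reply is \texttt{no-reset}, and otherwise adopts \texttt{reset}, which is exactly the gossip update. All three statements will then follow from the broadcast and dichotomy behaviour established in that theorem, together with a short argument that agreement, once attained, is self-preserving.

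First I would verify the preconditions of Theorem~\ref{thm:gossip_rule} during the first $T - T_D$ rounds of $W$. By goodness of $W$ there are more than $(1/3 - \varepsilon)n > n/4$ servers with $R_i \neq \bot$ in each such round, and by Lemma~\ref{lem:undecided_implies_useless} any server with $L_i \neq \bot$ also has $R_i \neq \bot$, so the ``useful'' set for the reset protocol is indeed large enough. Choosing $T$ such that $T - T_D \geq T_B$ (still $\bigO(\log n)$), Theorem~\ref{thm:gossip_rule} applies to the reset protocol on this initial segment. Statement 2 is then the direct broadcast part: $c \log n$ non-blocked (hence $R_i \neq \bot$) servers initially holding \texttt{no-reset} force every useful server to hold \texttt{no-reset} within $T_B$ rounds, w.h.p. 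Statement 1 is the dichotomy part: regardless of the initial count of \texttt{no-reset} holders, within $T_B$ rounds either every useful server holds \texttt{no-reset} or every useful server holds \texttt{reset}, w.h.p. Statement 3 is the validity part of the gossip update: a server can only adopt \texttt{no-reset} by receiving it from some other server, so an empty initial \texttt{no-reset} population remains empty forever.

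The delicate step will be propagating the agreement reached within the first $T - T_D$ rounds all the way to the end of $W$, because in the final $T_D$ rounds the useful-server count may collapse in a death spiral (Lemma~\ref{lem:spiral}). I would handle this with a self-sustaining invariant: once every non-$\bot$ server holds a common value $v \in \{\texttt{reset},\texttt{no-reset}\}$, the only non-$\bot$ value in circulation is $v$, so (i) any server that remains non-$\bot$ either keeps $v$ or transitions to $\bot$, and (ii) any server recovering from $R_i = \bot$ receives only $v$'s in its $\ell$ replies and is therefore forced to set $R_i := v$. In particular, the ``absent value'' can never reappear spontaneously, which is precisely what is needed for Statement 3 and for the persistence of Statements 1 and 2 through the last $T_D$ rounds. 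A standard union bound over the polynomially many rounds of $W$ then converts these per-round w.h.p.\ guarantees into a single w.h.p.\ conclusion about the state at the end of $W$, closing the argument.
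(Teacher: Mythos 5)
Your proposal is correct and follows essentially the same route as the paper: the paper likewise observes that the reset dynamics are an instance of the $(6,3)$-priority rule with \texttt{no-reset} as the priority value (which for two values is exactly your gossip-rule reduction) and then invokes Corollary~\ref{cor:broadcast_priority_rule} for the broadcast case and Lemma~\ref{lem:reset_states_mutually_exclusive} plus a validity observation for the remaining two statements. Your explicit persistence invariant for the final $T_D$ rounds is a detail the paper leaves implicit, but it is the same underlying argument.
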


\begin{proof}
    Assume $T \geq T_P + T_D \in \bigO(\log n)$ is sufficiently large. Since $W$ was good, in the first $T_P$ rounds of this $T$-window, we have at least $(1/3-\varepsilon)n$ servers with $R_i \neq \bot$.

    In the reset protocol, servers follow a simple $(6,3)$-priority rule (Definition \ref{def:priority_rule}) where they adopt \texttt{no-reset} if they received at least $\ell$ responses in their $k$-sample one of which is \texttt{no-reset} (priority value), else \texttt{reset}, and $\bot$ if less than $\ell$ responses were received (corresponding to the useless state). If at least $c \log n$ servers $i$ with $R_i \neq \bot$ start with \texttt{no-reset}, by Corollary~\ref{cor:broadcast_priority_rule}, there will be a consensus among servers $i \in U$ on $R_i$ within $T_P$ rounds as this is a good $T$-window. If not, there will be agreement on $R_i$ among all $i \in U$ by Lemma \ref{lem:reset_states_mutually_exclusive} and in particular if all servers $R_i \neq \bot$ start with $R_i = \texttt{reset}$, then $\texttt{reset}$ must necessarily be the agreed upon value.
\end{proof}

The next lemma is crucial, as it shows that whenever a subset of servers creates a new checkpoint, they \emph{must} already know the most recent preceding checkpoint, ensuring a causal sequence with no alternative ``forks.'' We use an important observation about $(k,\ell)$-rules from Section~\ref{sec:median}: an adversary that continuously blocks $3n/10$ servers forces a death spiral. 

Concretely, if a new checkpoint is taken at the end of a $T$-window $W$, then $W$ must be \emph{good}, so at least $(1/3-\varepsilon)n$ servers store the most recent checkpoint. If the adversary suppresses (continuously blocks) more than $(1/3 - \varepsilon)n - c \log n \geq 3n/10$ of these servers (so there are fewer than $c \log n$ of servers with up-to-date checkpoints left), it triggers another death spiral, preventing any new checkpoint at the end of $W$. If it suppresses less than that in just a single round, the most up-to-date checkpoint will necessarily spread in the checkpoint protocol.

Thus, the adversary has only two choices: either suppress the most recent checkpoint so that no new checkpoint is taken, or allow the latest checkpoint to spread to all those servers that will create a new one. In both cases, the checkpoints follow a causal order. We formalize this in the next lemma.

\begin{lemma}\label{lem:good_window_consensus_checkpoint}
If a given $T$-window $W$ was good, then at the end of a $T$-window, all servers $i$ with $R_i \neq \bot$ (which is a super-set of those servers that make new checkpoints) will locally store a checkpoint with largest window number among all checkpoints currently stored by any server, for sufficiently large $T = \bigO(\log n)$ w.h.p.
\end{lemma}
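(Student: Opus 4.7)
My plan is to recognize that the checkpoint-update step of Algorithm~\ref{alg:recovery_protocol} operates as an instance of the $(6,3)$-priority rule (Definition~\ref{def:priority_rule}), with each checkpoint's window number serving as its priority and the checkpoint itself as the associated value: a server adopts a received checkpoint $C' = (S',P',W')$ exactly when $W'$ strictly exceeds its own stored window number. With this identification, the lemma reduces to a broadcast statement for the maximum-priority value under the conditions of Corollary~\ref{cor:broadcast_priority_rule}.

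Let $W^*$ be the largest window number of any checkpoint stored at the start of $W$, and let $H$ be the set of servers holding such a checkpoint. Note that $W^*$ cannot change during $W$ itself, since new checkpoints are only minted in the between-windows step, and $H$ is monotone non-decreasing during a window because checkpoints never downgrade. The claim then becomes: every server with $R_i \neq \bot$ at the end of $W$ stores a checkpoint whose window number equals $W^*$.

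The key auxiliary claim, which I would prove by induction on the window index, is the invariant that $|H| \geq (1/3 - \varepsilon)n$ at the start of every window. The base case is trivial, as all servers initially share the same seed checkpoint. For the inductive step, consider the window $W-1$. If $W-1$ was not good, then by Lemma~\ref{lem:undecided_vs_good} every server has $R_i = \bot$ at its end, no new checkpoint is created in the between-windows step, and $H$ is preserved verbatim. If $W-1$ was good, the inductive hypothesis together with the broadcast argument below (applied to $W-1$ in place of $W$) ensures that all $\geq (1/3 - \varepsilon)n$ non-$\bot$ servers at the end of $W-1$ hold the then-maximum checkpoint; by Lemma~\ref{lem:good_window_consensus_recovery} they agree on a common value of $R_i$; and in the between-windows step each of them --- whether via the \texttt{no-reset} branch or via rollback followed by the \texttt{reset}-induced commit --- installs a new checkpoint with window number $W$, supplying the new $H$ with at least $(1/3 - \varepsilon)n$ members.

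Given $|H| \geq (1/3 - \varepsilon)n \gg c \log n$ at the start of the good window $W$, I would then apply Corollary~\ref{cor:broadcast_priority_rule}: since at least $n/4$ servers remain useful throughout the first $T - T_D$ rounds, the max-window checkpoint propagates to every non-$\bot$ server within $T_B = \bigO(\log n)$ rounds, and choosing $T > T_B + T_D$ ensures this completes well before the window's end. The main obstacle lies in formalizing the adversary's inability to circumvent this: to prevent the priority-rule broadcast or to shrink $H$ across windows, the adversary would have to persistently block a large fraction of $H$, but by Lemma~\ref{lem:permanently_blocking} this triggers a death spiral that, by Lemma~\ref{lem:undecided_vs_good}, is incompatible with the windows in question being good. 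Properly coupling the induction on windows with the randomness of the 1-late adversary so that this dichotomy plays out w.h.p.\ is the subtlest step in the argument.
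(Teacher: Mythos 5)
Your overall skeleton matches the paper's: both identify the checkpoint exchange with the $(6,3)$-priority rule (window number as priority), both run an induction over $T$-windows, and both use the dichotomy that an adversary suppressing the holders of the newest checkpoint long enough forces a death spiral via Lemma~\ref{lem:permanently_blocking}, which is incompatible with a server ending the window with $R_i \neq \bot$. The difference lies in how the invariant is carried across window boundaries, and this is where your argument has a genuine gap.

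Your inductive step claims that after a good window $W-1$, ``all $\geq (1/3-\varepsilon)n$ non-$\bot$ servers at the end of $W-1$'' mint a fresh checkpoint with window number $W$, so that the new set $H$ again has size $\geq (1/3-\varepsilon)n$ at the start of $W$. Two steps here do not hold. First, goodness of $W-1$ only constrains the first $T-T_D$ rounds; since the adversary in this section may block arbitrarily many servers, it can crash the number of servers with $R_i \neq \bot$ (and a fortiori with $L_i \neq \bot$) in the final rounds of a good window, so nothing guarantees $(1/3-\varepsilon)n$ non-$\bot$ servers at the window's \emph{end}. Second, in Algorithm~\ref{alg:recovery_protocol} a new checkpoint is created between windows only by servers with $L_i \neq \bot$ (after a possible rollback), and $R_i \neq \bot$ does not imply $L_i \neq \bot$ — Lemma~\ref{lem:undecided_implies_useless} is only the converse implication; a \texttt{no-reset} server with empty log creates no checkpoint, nor does a \texttt{reset} server whose stored checkpoint has $P = \bot$. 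Consequently your invariant ``$|H| \geq (1/3-\varepsilon)n$ at the start of every window'' is not established by your induction: right after a boundary the freshly minted maximum checkpoint may be held by far fewer servers. The paper avoids exactly this trap by phrasing its induction hypothesis about the \emph{end} of each window and about checkpoints \emph{stored} (which persist even at servers that have fallen to $\bot$): all servers with $R_i \neq \bot$ and at least $(1/3-\varepsilon)n$ servers overall store the maximal checkpoint then, and inside the next window it only needs the suppression argument — if more than $(1/3-\varepsilon)n - c\log n \geq 3n/10$ of these stored copies were blocked for $T - T_B - T_D$ rounds, Lemma~\ref{lem:permanently_blocking} would force every $R_i = \bot$ at the end of $W$, contradicting the case under consideration; otherwise $c\log n$ non-$\bot$ holders exist with $T_B$ good rounds to spare and Corollary~\ref{cor:broadcast_priority_rule} finishes the spread. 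To repair your write-up you would need to either weaken your invariant to the paper's end-of-window form, or separately prove that whenever new checkpoints are minted they are minted by sufficiently many servers — a claim your current argument asserts but does not support.
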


\begin{proof}
    We make a proof by induction over successive $T$-windows. Our hypothesis is that at the end of each $T$-window with number strictly smaller than $W$, all servers $i$ with $R_i \neq \bot$, and at least $(1/3-\varepsilon)n$ servers overall will have a checkpoint with largest window number $W'$ among all checkpoints currently in circulation.
    
    Initially \textit{all} servers start with checkpoint $(s_0, \bot, 0)$, so the claim is true for $W=1$. If all servers $i$ have $R_i = \bot$ at the end $W$ then all servers have $L_i = \bot$ (Lemma \ref{lem:undecided_implies_useless}), thus no server makes a new checkpoint (see pseudocode) and the induction hypothesis carries over from the previous $T$-window $W-1$ to $W$.
    Else, there is a server $i$ with $R_i \neq \bot$, thus $W$ was good. Then in the first $T-T_D$ rounds of this $T$-window, we have at least $(1/3-\varepsilon)n$ servers with $R_i \neq \bot$.
    
    If a server receives at least $\ell$ replies in a round and observes a checkpoint whose window number $W'$ exceeds its own, it always adopts one of these. Otherwise, it keeps its current checkpoint. This process exactly matches the behavior of $(6,3)$-priority rule (Definition \ref{def:priority_rule}). Consequently, by Corollary~\ref{cor:broadcast_priority_rule}, if at least $c \log n$ servers with $R_i \neq \bot$ start $W$ with a checkpoint whose window number $W'$ is largest, that checkpoint spreads to all servers with $R_i \neq \bot$ but at least $(1/3-\varepsilon)n$ many overall, within $T_B$ rounds. 

    It remains to show that the checkpoint with the largest window number $W'$ will spread in the good $T$-window $W$. By the induction hypothesis a checkpoint with largest window number $W'$ is locally stored by at least $(1/3-\varepsilon)n$ servers at the start of $W$.    
    Assume that the adversary suppresses more than $(1/3-\varepsilon)n - c \log n$ of these for at least $T-T_B-T_D$ rounds. Since $(1/3-\varepsilon)n - c \log n \geq 3n/10$ for sufficiently small $\varepsilon$ and sufficiently large $n$, by Lemma \ref{lem:permanently_blocking}, all servers will have $R_i = \bot$ at the end of $W$, a contradiction. That implies that in some round in $W$ at the beginning of $T$ there were at least $c \log n$ servers with $R_i \neq \bot$ that had a checkpoint with largest window number $W'$ and there are still $T_B$ rounds left in each of which $\geq (1/3-\varepsilon)n$ servers have $R_i \neq \bot$. Then by Lemma \ref{cor:broadcast_priority_rule}, this checkpoint will spread to $(1/3-\varepsilon)n$ servers and all servers that have $R_i \neq \bot$ at the end of $W$. This satisfies the induction hypothesis for window $W$.
\end{proof}

Similar to a good $T$-window, we define a \textit{happy} $T$-window as one where during the first $T - T_D$ rounds more than $(1/3-\varepsilon)n$ servers have $L_i \neq \bot$ (they are useful). With the same argument as in Lemma \ref{lem:undecided_vs_good}, if there is a server $i$ with $L_i \neq \bot$ at the end of some $T$-window, a death spiral could not have occurred in the first $T-T_D$ rounds of $W$. Furthermore, as $L_i \neq \bot$ implies $R_i \neq \bot$ due to Lemma \ref{lem:undecided_implies_useless}, a happy $T$-window must also be good.

\begin{corollary}
    \label{cor:useful_implies_happy}
    Assume $T> T_D$, where $T_D \in \bigO(\log n)$ is the time until the death spiral completes. If at the end of some $T$-window $W$ there is a server $i$ with $L_i \neq \bot$, then $W$ was happy. Furthermore, a happy $T$-window is always good.
\end{corollary}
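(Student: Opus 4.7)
The plan is to derive both parts of the corollary directly from previously established results, mirroring the structure of Lemma~\ref{lem:undecided_vs_good} and invoking Lemma~\ref{lem:undecided_implies_useless}. Neither statement appears to require any new probabilistic machinery; the whole argument is essentially a bookkeeping exercise over definitions.

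For the first statement I would prove the contrapositive: suppose $W$ is \emph{not} happy, so there is some round $r$ among the first $T-T_D$ rounds of $W$ in which fewer than $(1/3-\varepsilon)n$ servers satisfy $L_i \neq \bot$. Since the extended median rule still governs the evolution of the logs in the recovery protocol, and Lemma~\ref{lem:spiral} only needs the useful fraction to dip below $1/3-\varepsilon$ in a single round, it applies here. It gives that within $T_D = \bigO(\log n)$ further rounds every server has $L_i = \bot$, w.h.p. Because $r + T_D \le T$, this happens no later than the final round of the window, contradicting the hypothesis that some $i$ has $L_i \neq \bot$ at the end of $W$. This is the same logic as Lemma~\ref{lem:undecided_vs_good}, with $L_i$ replacing $R_i$ as the tracked quantity.

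For the second statement I would apply the contrapositive of Lemma~\ref{lem:undecided_implies_useless}: $R_i = \bot$ implies $L_i = \bot$, hence $L_i \neq \bot$ implies $R_i \neq \bot$. Consequently, in every round in the first $T-T_D$ rounds of a happy window, the set of servers with $R_i \neq \bot$ contains the set of servers with $L_i \neq \bot$, and the latter already exceeds $(1/3-\varepsilon)n$ by happiness. The defining inequality for a good window is therefore satisfied automatically.

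The only subtlety worth flagging is the arithmetic $r + T_D \le T$, which is exactly why the corollary assumes $T > T_D$; without this budget the death spiral triggered in the first argument could, in principle, extend past the window boundary. Since $T_D \in \bigO(\log n)$ and we have the freedom to pick $T = \bigO(\log n)$ sufficiently large (as in Lemmas~\ref{lem:reset_states_mutually_exclusive}--\ref{lem:good_window_consensus_checkpoint}), this is harmless. I do not foresee any genuine obstacle; the corollary is really a packaging step that lets later proofs invoke ``happy $\Rightarrow$ good'' without restating the underlying death-spiral bound.
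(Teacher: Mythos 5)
Your argument is correct and is essentially the paper's own: the first claim is the same death-spiral contrapositive as Lemma~\ref{lem:undecided_vs_good} with $L_i$ in place of $R_i$ (via Lemma~\ref{lem:spiral} and $T>T_D$), and the second follows from the contrapositive of Lemma~\ref{lem:undecided_implies_useless}, exactly as the paper argues. The only cosmetic caveat is that, as in Lemma~\ref{lem:undecided_vs_good}, the first conclusion really holds w.h.p., which the corollary's statement leaves implicit.
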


A \emph{happy} $T$-window has immediate effects on the local states, logs and checkpoints that are in circulation. In particular, when a $T$-window is happy, all \emph{useful} servers have aligned local states, and their logs share the same prefixes for all sufficiently old commands by its end. Because only useful servers create new checkpoints, any checkpoints bearing the same window number are identical.

\begin{lemma}    \label{lem:states_logs_checkpoints_consistent}
    The following properties hold for some $T \in \bigO(\log n)$, w.h.p. At the end of a $T$-window, all servers $i$ with $L_i \neq 
    \bot$ have the same local state $S_i$, have the same longest prefix of $L_i$ consisting of commands that are at least $T$ rounds old and have identical checkpoints $C_i$.
\end{lemma}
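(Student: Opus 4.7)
The plan is to prove all three properties by a single strong induction over the sequence of $T$-windows, leveraging Lemma~\ref{lem:good_window_consensus_checkpoint} for the checkpoint part and the extended median rule analysis (Theorem~\ref{th:sequence}) together with the identity of adopted checkpoints for the state and log-prefix parts. The base case is the initial configuration, where every server starts with $S_i=s_0$, $L_i=(x_0)$, $C_i=(s_0,\bot,0)$, so the claim holds trivially.

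For the inductive step, consider a $T$-window $W$ at whose end some server has $L_i\neq\bot$. Corollary~\ref{cor:useful_implies_happy} forces $W$ to be happy, hence good. Since $L_i\neq\bot$ implies $R_i\neq\bot$ (Lemma~\ref{lem:undecided_implies_useless}), Lemma~\ref{lem:good_window_consensus_checkpoint} gives that every useful server at the end of $W$ stores a checkpoint with the same maximum window number $W^*$ in circulation. What remains for checkpoint agreement is to argue that two checkpoints sharing window $W^*$ are in fact \emph{identical}. Any checkpoint $(S,P,W^*)$ with $W^*\ge 1$ was produced by the between-window step at the end of $T$-window $W^*-1$ by a server useful at that moment. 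By Corollary~\ref{cor:useful_implies_happy}, $W^*-1$ was happy, so the inductive hypothesis applies and all such producing servers shared the same $S_i$ and the same longest age-$\ge T$ prefix $P_i$; they therefore write the same triple $(S,P,W^*)$. For $W^*=0$ the unique starting checkpoint makes the statement immediate.

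The state claim then follows from the checkpoint claim. Let $C^*=(S^*,P^*,W^*)$ be the common checkpoint held at the end of $W$. A useful server reaches $C^*$ either by (a) adopting it during some round of $W$, in which case the pseudocode explicitly sets $S_i:=S^*$, $L_i:=P^*$; or (b) already holding $C^*$ at the start of $W$, in which case it created $C^*$ itself between $W^*-1$ and $W^*$, so after that between-window step $S_i=S^*$, and any later between-window commit would have increased its window number beyond $W^*$, contradicting maximality. Thus $S_i=S^*$ for every useful server at the end of $W$. For the log-prefix claim, all useful servers at the end of $W$ began $W$ (or the moment of adoption) with the identical log $P^*$, and $L_i$ subsequently evolves only through the extended $(6,3)$-median rule layered on top of the recovery protocol. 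Choosing $T\ge T_B+T_M+T_D+T_P\in O(\log n)$ ensures that Theorem~\ref{th:sequence} has enough rounds to fix the positions of all commands of age at least $T$ in every useful log; hence the longest age-$\ge T$ prefix is identical across useful servers at the end of $W$, closing the induction.

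The main obstacle is the entanglement of the three inductions: checkpoint agreement at end of $W$ depends on the hypothesis applied to the earlier happy window $W^*-1$, and state agreement at end of $W$ depends on the freshly established checkpoint agreement. Particular care is needed for servers that become useful \emph{during} $W$ (either by adopting $C^*$ or by recovering a non-$\bot$ log via the median rule), to rule out the adversary selectively propagating a different ``stale'' checkpoint with the same window index $W^*$. This is precisely where Lemma~\ref{lem:good_window_consensus_checkpoint} is load-bearing: selective propagation of a second $W^*$-checkpoint would require the adversary to suppress at least $3n/10$ of the servers that hold the common one, which by Lemma~\ref{lem:permanently_blocking} triggers a death spiral within $T_D$ rounds, contradicting the happiness of $W$.
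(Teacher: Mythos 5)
Your proof is correct and follows essentially the same route as the paper: induction over $T$-windows, using Corollary~\ref{cor:useful_implies_happy} and Lemma~\ref{lem:undecided_implies_useless} to reduce to happy/good windows, Lemma~\ref{lem:good_window_consensus_checkpoint} plus the induction hypothesis (applied to the window in which the max-numbered checkpoint was created) for checkpoint identity, checkpoint adoption for state agreement, and Theorem~\ref{th:sequence} for agreement on the age-$\ge T$ prefix. The only cosmetic looseness is in your case (b), where a server holding $C^*$ at the start of $W$ may have \emph{adopted} it in an earlier window rather than created it, but your own adoption-plus-maximality argument covers that subcase, matching the paper's ``has at some point obtained $(S,P,W)$'' reasoning.
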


\begin{proof}
    Let $W' \geq 1$ be the current $T$-window.
    We make an inductive argument that all servers $i$ with $L_i \neq 
    \bot$ have the same local state $S_i$, have the same longest prefix of $L_i$ consisting of commands that are at least $T$ and make the same checkpoint at the end of $W'$. Note that the latter implies that $C_i$ is the same for all servers $i$ at the end of $W'+1$ due to Lemma \ref{lem:good_window_consensus_checkpoint} (for technical reasons the claim about $C_i$ is just shifted by one window). Assume the claim holds for all $T$-windows $W \leq W'$. This is true for $W=0$, as each server starts in state $S_i = s_0$ with log $L_i = x_0$ and initializes the same checkpoint $(s_0, \bot, 0)$.
    
    By induction hypothesis, all servers with $L_i \neq \bot$ start $W'$ in the same state $S_i = S$.
    If at the end of $W'$ any server $i$ has $L_i \neq \bot$, this implies that this was a happy $T$-window due to Corollary \ref{cor:useful_implies_happy}. Assume $T \in \bigO(\log n)$ is sufficiently large. 
    Since $L_i \neq \bot$ implies $R_i \neq 
    \bot$ (Lemma \ref{lem:undecided_implies_useless}) we can apply Lemma \ref{lem:good_window_consensus_checkpoint} implying that servers $i$ with $L_i \neq \bot$ have agreed on a checkpoint with the same window number $W \leq W'$ at the end of $W'$. All checkpoints $(S,P,W)$ with window number $W$ must be identical by the induction hypothesis. Any server $i$ that is useful ($L_i \neq \bot$) at the end of $W'$ has at some point obtained $(S,P,W)$, and will immediately adopt the state $S_i := S$ and sets $L_i := P$, in case it was outdated (see pseudocode).
    
    Further, for sufficiently large $T$, all \textit{useful} servers have enough time to exchange their logs by the application of the extended median rule, thus Theorem \ref{th:sequence} implies that all useful servers $i$ have the same longest prefix $P' \leq L_i$ of commands that are at least $T$ rounds old at the end of $W'$.
    Since $L_i \neq \bot$    
    is a necessary condition to commit commands and subsequently take a new checkpoint (see pseudocode), any server that commits commands, commits the same sequence $P$ to the same state $S$ from the consensus checkpoint $(S,P,W)$, resulting in the same new state $S'$. This further implies that all newly taken checkpoints at the end of window $W'$ are $(S',P',W')$, i.e., the same.
\end{proof}

\subsection{Liveness, Safety and Recovery}

Let us first demonstrate liveness by showing that the recovery protocol solves the commitment problem (Definition \ref{def:commitment_problem}), ensuring that new commands are committed as long as the adversary is benign, i.e., continuously admits at least $n/4$ useful servers. Critically, whenever a server becomes useful during a $T$-window, it adopts the most up-to-date checkpoint, preventing inconsistencies in committed commands. Furthermore, the adversary cannot exploit the \texttt{reset} state to continuously force resets making servers loose their logs (thus inhibiting the commitment of commands).

\begin{lemma}
    \label{lem:recovery_protocol_solves_commitment_problem}
    \label{lem:recovery_solves_commitment}
    The recovery protocol (Algorithm \ref{alg:recovery_protocol}) solves the commitment problem (Definition~\ref{def:commitment_problem}), given that the adversary is $1/10$-blocking.
\end{lemma}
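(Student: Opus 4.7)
The plan is to reduce the claim to Theorem~\ref{th:compact} by arguing that under a $1/10$-blocking adversary, the extra machinery of the recovery protocol (reset states and checkpoints) remains dormant: no useful server ever triggers a rollback, and the compact median rule embedded in the protocol handles commitment exactly as before.

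I would begin with availability, which is immediate from Lemma~\ref{lem:available}: a $1/10$-blocking adversary leaves at least $3n/4$ servers useful in each round for polynomially many rounds, w.h.p. In particular, the number of non-$\bot$ servers far exceeds $(1/3-\varepsilon)n$ in every round, so by Corollary~\ref{cor:useful_implies_happy} and the definitions of good/happy windows, every $T$-window is both good and happy once $T \in \bigO(\log n)$ is chosen large enough to satisfy the preconditions of Lemmas~\ref{lem:good_window_consensus_recovery} and~\ref{lem:states_logs_checkpoints_consistent}.

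Next, I would establish by induction over $T$-windows that every useful server ends each window with $R_i = \texttt{no-reset}$. At the beginning of the first window all servers start in $\texttt{no-reset}$. For the induction step, the $3n/4$ useful servers entering the window all carry $\texttt{no-reset}$, which is far more than the $c\log n$ threshold, so the hypothesis of Lemma~\ref{lem:good_window_consensus_recovery} is satisfied and every server with $R_j \neq \bot$ ends the window in $\texttt{no-reset}$. Since useful servers have $L_i \neq \bot$, hence $R_i \neq \bot$ by the contrapositive of Lemma~\ref{lem:undecided_implies_useless}, the between-windows code never triggers the rollback branch; instead each useful server commits its aged prefix, forms a fresh checkpoint, and reinitializes $R_i := \texttt{no-reset}$. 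Strong safety then follows from Lemma~\ref{lem:states_logs_checkpoints_consistent}, which guarantees that at every window boundary all useful servers share the same state $S_i$ and commit the same aged prefix, so their committed sequences coincide.

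Finally, liveness would follow by porting the argument of Theorem~\ref{th:compact} into the present setting: once injected, a command $x$ reaches every useful server in $T_B \in \bigO(\log n)$ rounds (Corollary~\ref{cor:broadcast_command_median}) and its position stabilizes in $T_E \in \bigO(\log n)$ rounds (Theorem~\ref{th:sequence}); as soon as its age exceeds $T$ it enters the aged prefix $P_i$ and is committed at the next window boundary by every useful server. I expect the main obstacle to be the inductive step that rules out resets, since one must guard against transient rounds in which a handful of unlucky servers momentarily fall to $\bot$ and could, in principle, seed $\texttt{reset}$ responses that the priority rule amplifies. This is handled by exploiting the asymmetric priority of $\texttt{no-reset}$ in Corollary~\ref{cor:broadcast_priority_rule} together with the $\Omega(n)$ slack between $3n/4$ and the $(1/3-\varepsilon)n$ death-spiral threshold, which ensures that at least $c\log n$ $\texttt{no-reset}$ servers remain non-blocked in every round w.h.p., carrying the induction through.
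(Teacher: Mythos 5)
Your proposal is correct and follows essentially the same route as the paper: availability from the benign adversary (Lemma~\ref{lem:available}), strong safety from Lemma~\ref{lem:states_logs_checkpoints_consistent}, and liveness by showing useful servers never take the rollback branch so commands spread via the extended median rule and are committed at the next window boundaries. Your explicit induction ruling out resets via Lemma~\ref{lem:good_window_consensus_recovery} and the priority of \texttt{no-reset} is just a more carefully spelled-out version of the step the paper states when it asserts that all useful servers end every $T$-window in \texttt{no-reset}.
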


\begin{proof}
    We consider a server $i$ as up-to-date in the sense of Definition \ref{def:commitment_problem} if $L_i \neq \bot$ (useful). Further, we consider a command as committed when it is removed from $P$ between two $T$-windows and executed on the shared state, by each useful server. Therefore, the sequence of committed commands of server $i$ is given implicitly via the current state $S_i$, which is a result of executing this sequence to the initial state $s_0$. By Lemma \ref{lem:available}
    \begin{itemize}
        \item \textbf{Strong Safety:} By Lemma \ref{lem:states_logs_checkpoints_consistent}, any useful server has the same state $S_i$ in each round, thus all useful servers agree on the sequence of committed commands.
        \item \textbf{Availability}: Holds due to the fact that if the adversary is benign, at least $n/4$ servers are useful.
        \item \textbf{Liveness:} As long as $n/4$ are useful, by Lemma \ref{lem:good_window_consensus_checkpoint}, all useful servers end (and thus also start) any $T$-window in the \texttt{no-reset} state, so they will \textit{not} do a rollback and therefore the logs $L_i$ stay intact (see pseudocode). Any command that was received by a server is spread to the logs $L_i$ of all useful servers $i$ in the extended median rule protocol (see Corollary \ref{cor:broadcast_command_median} and Lemma \ref{lem:new_command_stable}). At the latest at the end of the next $T$-window after a command was received, this command will be added to $P$ of the checkpoints of servers (``pre-committed''). One $T$-window after that the command will be executed on the shared state by all useful servers (see pseudocode).\qedhere
    \end{itemize}
\end{proof}

Next, we examine how our protocol preserves the \emph{monotonicity} condition of the {recovery problem} (Definition~\ref{def:recovery_problem}). Recall that monotonicity ensures each server’s sequence of committed commands never retracts previously committed entries to form an alternate sequence, even in a surge. Here, we show that if one server extends its committed sequence, other servers will only extend their own committed sequences in the same way. In particular, the process of sharing checkpoints and adopting a state of a checkpoint whenever a server realizes it is outdated, ensures that no server strays from the committed sequence of commands.

\begin{lemma}
    \label{lem:monotonicity}
    The recovery protocol (Algorithm \ref{alg:recovery_protocol}) ensures the monotonicity property of the recovery problem (Definition \ref{def:recovery_problem}), with an arbitrary blocking adversary.
\end{lemma}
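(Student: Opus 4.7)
The plan is to reduce monotonicity to a case analysis on the operations that alter a server's state $S_i$, using a local invariant to dispose of the easy cases and a global invariant on the set of all checkpoints for the one remaining subtle case. The local invariant I would first establish is that $S_i = C_i[0]$ at all times, i.e., the current state of server $i$ equals the state component of its current checkpoint. This follows by inspecting each transition: initialization sets both to $s_0$; adopting a newer checkpoint inside a $T$-window updates $S_i$ and $C_i$ together to the adopted values; the between-$T$-window commit first sets $S_i := S_i \cdot P$ (with $P$ read from the current $C_i$) and then installs a new $C_i$ whose first component equals the updated $S_i$; rollback sets $S_i := C_i[0]$, which by the invariant is a no-op. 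Consequently, the committed sequence of server $i$ (implicit in $S_i$) changes only when $C_i[0]$ changes, and both commit (which extends by $P$) and rollback (which leaves $S_i$ fixed) are trivially monotonic.

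The only remaining case is adoption, where server $i$ replaces $C_i = (S,P,W)$ by an incoming $C' = (S',P',W')$ with $W' > W$, sending the committed sequence implicit in $S_i$ to the one implicit in $S'$. I would show this is always an extension via the following global invariant: the set of checkpoints ever stored by any server in the system is totally prefix-ordered, with the prefix order consistent with the window numbers. Concretely, I would argue that Lemma~\ref{lem:states_logs_checkpoints_consistent} forces all useful servers (the only ones that can create fresh checkpoints in the between-$T$-window step) to share a common canonical checkpoint at each $T$-window boundary, so the new checkpoints they create all lie on a single canonical chain whose states are strictly prefix-extending with window number. A checkpoint held by a currently-useless server is simply a frozen copy of an earlier canonical checkpoint and hence also lies on this chain. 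The adopted $C'$ and the server's own $C_i$ therefore both live on the same canonical chain, and $W' > W$ immediately yields the desired prefix extension of the committed sequence.

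The main obstacle is the global invariant itself: without Lemma~\ref{lem:states_logs_checkpoints_consistent}, a lagging server could in principle create a fresh checkpoint from an outdated parent while more advanced checkpoints already exist in the network, producing simultaneously-made checkpoints with the same window number but incomparable states, which would break the prefix-order chain and allow a later adoption to shrink a server's committed sequence. Closing this requires carefully combining the deterministic structure of the commit step (a new checkpoint's state is always the parent's state appended with the parent's $P$) with the consistency guarantee of Lemma~\ref{lem:states_logs_checkpoints_consistent} to rule out multiple live chains.
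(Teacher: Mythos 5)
Your proposal is correct and follows essentially the same route as the paper: both arguments hinge on Lemma~\ref{lem:states_logs_checkpoints_consistent} (together with Lemma~\ref{lem:good_window_consensus_checkpoint}) to show, by induction over $T$-windows, that all checkpoints ever created lie on a single canonical chain whose states are prefix-ordered consistently with the window numbers, from which monotonicity follows since servers only move to checkpoints with larger window numbers or extend their state by committing the checkpoint's pre-committed prefix. Your additional local invariant $S_i = C_i[0]$ and the explicit case analysis of commit/rollback/adoption merely make explicit what the paper's induction on the window-states $S_W$ leaves implicit.
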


\begin{proof}
    In this proof we associate a state $S_i$ of a server $i$ with the sequence of commands that transitions $s_0$ to the current state $S_i$. We say that some state $S$ is an extension of $S'$, if the sequence of commands associated with $S'$ is a prefix of those associated with $S$.
     
    We the make use of Lemma \ref{lem:states_logs_checkpoints_consistent}, which tells us that states $S_i$ 
    and checkpoints $C_i$, are identical among all useful servers $i$. 
    For this reason, each $T$-window $W$ can be associated with a state $S_W$ and sequence of pre-committed commands $P_W$ that is contained in the most recent checkpoint $(S_W,P_W,W')$ (i.e., largest $W'$) with $W' \leq W$ that any server holds during window $W$ (note that $W' \leq W$ since in case of a surge it is possible that no new checkpoints were made at the start of $W$).
    
    Our proof strategy is to show the following hypothesis inductively for successive window numbers $W$: for all $W'\leq W$ the state $S_{W}$ is an extension of $S_{W'}$.
    Each server starts window $W = 0$ with the start state $s_0$ and checkpoint $(s_0,\bot,0)$. This means that at the end of $W=0$, no commitments are made yet (as there are no pre-committed commands), thus all servers are still in state $s_0$ in window $W=1$. This shows the hypothesis for $W=1$. 
    
    It remains to show the claim for the next window $W+1$. By our hypothesis we have that $S_W$ is an extension of each $S_{W'}$ with $W' \leq W$. Let $(S_W,P_W,W')$ be the most up-to-date checkpoint in window $W+1$, which any useful server has by Lemma \ref{lem:states_logs_checkpoints_consistent}. Since only these useful server commit commands, they all commit $P_W$ to $S_W$, to obtain the same state $S_{W+1}$, which is therefore an extension of $S_W$.    
\end{proof}

It remains to show the recovery property of Definition \ref{def:recovery_problem}. The main point is that any surge will trigger a death spiral, where no server is useful anymore. This implies that if the surge ends, a reset will be triggered and all servers become useful again. From this point on all servers will share the most up to date checkpoints, such that eventually, all useful servers are up-to-date again.

\begin{lemma}
    \label{lem:recovery}
    The recovery protocol (Algorithm \ref{alg:recovery_protocol}) ensures the recovery property of the recovery problem (Definition \ref{def:recovery_problem}). Specifically, a valid solution of the commitment problem is established $\bigO(\log n)$ rounds after a surge ends.
\end{lemma}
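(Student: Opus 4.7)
The plan is to show that any surge drives the system into a uniform idle configuration from which the next two $T$-window boundaries after the adversary retreats produce a clean, consistent rollback. First I would invoke Lemma~\ref{lem:states_logs_checkpoints_consistent} applied to the last happy $T$-window before the surge, so that every useful server enters the surge holding the same checkpoint $(S^*,P^*,W^*)$. Then by Lemma~\ref{lem:spiral} (or Lemma~\ref{lem:permanently_blocking} if a $3/10$-fraction of servers is persistently blocked), within $O(\log n)$ rounds of the surge beginning every server satisfies $L_i=\bot$ and $R_i=\bot$, w.h.p. Crucially, the checkpoint-update clause $W'>W$ is never activated during this period since no useful server exists to manufacture a newer checkpoint, so every server still stores $(S^*,P^*,W^*)$ when the adversary retreats.

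Next I would track the between-window transitions after the adversary becomes $1/10$-blocking. At the first post-surge boundary, the rule ``if $L_i=\bot$ then $R_i:=\texttt{reset}$'' converts every server with $L_i=\bot$ into $R_i=\texttt{reset}$ while leaving the stored checkpoint untouched. At the following boundary, the rollback rule ``if $R_i=\texttt{reset}$ then $S_i:=S$, $L_i:=P$'' rolls every server back to the common checkpoint, setting $S_i:=S^*$ and $L_i:=P^*$. All servers now share identical state and a non-$\bot$ log (augmented with a dummy $x_d$ where necessary by the subsequent commit step), which restores uniform safety and availability at a single stroke. Since the adversary is now $1/10$-blocking, Lemma~\ref{lem:available} takes over and drives the fraction of useful servers to at least $3/4$ in $O(\log n)$ further rounds; together with Lemma~\ref{lem:recovery_protocol_solves_commitment_problem} this delivers strong safety, availability and liveness of Definition~\ref{def:commitment_problem}. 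The total elapsed time after the surge ends is at most two $T$-windows plus this convergence time, i.e.\ $O(\log n)$.

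The main technical obstacle is the edge case in which the surge is too short for a complete death spiral, so a small number of servers manages to commit and produce newer checkpoints $(S^\dagger,P^\dagger,W^\dagger)$ with $W^\dagger>W^*$ before becoming useless. Here one must argue that Lemma~\ref{lem:monotonicity} forces every such newer checkpoint to extend the pre-surge sequence, and that the priority-based checkpoint gossip (Corollary~\ref{cor:broadcast_priority_rule}) disseminates the most up-to-date valid checkpoint to every surviving server before the rollback boundary, so the uniform rollback target is well defined. With that settled, the monotonicity direction handles safety, the two-boundary bookkeeping handles liveness, and the claimed $O(\log n)$ recovery window follows.
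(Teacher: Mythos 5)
Your overall architecture matches the paper's: surge $\Rightarrow$ death spiral makes every server useless, the between-window rule converts $L_i=\bot$ into $R_i=\texttt{reset}$, the following window produces agreement on \texttt{reset} and on a common checkpoint, the rollback restores useful servers, and Lemma~\ref{lem:recovery_protocol_solves_commitment_problem} then takes over, all within $O(T)=O(\log n)$ rounds. However, there is a genuine gap in how you establish that the rollback target is the unique, globally newest checkpoint. Your main line rests on the claim that during the surge ``the checkpoint-update clause $W'>W$ is never activated since no useful server exists to manufacture a newer checkpoint.'' This is not justified: a surge only drives all servers to $\bot$ after $T_D$ rounds, and it can straddle a $T$-window boundary, at which every server that still has $L_i\neq\bot$ commits and creates a strictly newer checkpoint. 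This is not a rare edge case but precisely the central difficulty of the recovery argument, and your patch for it does not go through as stated: Corollary~\ref{cor:broadcast_priority_rule} requires at least $c\log n$ useful holders of the highest-priority value and at least $n/4$ useful servers in each of $T_B$ rounds, neither of which is guaranteed for the possibly tiny set of servers that took a newer checkpoint just as the surge completed; moreover, since checkpoints are never deleted, the ``all or none'' dichotomy of the gossip analysis does not apply, so a newer checkpoint can persist at a few (possibly blocked) servers without spreading, leaving the ``uniform rollback target'' undefined by your argument alone.

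The paper closes exactly this hole with Lemma~\ref{lem:good_window_consensus_checkpoint} (together with Lemma~\ref{lem:good_window_consensus_recovery} and Lemma~\ref{lem:states_logs_checkpoints_consistent}): by an induction over $T$-windows and the dichotomy of Lemma~\ref{lem:permanently_blocking}, the adversary must either suppress the holders of the newest checkpoint---which forces another death spiral, so no server ends the window with $R_i\neq\bot$ and no rollback or new checkpoint occurs---or allow the newest checkpoint to spread to all servers with $R_i\neq\bot$ before the boundary. Your proof should invoke these lemmas (as the paper's proof does) rather than re-deriving a weaker version of them; with that substitution, your two-boundary bookkeeping and the appeal to Lemma~\ref{lem:recovery_protocol_solves_commitment_problem} yield the stated $O(\log n)$ recovery bound. (Minor point: after the rollback essentially all non-blocked servers are useful, so the appeal to Lemma~\ref{lem:available} is unnecessary.)
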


\begin{proof}
    We have already seen that the commitment problem is solved if the adversary is benign in Lemma \ref{lem:recovery_protocol_solves_commitment_problem}. Assume that the adversary surges. This implies that a death spiral will occur by Lemma \ref{lem:spiral} and all servers become useless after $T_D$ rounds. 
    
    Since all useless servers $i$ will start a $T$-window in the state $R_i = \texttt{reset}$ (see pseudocode), the network will converge to $R_i = \texttt{reset}$ when at least $n/4$ servers are useful again for a complete $T$-window (see Lemma \ref{lem:good_window_consensus_recovery}). 
    In particular, these servers will have the most up-to-date checkpoint by Lemma \ref{lem:good_window_consensus_checkpoint} and Lemma \ref{lem:states_logs_checkpoints_consistent}.
    This implies that at least $n/4$ servers will trigger a reset and revert to this most up-to-date checkpoint that satisfies the monotonicity property (Lemma \ref{lem:monotonicity}) and start the next $T$-window in the state \texttt{no-reset}. From there on the system will proceed normally according to Lemma \ref{lem:recovery_protocol_solves_commitment_problem} until the adversary triggers the next death spiral. Overall the system will recover at most $3T \in \bigO(\log n)$ rounds after a surge ends.
\end{proof}

In summary, Lemmas \ref{lem:monotonicity} and \ref{lem:recovery} show the monotonicity and recovery property of the recovery problem (Definition \ref{def:recovery_problem}) which proves Theorem \ref{thm:recovery_problem}.

\section{Conclusion} \label{sec:extension}

We introduced a lightweight solution for state machine replication under an adversary capable of blocking any number servers. As shown in \cite{DoerrGMSS11}, the median rule can tolerate $\bigO(\sqrt{n})$ adversarial servers, suggesting that our approach can be extended to handle not just blocked but also $\bigO(\sqrt{n})$ malicious servers. Furthermore, we anticipate that it can accommodate server churn by allowing the adversary to have a $\beta$-fraction of blocked, joining or leaving servers in each round, thus removing the requirement of a static server set. 

Our $1/10$ bound for blocked servers to ensure liveness is not fundamental. By tuning parameters $k,\ell$, the protocol tolerates any 1-late $(1-\varepsilon)$-blocking adversary for constant $\varepsilon>0$. 
However, our choice of parameters provides some measure of partition tolerance.
With our chosen parameters, safety even holds under \emph{any} partition as all but at most one component will ``spiral to death'' (see Lemma \ref{lem:permanently_blocking}).
This provides a trade-off in the sense of the CAP theorem between safety, liveness, and partition tolerance \cite{GilbertL02}.

Moreover, we believe our protocols can be extended so that commitment errors are not persistent. First of all, we ensure that commitment errors are very unlikely and, therefore, rare in the first place. 
Second, the proof of Theorem~\ref{th:sequence} implies that the extended median rule achieves agreement on the position of a command $x$ within $O(\log n)$ rounds, w.h.p., \emph{irrespective} of how the logs look like initially, i.e., commitment errors only have limited effects on other commands. 
Third, our recovery protocol only takes another checkpoint of a shared state every $T=\Theta(\log n)$ rounds so that there is sufficient time to arrive at a consensus checkpoint that resolves disagreement on the shared state caused by a commitment error, by making use of the median rule. This pushes the probability of a \textit{persistent} error to a point where it can be truly ignored.

\appendix

\section{Probabilistic Concepts}
\label{sec:probabilistic_concepts}

In this section, we are going to cover the probabilistic concepts and inequalities that we use frequently throughout this paper.
The following two inequalities will turn out to be useful (see \cite{McDiarmid1998}).

\begin{lemma} \label{chernoff1}
Let $X_1,\ldots,X_n$ be independent binary random variables and let $X=\sum_{i=1}^n X_i$. For any $\mu \ge \E[X]$ and $\delta \ge 0$ it holds that
\begin{align*}
  \Pr[X \ge (1+\delta)\mu] & \le 
  \left( \frac{e^{\delta}}{(1+\delta)^{1+\delta}} \right)^{\mu} \\
  & \le  e^{-\min[\delta^2, \; \delta] \cdot \mu/3} .
\end{align*}
Furthermore, for any $0 \le \mu \le \E[X]$ and $0\le \delta \le 1$ it holds that
\begin{align*}  
  \Pr[X \le (1-\delta)\mu] & \le 
  \left( \frac{e^{-\delta}}{(1-\delta)^{1-\delta}} \right)^{\mu} \\
  & \le  e^{-\delta^2 \mu/2} .
\end{align*}
\end{lemma}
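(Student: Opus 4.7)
The plan is to follow the standard Chernoff--Cram\'er exponential moment method, which handles both tails in a uniform way. For the upper tail, I would start by applying Markov's inequality to the non-negative random variable $e^{tX}$ for an arbitrary parameter $t>0$, giving
\[
  \Pr[X \ge (1+\delta)\mu] \;=\; \Pr\bigl[e^{tX} \ge e^{t(1+\delta)\mu}\bigr] \;\le\; e^{-t(1+\delta)\mu}\,\E[e^{tX}].
\]
By independence, $\E[e^{tX}] = \prod_i \E[e^{tX_i}]$, and since each $X_i$ is Bernoulli with parameter $p_i := \E[X_i]$, we have $\E[e^{tX_i}] = 1 + p_i(e^t-1) \le \exp(p_i(e^t-1))$. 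Multiplying over $i$ and using $\sum_i p_i = \E[X] \le \mu$ yields $\E[e^{tX}] \le \exp\bigl(\mu(e^t-1)\bigr)$.

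Next I would optimize the parameter $t$. Substituting $t = \ln(1+\delta)$ (which is non-negative for $\delta \ge 0$) into the Markov bound gives
\[
  \Pr[X \ge (1+\delta)\mu] \;\le\; \left( \frac{e^{\delta}}{(1+\delta)^{1+\delta}} \right)^{\mu},
\]
which is exactly the first inequality. To obtain the cleaner $e^{-\min(\delta^2,\delta)\mu/3}$ form, I would split into two regimes. For $0 \le \delta \le 1$, a short calculus argument (Taylor-expanding $(1+\delta)\ln(1+\delta) - \delta$ around $\delta=0$ and using $\ln(1+\delta) \ge \delta - \delta^2/2$) shows $(1+\delta)\ln(1+\delta) - \delta \ge \delta^2/3$, so the bound becomes $e^{-\delta^2\mu/3}$. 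For $\delta > 1$, one shows $(1+\delta)\ln(1+\delta) - \delta \ge \delta/3$ (a monotonicity argument suffices, since the derivative equals $\ln(1+\delta)$ which exceeds $1/3$ for $\delta > 1$), giving $e^{-\delta\mu/3}$. Combining the two regimes yields the stated $e^{-\min(\delta^2,\delta)\mu/3}$ bound.

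For the lower tail, I would mirror the argument with $t < 0$: apply Markov to $e^{-|t|X}$, bound $\E[e^{-|t|X_i}] = 1 - p_i(1-e^{-|t|}) \le \exp(-p_i(1-e^{-|t|}))$, and this time use $\sum_i p_i \ge \mu$ (allowed because the exponent is now negative). Choosing $|t| = -\ln(1-\delta)$ gives $\left(\tfrac{e^{-\delta}}{(1-\delta)^{1-\delta}}\right)^\mu$, and the simpler $e^{-\delta^2\mu/2}$ form follows from the inequality $(1-\delta)\ln(1-\delta) + \delta \ge \delta^2/2$ for $0 \le \delta \le 1$, which is again a one-line Taylor-remainder estimate.

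The routine part is the exponential moment manipulation; the only mildly delicate step is verifying the two elementary analytic inequalities used to pass from the tight product-form bounds to the convenient $\min(\delta^2,\delta)$ exponential form, especially handling the crossover at $\delta = 1$ in the upper-tail case. Since these are standard facts (see, e.g., \cite{McDiarmid1998}), I would simply cite them rather than reprove them in detail.
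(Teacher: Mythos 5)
Your proof is correct and follows the standard exponential-moment (Chernoff--Cram\'er) derivation; the paper itself gives no proof of this lemma, merely citing \cite{McDiarmid1998}, and your argument is exactly the canonical one behind that citation (Markov on $e^{tX}$, $\E[e^{tX_i}]\le e^{p_i(e^t-1)}$, monotone replacement of $\E[X]$ by $\mu$ in the correct direction for each tail, and the optimal choices $t=\ln(1+\delta)$ resp.\ $t=-\ln(1-\delta)$). One cosmetic remark: the Taylor bound $\ln(1+\delta)\ge\delta-\delta^2/2$ by itself only yields $(1+\delta)\ln(1+\delta)-\delta\ge\delta^2(1-\delta)/2$, which is weaker than $\delta^2/3$ once $\delta>1/3$, so for the regime $0\le\delta\le1$ you should verify $(1+\delta)\ln(1+\delta)-\delta\ge\delta^2/3$ directly, e.g.\ by noting that $h(\delta)=(1+\delta)\ln(1+\delta)-\delta-\delta^2/3$ satisfies $h(0)=0$ and $h'(\delta)=\ln(1+\delta)-2\delta/3\ge 0$ on $[0,1]$ (since $h'(0)=0$, $h''$ changes sign only once there, and $h'(1)=\ln 2-2/3>0$) --- a one-line fix that does not affect the rest of your argument.
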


\begin{lemma} \label{chernoff2}
Let $X_1,\ldots,X_n$ be independent random variables with $a_k \le X_k \le b_k$ for all $k$, for some constants $a_k,b_k$, and let $X=\sum_{i=1}^n X_i$ and $\mu=\E[X]$. For any $t \ge 0$ it holds that
\[
  \Pr[|X-\mu| \ge t] \le 2 e^{-2t^2/\sum_k (b_k-a_k)^2} .
\]
\end{lemma}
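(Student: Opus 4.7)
The plan is to use the Cram\'er--Chernoff (exponential moment) method together with Hoeffding's lemma for bounded random variables. First I would centre the variables by setting $Y_k := X_k - \E[X_k]$, so that $\E[Y_k] = 0$ and $Y_k \in [\,a_k - \E[X_k],\; b_k - \E[X_k]\,]$, an interval whose width is still $b_k - a_k$. Writing $Y := X - \mu = \sum_k Y_k$, Markov's inequality applied to $e^{sY}$ for any $s > 0$ yields
\[
  \Pr[Y \ge t] \;\le\; e^{-st}\,\E\!\left[e^{sY}\right] \;=\; e^{-st}\prod_{k=1}^n \E\!\left[e^{sY_k}\right],
\]
where the factorisation of the moment generating function uses independence of the $X_k$.

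The workhorse is Hoeffding's lemma: for any random variable $Z$ with $\E[Z]=0$ and $\alpha \le Z \le \beta$, one has $\E[e^{sZ}] \le \exp\!\bigl(s^2 (\beta-\alpha)^2 / 8\bigr)$. I would derive it from the convexity bound $e^{sz} \le \tfrac{\beta - z}{\beta - \alpha}\,e^{s\alpha} + \tfrac{z - \alpha}{\beta - \alpha}\,e^{s\beta}$ valid for $z \in [\alpha,\beta]$. Taking expectations and using $\E[Z]=0$ gives $\E[e^{sZ}] \le e^{\psi(u)}$, where $u := s(\beta-\alpha)$ and $\psi(u) := \log\!\bigl(pe^u + (1-p)\bigr) + \tfrac{\alpha}{\beta-\alpha}\,u$ with $p := -\alpha/(\beta-\alpha) \in [0,1]$. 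A short computation shows $\psi(0) = \psi'(0) = 0$ and $\psi''(u) \le 1/4$ uniformly in $u$, so Taylor's theorem with remainder gives $\psi(u) \le u^2/8$, which is the stated moment bound.

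Combining the two ingredients yields
\[
  \Pr[Y \ge t] \;\le\; \exp\!\Bigl(-st + \tfrac{s^2}{8}\sum_{k=1}^n (b_k - a_k)^2\Bigr),
\]
which I would minimise over $s > 0$. The optimum $s^{\ast} = 4t / \sum_k (b_k - a_k)^2$ produces the bound $\exp\!\bigl(-2t^2/\sum_k (b_k-a_k)^2\bigr)$. A symmetric argument applied to $-Y_k$ (which lies in an interval of the same width) yields the same bound for $\Pr[Y \le -t]$, and the union bound contributes the factor $2$, giving the claim. The main obstacle is Hoeffding's lemma itself: the convexity step is routine, but verifying $\psi''(u) \le 1/4$ requires recognising that $\psi''(u) = q(1-q)$ with $q := pe^u/(pe^u + 1-p) \in (0,1)$, whose maximum over $q$ is $1/4$. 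Everything else is one-variable calculus, and independence enters only in the factorisation of the moment generating function.
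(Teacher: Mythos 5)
Your proposal is correct: this is precisely Hoeffding's inequality, and your argument (centering, the MGF bound $\E[e^{sZ}] \le e^{s^2(\beta-\alpha)^2/8}$ via convexity and the bound $\psi''(u) = q(1-q) \le 1/4$, optimizing at $s^\ast = 4t/\sum_k (b_k-a_k)^2$, then symmetrizing and taking a union bound for the factor $2$) is the standard proof. The paper itself gives no proof of this lemma, citing McDiarmid's survey instead, and your derivation is essentially the one found in that reference, so there is nothing to compare beyond noting that your computation of the optimal $s$ and the resulting exponent $-2t^2/\sum_k(b_k-a_k)^2$ checks out.
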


\begin{lemma}[Union Bound for Events Holding w.h.p.]
    \label{lem:unionbound_whp}
    Let $E_1, \ldots, E_k$ be events each of which occurs w.h.p.\. Further, assume $k$ is polynomially bounded in $n$, i.e., $k \leq n^d$ for all $n \geq n_0$ for some monomial with fixed degree $d>0$.  Then $E := \bigcap_{i=1}^{k} E_i$ occurs w.h.p.
\end{lemma}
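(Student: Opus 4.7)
The plan is to unfold the definition of ``with high probability'' carefully and exploit the flexibility that the constant in the exponent can be chosen arbitrarily large. Concretely, I want to show that for every constant $c > 0$, the probability that $E$ fails is at most $1/n^c$ for all sufficiently large $n$. So I fix an arbitrary target constant $c > 0$ at the start of the argument, and then work backwards to pick the right failure-probability parameter for each individual $E_i$.

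The main step is a straightforward union bound over the complements. Since each $E_i$ holds w.h.p., by definition the failure probability $\Pr[\bar E_i]$ can be bounded by $1/n^{c'}$ for \emph{any} constant $c' > 0$ of my choosing, provided $n$ is large enough. I will choose $c' := c + d$, where $d$ is the polynomial-degree bound on $k$ from the hypothesis. Then Boole's inequality gives
\[
  \Pr[\bar E] \;=\; \Pr\!\left[\bigcup_{i=1}^{k} \bar E_i\right] \;\le\; \sum_{i=1}^{k} \Pr[\bar E_i] \;\le\; k \cdot n^{-(c+d)} \;\le\; n^{d} \cdot n^{-(c+d)} \;=\; n^{-c},
\]
for all $n \ge n_0$, which is exactly what w.h.p.\ requires for the event $E$.

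The only subtlety to be careful about is that different $E_i$'s may implicitly have been established with different constants in their respective w.h.p.\ statements, and also that the threshold $n_0$ after which each bound $\Pr[\bar E_i] \le n^{-c'}$ holds may differ across $i$. But since the w.h.p.\ definition allows us to pick the constant \emph{per event} and since we only take finitely many events for each fixed $n$ (with the bound $k \le n^d$ only used in the final step), we may simply take the maximum of the finitely many required thresholds to obtain a single $n_0$ that works uniformly. I do not anticipate any genuine obstacle here; this lemma is essentially a bookkeeping statement that makes rigorous the ``union bound over polynomially many w.h.p.\ events'' pattern used repeatedly throughout the paper.
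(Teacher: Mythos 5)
Your proposal is correct and matches the paper's own proof essentially verbatim: fix an arbitrary target constant $c>0$, invoke the w.h.p.\ definition with exponent $c' = c+d$ for each $E_i$, apply Boole's inequality to get $\Pr[\bar E] \le k\, n^{-(c+d)} \le n^{-c}$, and take the maximum of the finitely many thresholds $n_i$ to get a uniform one. No further comment is needed.
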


\begin{proof}
	Let $c>0$ be arbitrary but fixed. As $E_1, \dots , E_k$ occur w.h.p.\ we obtain the following. Let $c':= c+d$, there exist $n_1, \ldots , n_k \in \mathbb{N}$ such that for all $i \in \{1, \ldots, k\}$ we have $\mathbb{P}(\overline{E_i}) \leq \frac{1}{n^{c'}}$ for any $n > n_i$. With Boole's inequality we obtain
	\begin{align*}
		\mathbb{P}\big(\overline{E}\big) = 
		\mathbb{P}\Big(\bigcup_{i=1}^{k} \overline{E_i} \Big) 
		\leq \sum_{i=1}^{k} \mathbb{P}(\overline{E_i}) 
		\leq \sum_{i=1}^{k} \frac{1}{n^{c'}} 
		\leq  n^{d - c'}
		= n^{-c}
	\end{align*}
	for any $n \geq \max(n_0, \ldots ,n_k)$. 
\end{proof}

\begin{lemma}[Central Limit Theorem]
\label{lem:central_limit}
Let 
\[
    \Phi(x) = \frac{1}{\sqrt{2\pi}}\int_{-\infty}^{x} e^{-u^2/2} du
\]
and let $X=\sum_{i=1}^n X_i$ be a sum of independent random variables with finite $\mu = \E[X]$ and $\nu = \V[X]$. For any $a<b$ it holds that
\[
  \lim_{n \rightarrow \infty} \Pr \left[ a < \frac{X-\mu}{\sqrt{\nu}} < b \right] = \Phi(b)-\Phi(a).
\] 
\end{lemma}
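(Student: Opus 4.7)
The plan is to follow the classical characteristic–function route (Lindeberg–Lévy), since the statement is the standard Central Limit Theorem and the paper only invokes it for bounded i.i.d.\ summands (as in Lemma \ref{lem:imbalance}). First I would center and rescale: define $Y_i := X_i - \E[X_i]$, $\sigma_i^2 := \V[X_i]$, and the standardized sum $Z_n := (X-\mu)/\sqrt{\nu}$. The claim then reduces to showing $Z_n$ converges in distribution to a standard normal, because $\Phi$ is continuous and $\Pr[a < Z_n < b] = \Pr\bigl[\tfrac{X-\mu}{\sqrt{\nu}} \in (a,b)\bigr]$, so the limit of the probability equals $\Phi(b) - \Phi(a)$ at continuity points of $\Phi$.

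Second I would compute the characteristic function $\phi_{Z_n}(t) := \E[e^{itZ_n}]$. By independence it factors as
\[
\phi_{Z_n}(t) = \prod_{i=1}^n \phi_{Y_i}\!\bigl(t/\sqrt{\nu}\bigr).
\]
A second-order Taylor expansion of each $\log \phi_{Y_i}$ around $0$ (valid since $\E[Y_i]=0$ and $\V[Y_i]<\infty$) gives $\log \phi_{Y_i}(s) = -\tfrac{1}{2}\sigma_i^2 s^2 + r_i(s)$ with $r_i(s) = o(s^2)$. Summing, $\log \phi_{Z_n}(t) = -\tfrac{t^2}{2} + \sum_i r_i(t/\sqrt{\nu})$. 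The key step is to show the aggregate remainder $\sum_i r_i(t/\sqrt{\nu}) \to 0$ as $n \to \infty$.

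Third, to control this remainder rigorously one needs a regularity condition on the summands; in the i.i.d.\ case used throughout the paper this is immediate (one has $\sum_i r_i(t/\sqrt{\nu}) = n \cdot r_1(t/\sqrt{n\sigma_1^2}) = n \cdot o(1/n) \to 0$), and in general it follows from Lindeberg's condition. Once we have $\phi_{Z_n}(t) \to e^{-t^2/2}$ pointwise, Lévy's continuity theorem yields $Z_n \Rightarrow \mathcal{N}(0,1)$, and the Portmanteau theorem together with continuity of $\Phi$ then gives the stated probability limit.

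The main obstacle, as usual with the CLT, is justifying the vanishing of the Taylor remainders uniformly in $n$; outside the i.i.d.\ regime this requires a genuine Lindeberg-type hypothesis, which is not made explicit in the statement. For the paper's purposes it is harmless to read the lemma as the classical i.i.d.\ CLT (or to invoke it as a black box from a standard reference such as \cite{McDiarmid1998}), since all applications satisfy far stronger conditions (bounded, identically distributed, Bernoulli-type summands).
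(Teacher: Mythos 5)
The paper gives no proof of this lemma at all: it is stated in the appendix as a standard black-box fact and is invoked only once, in the proof of Lemma~\ref{lem:imbalance}, where the summands are i.i.d.\ bounded random variables taking values in $\{-1/2,1/2\}$. Your characteristic-function sketch (center and rescale, factor $\phi_{Z_n}$ by independence, second-order Taylor expansion of each $\log\phi_{Y_i}$, show the aggregate remainder vanishes, then apply L\'evy continuity and Portmanteau) is the standard Lindeberg--L\'evy argument and is correct in outline, so there is nothing to reconcile with a paper proof that does not exist. Your one substantive observation is also right and worth keeping: as literally stated -- arbitrary independent summands with finite mean and variance -- the lemma is false without an additional hypothesis (take $X_1$ nondegenerate and $X_i=0$ for $i\ge 2$; then $(X-\mu)/\sqrt{\nu}$ has the fixed law of the standardized $X_1$ and never becomes normal), so one must read it either as the i.i.d.\ CLT or add a Lindeberg/Lyapunov condition. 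Since every application in the paper uses identically distributed, bounded, Bernoulli-type summands, this imprecision in the statement is harmless, exactly as you say, but your remainder-control step should be phrased as relying on that hypothesis rather than on independence and finite variance alone.
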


Finally, we state a technical lemma from from \cite{DoerrGMSS11} (Claim 3.4), needed to show that in a $(k,\ell)$-gossip-rule, if only a single server initially has a broadcast value, then within $O(\log n)$ rounds, it will either spread to at least $c \log n$ servers or no server holds that value anymore.

\begin{lemma}\label{lem:dichotomy}
Let $(X_t)_{t=1}^{\infty}$ be a Markov Chain with state space $\{0, \ldots,
q\}$ that has the following properties:
\begin{itemize}
 \item there are constants $c_1 > 1$ and $c_2>0$, such that for any $t \in \mathbb N$,
$ \Pr[X_{t+1} \geq \min\{c_1 X_{t},q\} ] \geq 1 - e^{-c_2 X_t}$,
 \item $X_t=0 \Rightarrow X_{t+1} =0$ with probability $1$,
 \item $X_t=q \Rightarrow X_{t+1}=q$ with probability $1$.
\end{itemize}
Let $c_4 > 0$ be an arbitrary constant and $T:= \min \{ t \in \mathbb{N} \mid X_t
\in \{ \{ 0\} \cup \{q\} \}$. Then for every $X_1 \in \{0, \ldots,q\}$ and every constant $c_6 > 0$ there is a constant $c_5 > 0$ such that $\Pr[ T \leq c_5 \cdot \log q] \geq 1-q^{-c_6}$.
\end{lemma}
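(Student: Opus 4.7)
The plan is to bound the absorption time by a two-regime analysis that exploits the fact that the per-round failure probability $e^{-c_2 X_t}$ is super-exponentially small in $X_t$. I would fix a threshold $\theta = \Theta(\log q)$ chosen so that $e^{-c_2 \theta} \le q^{-(c_6+2)}$, and call states $x \ge \theta$ the \emph{high-mass regime} and non-absorbing states $x \in \{1,\ldots,\theta-1\}$ the \emph{low-mass regime}.

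In the high-mass regime, once $X_t \ge \theta$ a single round fails to multiply $X_t$ by $c_1$ with probability at most $q^{-(c_6+2)}$. A union bound over the next $\lceil \log_{c_1}(q/\theta) \rceil = O(\log q)$ rounds then shows that with probability at least $1 - q^{-(c_6+1)}$ every round is a growth round, so $X_t$ reaches $q$ within $O(\log q)$ rounds.

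For the low-mass regime, starting from any $X_t \ge 1$, I would show that an ``attempt'' of at most $O(\log\log q)$ rounds either crosses $\theta$ or absorbs the chain at $0$ with constant probability $\rho > 0$. The argument uses that after $i$ consecutive growth rounds from $X_t \ge c_0$ (for a sufficiently large constant $c_0$) the state is $\ge c_0 c_1^i$, so the probability that \emph{any} of the next growth rounds fails is bounded by the tail $\sum_{i \ge 0} e^{-c_2 c_0 c_1^i}$; this is a convergent super-geometric series that can be made arbitrarily small by taking $c_0$ large. Reaching the starting level $c_0$ from $X_t = 1$ itself costs only $O(1)$ initial growth rounds, which succeed simultaneously with constant probability $(1 - e^{-c_2})^{O(1)}$.

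Combining, starting from any $X_1 \in \{1,\ldots,q-1\}$, each attempt succeeds with probability $\ge \rho$, absorbs at $0$ (favorable), or lands back in the low-mass regime and we restart. After $m = \Theta(\log q)$ independent attempts, the chain has absorbed or entered the high-mass regime with probability $\ge 1 - (1-\rho)^m \ge 1 - q^{-c_6}$. The main obstacle I expect is bounding the \emph{total} round count over those $O(\log q)$ attempts: a successful attempt takes $O(\log q)$ rounds but failed attempts must also be short. I would handle this by observing that a failed attempt terminates at its first failure round, whose position is $O(1)$ in expectation since $e^{-c_2 c_1^i}$ decays super-exponentially in $i$, and by invoking a concentration bound for the sum of $O(\log q)$ such sub-geometric summands plus the single final successful attempt of $O(\log q)$ rounds to conclude $T \le c_5 \log q$ with probability $\ge 1 - q^{-c_6}$ for suitable $c_5$.
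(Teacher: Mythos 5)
Your proposal is correct, but note that the paper itself does not prove this lemma at all: it is imported verbatim as Claim~3.4 of \cite{DoerrGMSS11}, so there is no in-paper proof to compare against. Your two-regime argument is a sound self-contained derivation and is close in spirit to the run-based analysis in the cited work: above the threshold $\theta=\Theta(\log q)$ the per-round failure probability is already polynomially small in $q$, so a union bound over the $O(\log q)$ growth rounds carries the chain to $q$; below $\theta$, a run of consecutive growth rounds of length $O(\log\log q)$ crosses $\theta$ with constant probability $\rho$, because after reaching a constant level $c_0$ the total failure mass $\sum_{i\ge 0}e^{-c_2 c_0 c_1^i}$ is a convergent series that can be made small. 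Your key extra step, charging failed attempts by the position of their first failure round (which has doubly-exponential tail, hence $O(1)$ expectation) and concentrating the sum over $O(\log q)$ attempts, is exactly what is needed to avoid the lossy $O(\log q\cdot\log\log q)$ bound, and it goes through.

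A careful write-up should tighten three points. First, the attempts are not independent; the correct statement is that, conditioned on the entire history, each attempt started from any state in $\{1,\dots,\theta-1\}$ succeeds (crosses $\theta$ or absorbs) with probability at least $\rho$, and then the bound $(1-\rho)^m$ follows by chaining conditional probabilities, with the failed-attempt durations handled by stochastic domination by i.i.d.\ light-tailed variables. Second, your case enumeration after a failed attempt omits the possibility that the failure round itself lands the chain at a state $\ge\theta$ (possible when the attempt fails near $\theta$, since the only guarantee violated is $X_{t+1}<c_1X_t$); this case is favorable and should simply be folded into the high-mass analysis. Third, in the high-mass phase the per-round bound $e^{-c_2X_t}\le q^{-(c_6+2)}$ is only valid while the state remains $\ge\theta$, so the union bound must be applied sequentially (on the event that all previous rounds were growth rounds, the state is still $\ge\theta$); this is routine but worth stating. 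With these repairs the argument yields the claimed $\Pr[T\le c_5\log q]\ge 1-q^{-c_6}$ for a suitable constant $c_5$ depending on $c_1,c_2,c_6$.
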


\bibliographystyle{ACM-Reference-Format}
\bibliography{bibliography}

\appendix

\end{document}